\RequirePackage{fix-cm}
\documentclass[smallextended]{svjour3}       
\smartqed  
\usepackage{titlesec}
\usepackage{latexsym}
\usepackage{mathrsfs}
\usepackage{multirow}

\usepackage{amsfonts,amssymb,amsmath,amsthm,bm}
\usepackage{color}
\usepackage{graphics}
\usepackage{makecell}
\usepackage{float}
\usepackage{autobreak} 
\usepackage{booktabs}

\usepackage[figuresright]{rotating}
\usepackage{array}
\allowdisplaybreaks
\usepackage{siunitx}

\def\BF{\mathbb{F}}

\def\bv{\mathbf{v}}
\def\bu{\mathbf{u}}
\def\bH{\mathbf{H}}
\def\bO{\mathbf{O}}
\def\bI{\mathbf{I}}
\def\bA{\mathbf{A}}
\def\bP{\mathbf{P}}
\def\bQ{\mathbf{Q}}
\def\bM{\mathbf{M}}

\def\bB{\mathbf{B}}
\def\bV{\mathbf{V}}

\def\bU{\mathbf{U}}

\def\cF{\mathcal{F}}
\def\cD{\mathcal{D}}
\def\cS{\mathcal{S}}
\def\cC{\mathcal{C}}

\def\cH{\mathcal{H}}
\def\cP{\mathcal{P}}
\def\cQ{\mathcal{Q}}
\def\cX{\mathcal{X}}

\def\cX{\mathcal{X}}

\def\cF{\mathcal{F}}
\def\cG{\mathcal{G}}

\def\tcC{\Tilde{\mathcal{C}}}

\def\tcS{\tilde{\mathcal{S}}}
\def\tcF{\Tilde{\mathcal{F}}}
\def\tbv{\Tilde{\mathbf{v}}}
\def\tbu{\Tilde{\mathbf{u}}} 

\def\obv{\bar{\mathbf{v}}}
\def\obu{\bar{\mathbf{u}}}

\def\sL{\mathscr{L}}
\def\rank{\mathrm{rank}}
\def\rs{\mathrm{rs}}
\def\wt{\mathrm{wt}}
\def\RREF{\mathrm{RREF}}
\def\EF{\mathrm{EF}}

\titleformat{\subsection}
{\normalfont\large\bfseries}
{\bfseries\thesubsection}
{1em}
{}

\newtheorem{construction}{Construction}
\journalname{}

\begin{document}
	
	\title{Quasi-MSRD Codes and Their Properties \thanks{}
	}
	
	
	\author{Qingfeng Xia  \and
		Fang-Wei Fu
	}
	

	\institute{Qingfeng Xia\at
		Chern Institute of Mathematics and LPMC, Nankai University, Tianjin, 300071, China\\
		\email{1120250016@mail.nankai.edu.cn}             \\
		\and
		Fang-Wei Fu\at
		Chern Institute of Mathematics and LPMC, Nankai University, Tianjin, 300071, China\\
		\email{fwfu@nankai.edu.cn}}
	\date{Received: date / Accepted: date}

	\maketitle
	\begin{abstract}
 Sum-rank-metric codes have recently attracted considerable attention of many researchers, due to their applications in network coding, space-time codes and distributed storage. MSRD codes are those good codes attaining the Singleton bound in the sum-rank metric. However, MSRD codes do not exist for some dimensions. Motivated by this fact, we introduce the notion of quasi-MSRD (QMSRD) codes and provide some properties of them. What is more, we find that not every QMSRD code has a QMSRD dual code, so we give the definition of dually QMSRD codes, whose dual codes and themselves are both QMSRD. Finally, we characterize such codes, derive their support, rank-list and sum-rank distributions as well as compute their generalized sum-rank weights.
	\keywords{Sum-rank metric \and QMSRD codes \and Dually QMSRD codes \and Weight distributions \and Generalized weights}
	\end{abstract}

\section{Introduction}
Sum-rank-metric codes were introduced in \cite{NU} for their application in linear network coding, which can be used to detect and correct errors in multiplicative-additive finite-field matrix channels as rank-metric codes.
While rank-metric codes are employed to obtaining one-shot matrix codes, which use the matrix channel only once, sum-rank-metric codes are considered for consecutive uses of matrix channels that can not only reduce the size of network alphabet but also detect and correct more errors.
In addition, sum-rank-metric codes were also widely applied in space-time coding, see \cite{LK,SK}, and coding for distributed storage, see \cite{CMST,MPK}. 

In recent years, there has been tremendous interest in developing the theory of sum-rank-metric codes. Some scholars did research on the construction of sum-rank-metric codes, such as \cite{BCa,BC,C,CCQ,LCJXZ,Mar,Ma,M,MZ}, while others considered various bounds of them, such as \cite{AGKP,AKR,BGR,LCJXZ}. 
Besides these works, some researchers also investigated decoding algorithms of sum-rank-metric codes, see \cite{BJPR,CCQ,WCZL}. 

In most of early works, the authors only considered those sum-rank-metric codes, where the number of rows are equal at different blocks. 
The general Singleton bound of sum-rank-metric codes for arbitrary numbers of rows was given in \cite{BGR}. 
Those sum-rank-metric codes attaining the Singleton bound are called maximum sum-rank distance (MSRD) codes.
In \cite{BGR}, the block lengths of MSRD codes were upper bounded.
The authors of \cite{BGR} also defined the dual code of a sum-rank-metric code and the sum-rank, rank-list and support distributions. 
They derived the MacWilliams identities of rank-list and support distributions and based on these results, computed the sum-rank, rank-list and support distributions of an MSRD code.
In \cite{BGRa}, anticodes in the sum-rank metric were introduced. 
Right after that, the authors modified the anticode bound in \cite{BGRa} and gave the definition of optimal anticodes in \cite{CGLGMS}. 
Using the notion of optimal anticodes, the authors defined the generalized sum-rank weights.
Based on the knowledge about optimal anticodes in the sum-rank metric, they also studied some properties related to being MSRD codes and computed the generalized sum-rank weights of them.
Besides these works, there are many other works which have to do with MSRD codes, see \cite{C,LWWS,Mar,MP,Marti,Mart,MSK,SS}. 

Rank-metric codes can be regarded as a special kind of sum-rank-metric codes, which have only one block. Rank-metric codes achieving the Singleton bound are called maximum rank distance (MRD) codes. MRD codes are good codes, which have the largest possible minimum rank distance, but they only exist if the number of rows divides the dimension of the code.  
In \cite{CGLR}, the authors defined quasi-MRD (QMRD) codes as codes which have the largest minimum rank distance for the fixed length and dimension, but are not MRD codes. 
They considered the dual codes of QMRD codes. If the dual code of a QMRD code is a QMRD code as well, then they called it the dually QMRD code. 
They offered a sufficient and necessary condition for a QMRD code to be a dually QMRD code. 
What is more, they discussed the generalized weights of dually QMRD codes and computed the rank distribution of them.

Apart from this, linear codes in the folded Hamming distance can be seen as another special kind of sum-rank-metric codes. The authors of \cite{MR} introduced quasi-MDS (QMDS) codes as an alternative to MDS codes in the folded Hamming distance for dimensions for which MDS codes in the folded Hamming distance do not exist. Analogously, they also defined dually QMDS codes and gave the weight distribution of them. 

We find that MSRD codes do not exist for some dimensions just like MRD codes and MDS codes in the folded Hamming metric. 
Thus, inspired by those related works mentioned above, we introduce the notion of quasi-MSRD (QMSRD) codes and dually QMSRD codes, whose dual codes and themselves are both QMSRD codes. 
QMSRD codes have the largest possible minimum sum-rank distances, thus offer good capabilities for error detection and error correction.
Some of them are subspaces of MSRD codes, but some others do not.
Therefore, it is significant to do research on their properties.
In this work, our main contributions are as follows:
\begin{itemize}
  \item We provide some properties of QMSRD codes, see \textbf{Propositions \ref{Proposition 3.5}, \ref{Proposition 3.7} and Theorems \ref{Theorem 3.8}, \ref{Theorem 3.11}}.
  \item We characterize the dually QMSRD codes in terms of the minimum sum-rank distance of their dual codes and themselves as well as the number of their minimum sum-rank weight codewords, see \textbf{Proposition \ref{Proposition 4.6} and Theorem \ref{Theorem 4.14}}.
  \item We give the support, rank-list and sum-rank distributions of dually QMSRD codes, see \textbf{Theorem \ref{Theorem 4.18} and Corollaries \ref{Corollary 4.19}, \ref{Corollary 4.20}}.
  \item We compute the generalized sum-rank weights of dually QMSRD codes, see \textbf{Theorem \ref{Theorem 4.27}}.
\end{itemize}

This paper is organized as follows. 
In Section 2, we start by giving necessary notations. Then we review some basic definitions and bounds of sum-rank-metric codes which will be used later. 
In Section 3, we define QMSRD codes and provide some properties which are closely related to being QMSRD codes. In addition, we give an upper bound of the block lengths of QMSRD codes and find that some punctured codes of QMSRD codes are QMSRD codes as well. 
In Section 4, we discuss the dual codes of QMSRD codes and introduce the notion of dually QMSRD codes. Next, we try to give the sufficient and necessary condition for a QMSRD code to be a dually QSMRD code. Finally, we compute the support, rank-list and sum-rank distributions as well as the generalized sum-rank weights of dually QMSRD codes.
In Section 5, we conclude this paper.

\section{Preliminaries}
In this preliminary section, we firstly give some notations.
Throughout this paper, $\mathbb{F}_{q}$ is the finite field with $q$ elements, where $q$ is a power of a prime.
Also, we let $\mathbb{N}=\{1,2,\ldots\}$ and $\mathbb{N}_{0}=\mathbb{N}\cup\{0\}$. 
For $a\in\mathbb{N}$, we denote the set $\{1,\ldots,a\}$ by $[a]$.
Let $a$ and $b$ be integers, then the $q$-ary Gaussian coefficient of $a$ and $b$ is given as below
$$\binom{a}{b}_{q}=\begin{cases}
                     0, & \mbox{if } a<0,\,b<0,\,b>a, \\
                     1, & \mbox{if } b=0 \text{ and } a\geq 0, \\
                     \frac{(q^{a}-1)(q^{a-1}-1)\cdots(q^{a-b+1}-1)}{(q^{b}-1)(q^{b-1}-1)\cdots(q-1)}, & \mbox{otherwise}.
                   \end{cases}$$

Let $\boldsymbol{x}=(x_{1},\ldots,x_{n})$, $\boldsymbol{y}=(y_{1},\ldots,y_{n})\in\mathbb{F}_{q}^{n}$, the Hamming distance between $\boldsymbol{x}$ and $\boldsymbol{y}$ is given as follows
$$d_{H}(\boldsymbol{x},\boldsymbol{y})=|\{i\,:\,x_{i}\neq y_{i}, 1\leq i\leq n\}|.$$
Let $C\subseteq\mathbb{F}_{q}^{n}$ be a code, its minimum Hamming distance is defined as follows 
$$d_{H}(C)=\min\{d_{H}(\boldsymbol{c}_{1},\boldsymbol{c}_{2})\,:\,\boldsymbol{c}_{1}\neq\boldsymbol{c}_{2}\in C\}.$$
If $C$ is a linear subspace of $\mathbb{F}_{q}^{n}$, $C$ is called a linear code. 

Next, we revisit some basic definitions and known results about sum-rank-metric codes, most of which can be found in \cite{BGR} and \cite{CGLGMS}. 
We let $m_{1},\ldots,m_{t}$, $n_{1},\ldots,n_{t}$ be $2t$ positive integers such that $m_{i}\geq n_{i}$ for all $i\in[t]$, $m_{1}\geq\cdots\geq m_{t}$ and $n=n_{1}+\cdots +n_{t}$. 
Then we define the linear space as follows:  
$$\mathbb{F}_{q}^{(m_{1},n_{1}),\ldots,(m_{t},n_{t})}=\mathbb{F}_{q}^{(m_{1},n_{1})}\oplus\cdots\oplus\mathbb{F}_{q}^{(m_{t},n_{t})},$$
where $\mathbb{F}_{q}^{(m_{i},n_{i})}$ is the set consisting of all $m_{i}\times n_{i}$ matrices over $\mathbb{F}_{q}$.
Obviously, this is a linear space over $\mathbb{F}_{q}$ with dimension $\sum\limits_{i=1}^{t}m_{i}n_{i}$.

\begin{definition}
  {\rm Let $M=(M_{1},\ldots,M_{t})\in\mathbb{F}_{q}^{(m_{1},n_{1}),\ldots,(m_{t},n_{t})}$, the \emph{sum-rank weight} of $M$ is defined as follows $$wt_{sr}(M)=\sum_{i=1}^{t}{\rm rank}(M_{i}),$$
  and for $M=(M_{1},\ldots,M_{t})$, $N=(N_{1},\ldots,N_{t})\in\mathbb{F}_{q}^{(m_{1},n_{1}),\ldots,(m_{t},n_{t})}$, the \emph{sum-rank distance} between $M$ and $N$ is given as below
  $$d_{sr}(M,N)=wt_{sr}(M-N).$$}
\end{definition}
It is easy to prove that this is a metric on $\mathbb{F}_{q}^{(m_{1},n_{1}),\ldots,(m_{t},n_{t})}$.
\begin{definition}
  {\rm A subset of $\mathbb{F}_{q}^{(m_{1},n_{1}),\ldots,(m_{t},n_{t})}$ endowed with the sum-rank metric is called a \emph{sum-rank-metric code} with block length $t$ and matrix sizes $m_{1}\times n_{1},\ldots,m_{t}\times n_{t}$. The \emph{minimum sum-rank weight and minimum sum-rank distance} of $C$ are defined as usual via
  $$wt_{sr}(C)=\min\{wt_{sr}(M)\,:\,\boldsymbol{0}\neq M\in C\}$$ and $$d_{sr}(C)=\min\{d_{sr}(M,N)\,:\,M\neq N\in C\}.$$}
\end{definition}
When $C=\{\boldsymbol{0}\}$ or $C=\mathbb{F}_{q}^{(m_{1},n_{1}),\ldots,(m_{t},n_{t})}$, we say that the sum-rank-metric code $C\subseteq\mathbb{F}_{q}^{(m_{1},n_{1}),\ldots,(m_{t},n_{t})}$ is trivial. Otherwise, we say it is non-trivial.
And when $C$ is a linear subspace of $\mathbb{F}_{q}^{(m_{1},n_{1}),\ldots,(m_{t},n_{t})}$, $C\subseteq\mathbb{F}_{q}^{(m_{1},n_{1}),\ldots,(m_{t},n_{t})}$ is called a linear sum-rank-metric code.
\begin{remark}
  {\rm When $m_{1}=\cdots=m_{t}=n_{1}=\cdots=n_{t}=1$, the sum-rank metric reduces to the Hamming metric and when $t=1$, it reduces to the rank-metric. What is more, when $m_{1}=\cdots=m_{t}\geq 2$ and $n_{1}=\cdots=n_{t}=1$, it can be seen as the folded Hamming metric. For more details about the folded Hamming metric, the readers can refer to \cite{MR}.}
\end{remark}
 
 We have an $\mathbb{F}_{q}$-linear isomorphism: $\mathbb{F}_{q}^{(m_{1},n_{1}),\ldots,(m_{t},n_{t})}\cong \mathbb{F}_{q^{m_{1}}}^{n_{1}}\oplus\cdots\oplus\mathbb{F}_{q^{m_{t}}}^{n_{t}}.$
 When $m_{1}=\cdots=m_{t}=m$, we let $n=n_{1}+\cdots+n_{t}$. 
  Then $\mathbb{F}_{q}^{(m_{1},n_{1}),\ldots,(m_{t},n_{t})}\cong\mathbb{F}_{q^m}^{n}$. 
  We can give the isomorphism exactly.
  Let $\mathcal{B}=\{\gamma_{1},\ldots,\gamma_{m}\}$ be the basis of $\mathbb{F}_{q^m}$ over $\mathbb{F}_{q}$ and $\boldsymbol{x}=(\boldsymbol{x}^{(1)},\ldots,\boldsymbol{x}^{(t)})\in\mathbb{F}_{q^m}^{n}$, where $\boldsymbol{x}^{(i)}=(x^{(i)}_{1},\ldots,x^{(i)}_{n_{i}})\in\mathbb{F}_{q^m}^{n_i}$.
  Set $Mat_{\mathcal{B}}^{(i)}\,:\,\mathbb{F}_{q^m}^{n_i}\rightarrow\mathbb{F}_{q}^{(m,n_i)},\,\boldsymbol{x}^{(i)}\mapsto M_{\mathcal{B}}^{(i)}$,
  where $x^{(i)}_{s}=\sum\limits_{j=1}^{m}(M_{\mathcal{B}}^{(i)})_{js}\gamma_{j}$.
  In other words, the $s$-th column of $M_{\mathcal{B}}^{(i)}$ is the expansion of $x^{(i)}_{s}$ over the basis $\mathcal{B}$.
  Set $Mat_{\mathcal{B}}\,:\,\mathbb{F}_{q^m}^{n}\rightarrow\mathbb{F}_{q}^{(m_{1},n_{1}),\ldots,(m_{t},n_{t})},\,\boldsymbol{x}\mapsto M_{\mathcal{B}}$, 
  where $M_{\mathcal{B}}=(M_{\mathcal{B}}^{(1)},\ldots,M_{\mathcal{B}}^{(t)})
  =(Mat_{\mathcal{B}}^{(1)}(\boldsymbol{x}^{(1)}),\ldots,Mat_{\mathcal{B}}^{(t)}(\boldsymbol{x}^{(t)}))$.
  
  This is an $\mathbb{F}_{q}$-linear isomorphism between these two linear spaces over $\mathbb{F}_{q}$.
  $C$ is an $\mathbb{F}_{q}$-linear code in $\mathbb{F}_{q^m}^{n}$ if and only if $M_{\mathcal{B}}(C)$ is a linear sum-rank-metric code in $\mathbb{F}_{q}^{(m_{1},n_{1}),\ldots,(m_{t},n_{t})}$.
  Thus, if $C$ is a linear code in $\mathbb{F}_{q^m}^{n}$, $M_{\mathcal{B}}(C)$ is a linear sum-rank-metric code in $\mathbb{F}_{q}^{(m_{1},n_{1}),\ldots,(m_{t},n_{t})}$.
  However, we cannot guarantee the linearity of $C$ over $\mathbb{F}_{q^m}$ even when $M_{\mathcal{B}}(C)$ is a linear sum-rank-metric code in $\mathbb{F}_{q}^{(m_{1},n_{1}),\ldots,(m_{t},n_{t})}$.
  Note that the isomorphism will be used later.
  
  In the following, we turn to the dual code of sum-rank-metric codes. We let $M=(M_{1},\ldots,M_{t})$, $N=(N_{1},\ldots,N_{t})\in\mathbb{F}_{q}^{(m_{1},n_{1}),\ldots,(m_{t},n_{t})}$ 
  and denote by $Tr(A)$ the sum of the elements on the main diagonal of the square matrix $A$.
  Then the trace inner product of $M$ and $N$ is defined as 
  $$\langle M,N \rangle_{tr}=\sum\limits_{i=1}^{t}Tr(M_{i}N_{i}^{\top}).$$
  \begin{definition}
    {\rm Let $C$ be a linear sum-rank-metric code in $\mathbb{F}_{q}^{(m_{1},n_{1}),\ldots,(m_{t},n_{t})}$. Then the \emph{dual code} of $C$ is 
    $$C^{\bot_{tr}}=\{N\in\mathbb{F}_{q}^{(m_{1},n_{1}),\ldots,(m_{t},n_{t})}\,:\,\langle M,N\rangle_{tr}=0,\,\forall M\in C\}.$$}
  \end{definition}
  
  In the following, we review several important bounds on the size of sum-rank-metric codes. We firstly present the Singleton bound for linear sum-rank-metric codes.
  \begin{theorem}(\!\cite[Theorem III.2]{BGR})\label{Theorem 2.5}
  Let $C\subseteq\mathbb{F}_{q}^{(m_{1},n_{1}),\ldots,(m_{t},n_{t})}$ be a nontrivial linear sum-rank-metric code with $d_{sr}(C)=d$. 
  Let $j\in[t]$ and $0\leq\delta\leq n_{j}-1$ be the unique integers satisfying $d-1=\sum\limits_{i=1}^{j-1}n_{i}+\delta$.
  Then $$\dim(C)\leq \sum\limits_{i=j}^{t}m_{i}n_{i}-m_{j}\delta.$$
  In the special case where $m_{1}=\cdots=m_{t}=m$, the upper bound simplifies to $\dim(C)\leq m(n-d+1)$.
  \end{theorem}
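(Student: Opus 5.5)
The plan is a puncturing argument, the standard proof of Singleton-type bounds. We delete from every codeword enough columns to lower the sum-rank weight of any nonzero codeword by at most $d-1$, in such a way that the puncturing map stays injective on $C$. We then compare $\dim(C)$ with the dimension of the ambient space of the punctured codes.

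Write $d-1=\sum_{i=1}^{j-1}n_i+\delta$ with $j\in[t]$ and $0\le\delta\le n_j-1$ as in the statement. The map $\pi$ is defined on $\mathbb{F}_{q}^{(m_{1},n_{1}),\ldots,(m_{t},n_{t})}$ by deleting the first $j-1$ blocks entirely and deleting the first $\delta$ columns of block $j$. Its image is
$$\mathbb{F}_{q}^{(m_{j},n_{j}-\delta)}\oplus\mathbb{F}_{q}^{(m_{j+1},n_{j+1})}\oplus\cdots\oplus\mathbb{F}_{q}^{(m_{t},n_{t})},$$
whose $\mathbb{F}_q$-dimension is $m_j(n_j-\delta)+\sum_{i=j+1}^t m_in_i=\sum_{i=j}^t m_in_i-m_j\delta$.

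The key step is to show that $\pi|_C$ is injective. Because $C$ is linear and $\pi$ is $\mathbb{F}_q$-linear, it suffices to take $M\in C$ with $\pi(M)=0$ and prove $M=0$. For such $M$, every block $M_i$ with $i>j$ is zero, and $M_j$ is supported on its first $\delta$ columns, so $\rank(M_j)\le\delta$. For $i<j$ we use $\rank(M_i)\le n_i$, which holds because $m_i\ge n_i$. Adding up gives $wt_{sr}(M)\le\sum_{i<j}n_i+\delta=d-1<d$. Since $d_{sr}(C)=d$ and $C$ is linear, $M$ must be $0$. Injectivity then gives $\dim C\le\dim\pi(C)\le \sum_{i=j}^t m_in_i-m_j\delta$.

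The choice of which columns to delete is where the ordering $m_1\ge\cdots\ge m_t$ enters. Deleting the blocks with the largest $m_i$ first is what makes the bound as strong as possible, though for validity any choice of $d-1$ columns works. This step is easy, and I expect no real obstacle: the only care needed is the rank estimate $\rank(M_i)\le n_i$ and confirming that $j$ and $\delta$ are well defined. They are, since $C$ is nontrivial, which gives $1\le d\le n$.

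For the special case $m_1=\cdots=m_t=m$, the bound becomes $m\bigl(\sum_{i\ge j}n_i-\delta\bigr)=m\bigl(n-\sum_{i<j}n_i-\delta\bigr)=m(n-d+1)$.
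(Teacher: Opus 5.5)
Your puncturing argument is correct and is essentially the standard proof. The paper proves nothing here and only cites \cite[Theorem III.2]{BGR}, where the bound comes from this same projection: delete the first $j-1$ blocks and $\delta$ columns of block $j$, and note that the projection is injective on $C$ because $d_{sr}(C)=d$. One small correction: $\mathrm{rank}(M_i)\le n_i$ holds simply because $M_i$ has $n_i$ columns, so you do not need the hypothesis $m_i\ge n_i$ at that step.
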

  If a linear sum-rank-metric code attains this bound, we call it the maximum sum-rank distance (MSRD) code.
  For convenience, it is also necessary to recall another Singleton-type bound, although it is weaker than the bound above.
  \begin{proposition}(\!\cite[Corollary VI.7]{CGLGMS})\label{Proposition 2.6}
  Let $C\subseteq\mathbb{F}_{q}^{(m_{1},n_{1}),\ldots,(m_{t},n_{t})}$ be a nontrivial linear sum-rank-metric code. Let $j\in[t]$, $0\leq\delta\leq n_{j}-1$ and 
  $0\leq s\leq m_{j}-1$ be the unique integers satisfying $\dim(C)=\sum\limits_{i=1}^{j-1}m_{i}n_{i}+\delta m_{j}+s$.
  Then $$d_{sr}(C)\leq \sum\limits_{i=j}^{t}n_{i}-\delta+\begin{cases}
                                                           1, & \mbox{if } s=0, \\
                                                           0, & \mbox{else}.
                                                         \end{cases}$$
  \end{proposition}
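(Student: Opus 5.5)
The plan is a dimension-counting (pigeonhole) argument. We exhibit an $\mathbb{F}_{q}$-subspace $U\subseteq\mathbb{F}_{q}^{(m_{1},n_{1}),\ldots,(m_{t},n_{t})}$ such that every matrix tuple in $U$ has sum-rank weight at most the claimed bound, and such that $\operatorname{codim}U<\dim(C)$. Then $\dim(C\cap U)\geq \dim(C)-\operatorname{codim}U>0$, so $C$ contains a nonzero codeword $M\in U$. Since $C$ is linear, $d_{sr}(C)=wt_{sr}(C)\leq wt_{sr}(M)$, which gives the bound.

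We take $U$ to be the set of all tuples whose entries vanish on a prescribed set $P$ of coordinate positions. Then $\operatorname{codim}U=|P|$, and we need $|P|\leq\dim(C)-1$. The set $P$ is chosen blockwise to kill as much rank as possible.

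\emph{Case $s\geq 1$.} Let $P$ consist of:
\begin{itemize}
\item all positions of blocks $1,\ldots,j-1$;
\item all positions in the first $\delta$ columns of block $j$;
\item $s-1$ further positions in column $\delta+1$ of block $j$ (possible since $s-1\leq m_{j}-2$).
\end{itemize}
Then $|P|=\sum_{i<j}m_{i}n_{i}+\delta m_{j}+s-1=\dim(C)-1$. Every $M\in U$ has $M_{i}=0$ for $i<j$. Block $M_{j}$ has at most $n_{j}-\delta$ nonzero columns, so $\rank(M_{j})\leq n_{j}-\delta$, while $\rank(M_{i})\leq n_{i}$ for $i>j$. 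Hence $wt_{sr}(M)\leq\sum_{i\geq j}n_{i}-\delta$, which is the claim.

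\emph{Case $s=0$ and $\delta\geq 1$.} Let $P$ consist of:
\begin{itemize}
\item all positions of blocks $1,\ldots,j-1$;
\item all positions in the first $\delta-1$ columns of block $j$;
\item $m_{j}-1$ positions of column $\delta$ of block $j$.
\end{itemize}
Then $|P|=\dim(C)-1$. Block $j$ of any $M\in U$ has at most $n_{j}-\delta+1$ nonzero columns, giving $wt_{sr}(M)\leq\sum_{i\geq j}n_{i}-\delta+1$.

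\emph{Case $s=0$ and $\delta=0$.} Here $\dim(C)=\sum_{i<j}m_{i}n_{i}$, and $j\geq 2$ because $C$ is nontrivial. Let $P$ consist of:
\begin{itemize}
\item all positions of blocks $1,\ldots,j-2$;
\item all positions of block $j-1$ except one.
\end{itemize}
Then $|P|=\dim(C)-1$. Any $M\in U$ has $\rank(M_{j-1})\leq 1$, so $wt_{sr}(M)\leq 1+\sum_{i\geq j}n_{i}$, again matching the bound.

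The only delicate point is this bookkeeping at the boundary: the uniqueness conventions $0\leq\delta\leq n_{j}-1$ and $0\leq s\leq m_{j}-1$ must guarantee that the chosen positions exist, and the degenerate case $\delta=s=0$ has to borrow from block $j-1$. Nontriviality of $C$ ensures $j\geq2$ in that case and also that $\dim(C)\geq 1$, so the intersection argument applies.
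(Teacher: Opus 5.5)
Your proof is correct. The paper gives no proof of its own: it imports the bound from \cite{CGLGMS}, where it is Corollary VI.7, following Proposition VI.6 (the generalized-weight properties recalled here as Lemma~\ref{Lemma 4.21}), so the natural reference proof goes through those properties. For $s\geq 1$ that proof chains
\[
n\geq d_{\dim(C)}(C)\geq d_{s}(C)+\sum_{i<j}n_{i}+\delta\geq d_{1}(C)+\sum_{i<j}n_{i}+\delta ,
\]
using items (3), (4), (2) and (1). For $s=0$ it applies item (4) with $r=m_{j}$ and $\delta-1$ in place of $\delta$, or, when $\delta=0$, with $r=m_{j-1}$ in block $j-1$ and $n_{j-1}-1$ in place of $\delta$.

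You replace that machinery (optimal anticodes and the generalized weights built on them) with a direct shortening argument. You intersect $C$ with a coordinate subspace of codimension $\dim(C)-1$ and bound each block's rank by its number of surviving columns. Your three cases correspond exactly to the three choices of $r$ above. Your route is fully elementary and uses neither the ordering $m_{1}\geq\cdots\geq m_{t}$ nor $m_{i}\geq n_{i}$.

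It also shows why this bound is weaker than Theorem~\ref{Theorem 2.5}. Nothing forces you to zero out the blocks with the largest $m_{i}$ first. Zeroing positions in the blocks with the smallest $m_{i}$ costs fewer positions per unit of weight removed, so it gives a bound that is never worse and sometimes strictly better. For example, in $\mathbb{F}_{q}^{(3,1),(1,1)}$ with $\dim(C)=2$, zeroing the single entry of block $2$ gives $d_{sr}(C)\leq 1$ instead of $2$.

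The generalized-weight route, in exchange, produces the whole chain of inequalities among the $d_{r}(C)$, of which this bound is only the $r=1$ consequence. That chain is what the paper actually relies on later, in Lemma~\ref{Lemma 4.23} and Theorem~\ref{Theorem 4.27}.
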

  The following theorem is another bound called projective sphere-packing bound which will be used to get an upper bound on the block length of QMSRD codes. For $r\in\mathbb{N}$, we let $V_{r}$ be the volume of any sphere in $\mathbb{F}_{q}^{(m_{1},n_{1}),\ldots,(m_{t},n_{t})}$ of sum-rank radius $r$. Then $$V_{r}=\sum\limits_{s=0}^{r}\,\sum\limits_{\substack{(s_{1},\ldots,s_{t})\in\mathbb{N}_{0}^{t}\\s_{1}+\cdots+s_{t}=s}}
  \prod\limits_{i=1}^{t}\binom{n_{i}}{s_{i}}_{q}\prod\limits_{j=0}^{s_{i}-1}(q^{m_{i}}-q^{j}).$$
  \begin{theorem}(\!\cite[Theorem III.7]{BGR})\label{Theorem 2.7}
 Let $\{\boldsymbol{0}\}\neq C\subseteq\mathbb{F}_{q}^{(m_{1},n_{1}),\ldots,(m_{t},n_{t})}$ be a sum-rank-metric code with minimum sum-rank distance $3\leq d_{sr}(C)\leq n$. 
 Let $l\in[t-1]$ and $0\leq\delta\leq n_{l+1}-1$ be the unique integers such that $d-3=\sum\limits_{j=1}^{l}n_{j}+\delta$ and $V_{1}^{\prime}$ be the volume of any sphere in $\mathbb{F}_{q}^{(m_{l+1},n_{l+1}-\delta),(m_{l+2},n_{l+2}),\ldots,(m_{t},n_{t})}$ of sum-rank radius $1$. Then
 $$|C|\leq\left\lfloor\frac{q^{\sum\limits_{i=l+1}^{t}m_{i}n_{i}-\delta m_{l+1}}}{V_{1}^{\prime}}\right\rfloor,$$
 where $$V_{1}^{\prime}=1+\frac{q^{n_{l+1}-\delta}-1}{q-1}(q^{m_{l+1}}-1)+\sum\limits_{i=l+2}^{t}\frac{q^{n_{i}}-1}{q-1}(q^{m_{i}}-1).$$
  \end{theorem}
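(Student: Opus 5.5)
The plan follows the classical sphere-packing argument, refined to the projective setting, after first reducing the ambient space via the Singleton-type puncturing used for Theorem \ref{Theorem 2.5}. Write $d=d_{sr}(C)$ and $d-3=\sum_{j=1}^{l}n_j+\delta$ as in the statement.

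\textbf{Step 1 (puncturing).} Delete the first $l$ blocks entirely and, in block $l+1$, delete $\delta$ columns. Call this map $\pi$; it sends $C$ into
\[
\mathbb{F}_{q}^{(m_{l+1},n_{l+1}-\delta),(m_{l+2},n_{l+2}),\ldots,(m_{t},n_{t})}.
\]
This ambient space has $\mathbb{F}_q$-dimension $\sum_{i=l+1}^{t}m_in_i-\delta m_{l+1}$.

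Deleting a block lowers the sum-rank weight by at most $n_j$, and deleting one column lowers the rank by at most $1$. So for $M\neq N$ in $C$ we get
\[
d_{sr}(\pi(M),\pi(N))\geq d-\sum_{j\le l}n_j-\delta=3 .
\]
Hence $\pi$ is injective on $C$, $|\pi(C)|=|C|$, and $\pi(C)$ has minimum distance at least $3$.

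\textbf{Step 2 (packing).} A code with minimum distance at least $3$ has pairwise disjoint sum-rank balls of radius $1$ around its codewords. Therefore $|C|\cdot V_1'\le q^{\sum_{i=l+1}^{t}m_in_i-\delta m_{l+1}}$. Since $|C|$ is an integer, we may take the floor.

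\textbf{Step 3 (computing $V_1'$).} A word of sum-rank weight exactly $1$ is nonzero in exactly one block, and there it has rank $1$. The number of rank-one $m\times n'$ matrices is $\binom{n'}{1}_q(q^m-1)=\frac{q^{n'}-1}{q-1}(q^m-1)$, which is the $s=1$ term of the volume formula. Summing over the blocks, with $n_{l+1}-\delta$ columns in block $l+1$, and adding $1$ for the zero word gives the stated $V_1'$.

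\textbf{Where the difficulty lies.} The only delicate point is the index range. The hypotheses $3\le d\le n$ force $l\le t-1$, so that some block survives the puncturing. If $d-3<n_1$, the indexing degenerates and one instead punctures $d-3$ columns of block $1$; the stated convention $l\in[t-1]$ presumably absorbs this case.

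Linearity is never used, so the argument applies to arbitrary codes. The word "projective" reflects the rank-one count, taken over projective points times nonzero column vectors. Everything reduces to the injectivity in Step 1, which is the main check.
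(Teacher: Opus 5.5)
Your proof is correct and is the standard argument behind the cited result \cite[Theorem III.7]{BGR}; the paper itself only quotes the bound without proof. That argument punctures the first $l$ blocks and $\delta$ columns of block $l+1$, which gives an injective image with minimum distance at least $3$, and then applies the radius-one sphere-packing bound in the smaller space, with $V_1'$ counted via rank-one matrices exactly as you do. Your remark on the indexing is also right: $l$ must be allowed to be $0$, as in the paper's own use with $l=\lfloor (d-3)/n\rfloor$ in Theorem~\ref{Theorem 3.8}, and the argument is unchanged in that case.
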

  We end up this section with the definition of generalized sum-rank weights. In order to make it clear, we firstly give the anticode bound for sum-rank-metric codes. Note that ${\rm maxsrk}(C)=\max\{d_{sr}(M,N)\,:\,M\neq N\in C\}.$
  \begin{theorem}(\!\cite[Theorem IV.1]{CGLGMS})
  Let $C\subseteq\mathbb{F}_{q}^{(m_{1},n_{1}),\ldots,(m_{t},n_{t})}$ be a linear sum-rank-metric code. Then 
  $$\dim(C)\leq\max\left\{\sum\limits_{i=1}^{t}m_{i}{\rm rank}(M_{i})\,:\,M=(M_{1},\ldots,M_{t})\in C\right\}.$$
  In particular, if $m_{1}=\cdots=m_{t}=m$, then
  $$\dim(C)\leq m{\rm maxsrk}(C).$$
  \end{theorem}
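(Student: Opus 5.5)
The plan is to prove the general inequality by induction on the block length $t$, using the classical rank-metric anticode bound for a single block as the base case. The special case follows at once: when $m_{1}=\cdots=m_{t}=m$ we have $\sum_{i}m\,{\rm rank}(M_{i})=m\,wt_{sr}(M)$. By linearity, the maximum of $wt_{sr}(M)$ over $M\in C$ equals the maximum of $d_{sr}(M,N)$ over $M\neq N$ in $C$, which is ${\rm maxsrk}(C)$.

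\emph{Base case $t=1$.} Put $r=\max\{{\rm rank}(M):M\in C\}$. Take an MRD code $D\subseteq\mathbb{F}_{q}^{(m_{1},n_{1})}$ with minimum rank distance $r+1$ and $\dim D=m_{1}(n_{1}-r)$; such a code exists, e.g.\ a Gabidulin code, because $m_{1}\geq n_{1}$. Every nonzero element of $C$ has rank at most $r$, so $C\cap D=\{\boldsymbol{0}\}$. Hence $\dim C+m_{1}(n_{1}-r)\leq m_{1}n_{1}$, which gives $\dim C\leq m_{1}r$.

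\emph{Inductive step.} Let $\pi_{1}$ be the projection of $C$ onto the first block, and let $K=\ker\pi_{1}$, viewed as a code in $\mathbb{F}_{q}^{(m_{2},n_{2}),\ldots,(m_{t},n_{t})}$. Then $\dim C=\dim\pi_{1}(C)+\dim K$. The base case gives $\dim\pi_{1}(C)\leq m_{1}r_{1}$, where $r_{1}$ is the maximal rank in $\pi_{1}(C)$. The induction hypothesis gives $\dim K\leq\sum_{i\geq2}m_{i}{\rm rank}(N_{i})$ for some $N\in K$. It therefore suffices to find a single $M\in C$ with $\sum_{i}m_{i}{\rm rank}(M_{i})\geq m_{1}r_{1}+\sum_{i\geq2}m_{i}{\rm rank}(N_{i})$. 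The natural candidate is $M=A+N$, where $A\in C$ satisfies ${\rm rank}(A_{1})=r_{1}$. Since $N_{1}=0$, the first block of $M$ has rank exactly $r_{1}$.

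\emph{Main obstacle.} The remaining blocks must not lose rank under the perturbation. That is, we need ${\rm rank}(A_{i}+N_{i})\geq{\rm rank}(N_{i})$ for $i\geq2$, or at least the weighted sum must not drop. Over a small field $\mathbb{F}_{q}$ the usual genericity argument (replace $N$ by $\lambda N$ with $\lambda$ generic) fails. My first attempt will be to strengthen the statement being inducted. I would choose $A$ freely within the coset $A+K$, and choose $N$ to maximize the weighted rank in $K$ among elements whose nonzero blocks have column spaces in general position with respect to the corresponding blocks of $A$. I would then use the rank inequality ${\rm rank}(X+Y)\geq{\rm rank}(Y)$, valid when the column space of $X$ meets the column space of $Y$ trivially, block by block.

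If that fails, I would replace induction with a direct intersection argument as in the base case. This requires constructing a subspace $D$ of dimension $\sum_{i}m_{i}n_{i}-w$ that meets $C$ trivially, where $w=\max_{M\in C}\sum_{i}m_{i}{\rm rank}(M_{i})$. Equivalently, every nonzero element of $D$ must have weighted rank larger than $w$. A candidate is an $\mathbb{F}_{q}$-linear code built from a linearized Reed--Solomon code, with the blocks taken in the order $m_{1}\geq\cdots\geq m_{t}$. If the base field is too small for such a code to exist, I would first extend scalars to $\mathbb{F}_{q^{s}}$. Here one must check that the relevant maximum does not increase under the extension, and I expect that check to be the delicate point.
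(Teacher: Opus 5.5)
The paper gives no proof of this statement; it is quoted from \cite{CGLGMS}. So I judge your proposal on its own terms. Your base case and your reduction of the special case are fine. The inductive step, however, reduces the theorem to a claim that is false in general, not merely hard over small fields.

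Take $t=3$, $q=2$, $m_i=n_i=1$ and the even-weight code $C=\{(0,0,0),(1,1,0),(1,0,1),(0,1,1)\}$. Here $\dim C=2$ equals the maximal weight, so the theorem holds. But $r_1=1$ and $K=\{(0,0),(1,1)\}$. The only $N\in K$ with $\dim K\leq\sum_{i\geq 2}m_i{\rm rank}(N_i)$ is $N=(1,1)$. Your sufficient condition then asks for a codeword of weight $3$, and none exists.

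The loss is structural. You replace $\dim K$ by the possibly much larger maximum over $K$. You then ask a single codeword to realize $m_1r_1$ plus that maximum, which amounts to super-additivity of the maximum, and that fails. No general-position choice of $A$ or $N$ repairs it; in this example every nonzero $1\times 1$ block has the same column space anyway.

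The fallback breaks exactly at the point you flagged.
\begin{itemize}
\item Requiring every nonzero element of $D$ to have weighted rank $>w$ is stronger than $D\cap C=\{\boldsymbol{0}\}$, not equivalent to it.
\item In the Hamming specialization this requirement demands a $[t,t-w,w+1]_q$ MDS code, and none exists for $q=2$, $t=4$, $w=2$.
\item Extending scalars does not rescue this, because the maximum can strictly increase. With $\omega$ a primitive element of $\mathbb{F}_4$, the code above tensored with $\mathbb{F}_4$ contains $(1,\omega,\omega^2)$, of weight $3$.
\item So you would only bound $\dim C$ by the maximum over $C\otimes\mathbb{F}_{q^s}$, which is weaker than the claim.
\end{itemize}

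The missing idea is to use the MRD code \emph{inside} the induction, and to split off the other kernel: the codewords supported on a single block. Let $\pi_t$ be the projection onto block $t$.
\begin{itemize}
\item Set $K'=\{M\in C:M_1=\cdots=M_{t-1}=0\}$, and let $\rho$ be the largest rank in $\pi_t(K')$.
\item Let $G\subseteq\mathbb{F}_q^{(m_t,n_t)}$ be a Gabidulin code with minimum distance $\rho+1$ and $\dim G=m_t(n_t-\rho)$; take $G=\{\boldsymbol{0}\}$ if $\rho=n_t$.
\item Put $C'=\{M\in C:M_t\in G\}$. Since $G$ has codimension $m_t\rho$, we get $\dim C'\geq\dim C-m_t\rho$.
\item Nonzero ranks in $\pi_t(K')$ are at most $\rho$, while nonzero ranks in $G$ are at least $\rho+1$. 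Hence $C'\cap K'=\{\boldsymbol{0}\}$, and $C'$ projects injectively onto the first $t-1$ blocks.
\item Induction then gives $M\in C'$ with $\sum_{i<t}m_i{\rm rank}(M_i)\geq\dim C'$.
\end{itemize}
If $M_t\neq 0$, then ${\rm rank}(M_t)\geq\rho+1$, and the total exceeds $\dim C'+m_t\rho\geq\dim C$. If $M_t=0$, add some $N\in K'$ with ${\rm rank}(N_t)=\rho$. This changes only block $t$ and gives $\sum_i m_i{\rm rank}(M_i+N_i)\geq\dim C'+m_t\rho\geq\dim C$. No rank cancellation can occur. For $t=1$, this argument is exactly your base case.
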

  Optimal anticodes are those sum-rank-metric codes which meet the anticode bound. And when $t=1$, the result reduces to the anticode bound for rank-metric codes.
  \begin{definition}
   {\rm Let $C\subseteq\mathbb{F}_{q}^{(m_{1},n_{1}),\ldots,(m_{t},n_{t})}$ be a linear sum-rank-metric code. For each $r\in[\dim(C)]$, the \emph{$r$-th generalized sum-rank weight} of $C$ is defined as
   \begin{equation*}
   \begin{aligned}
   d_r(C) = & \min \{{\rm maxsrk}(\mathcal{A})\,:\,\mathcal{A}=\mathcal{A}_1 \times \cdots \times \mathcal{A}_{t} \text{ where } \mathcal{A}_i \subseteq \mathbb{F}_q^{m_i \times n_i} \text{ is an }\\ & \text{optimal anticode and } \dim(C \cap \mathcal{A}) \geq r\}.
   \end{aligned}
   \end{equation*}
   In particular, if $m_{1}=\cdots=m_{t}=m$, then
   \begin{equation*}
   \begin{aligned}
   d_r(C) = &\frac{1}{m}\min \{\dim(\mathcal{A})\,:\,\mathcal{A}=\mathcal{A}_1 \times \cdots \times \mathcal{A}_{t} \text{ where } \mathcal{A}_i \subseteq \mathbb{F}_q^{m_i \times n_i} \text{ is an }\\ & \text{optimal anticode and } \dim(C \cap \mathcal{A}) \geq r\}.
   \end{aligned}
   \end{equation*}}
  \end{definition}

\section{Quasi-MSRD Codes}
In this section, we give the definition of quasi-MSRD (QMSRD) codes and derive some properties of them.
According to the preliminaries above, a linear sum-rank-metric code $C\subseteq\mathbb{F}_{q}^{(m_{1},n_{1}),\ldots,(m_{t},n_{t})}$ is MSRD if there exist integers $j\in[t]$ and $0\leq\delta\leq n_{j}-1$ such that
$$\dim(C)=\sum\limits_{i=j}^{t}m_{i}n_{i}-m_{j}\delta \text{ and } d_{sr}(C)=\sum\limits_{i=1}^{j-1}n_{i}+\delta+1.$$ 
This is a class of optimal sum-rank-metric codes, but MSRD codes only exist if the dimension $\dim(C)$ can be written as the form $\sum\limits_{i=j}^{t}m_{i}n_{i}-m_{j}\delta$. 
If not, we let the dimension $\dim(C)=\sum\limits_{i=j}^{t}m_{i}n_{i}-m_{j}\delta-s$, where $j$, $\delta$ and $s$ are integers such that $j\in[t]$, $0\leq\delta\leq n_{j}-1$ and $0<s\leq m_{j}-1$. By the Singleton bound, $d_{sr}(C)\leq\sum\limits_{i=1}^{j-1}n_{i}+\delta+1$. 
Motivated by this, we define quasi-MSRD (QMSRD) codes as below.
\begin{definition}
  {\rm Let $C\subseteq\mathbb{F}_{q}^{(m_{1},n_{1}),\ldots,(m_{t},n_{t})}$ be a linear sum-rank-metric code. $C$ is called a \emph{quasi-MSRD (QMSRD)} code if there exist integers $j\in[t]$, $0\leq\delta\leq n_{j}-1$ and $0<s\leq m_{j}-1$ such that 
  $$\dim(C)=\sum\limits_{i=j}^{t}m_{i}n_{i}-m_{j}\delta-s \text{ and } d_{sr}(C)=\sum\limits_{i=1}^{j-1}n_{i}+\delta+1.$$}
\end{definition}
\begin{remark}
  {\rm When $t=1$, quasi-MSRD codes reduce to quasi-MRD codes and when $m_{1}=\cdots=m_{t}\geq 2$ and $n_{1}=\cdots=n_{t}=1$, quasi-MSRD codes can be seen as quasi-MDS codes. For more details about these two kinds of codes, readers can refer to \cite{CGLR} and \cite{MR}.
  
  When $m_{1}=\cdots=m_{t}=m$ and $n=n_{1}+\cdots+n_{t}$, let $C\subseteq\mathbb{F}_{q}^{(m,n_{1}),\ldots,(m,n_{t})}$ be the linear sum-rank-metric code.
  Then $\dim(C)\leq m(n-d_{sr}(C)+1)$ by the Singleton bound, giving that $d_{sr}(C)\leq n-\left\lceil\frac{\dim(C)}{m}\right\rceil+1$.
  At this special case, MSRD codes only exist when $\dim(C)$ is divisible by $m$. 
  More precisely, if $\dim(C)$ is divisible by $m$, $\dim(C)=m(n-d_{sr}(C)+1)$ if and only if $d_{sr}(C)=n-\left\lceil\frac{\dim(C)}{m}\right\rceil+1=n-\frac{\dim(C)}{m}+1$. Under such conditions, $C$ is an MSRD code.
  If $\dim(C)$ is not divisible by $m$, $\dim(C)=m(n-d_{sr}(C)+1)-s$, $0<s\leq m-1$ if and only if $d_{sr}(C)=n-\left\lceil\frac{\dim(C)}{m}\right\rceil+1$, which implies that $C$ is a QMSRD code.}
\end{remark}
It is easy to prove that some linear subspaces of MSRD codes are QMSRD codes.
\begin{proposition}\label{Proposition 3.3}
  Let $j$, $\delta$ and $s$ be integers such that $j\in[t]$, $0\leq\delta\leq n_{j}-1$ and $0<s\leq m_{j}-1$ and 
  $C\subseteq\mathbb{F}_{q}^{(m_{1},n_{1}),\ldots,(m_{t},n_{t})}$ be an MSRD code with minimum distance $d_{sr}(C)=\sum\limits_{i=1}^{j-1}n_{i}+\delta+1$ and dimension $\dim(C)=\sum\limits_{i=j}^{t}m_{i}n_{i}-m_{j}\delta$.
  If $D$ is a linear subspace of $C$ with dimension $\dim(D)=\sum\limits_{i=j}^{t}m_{i}n_{i}-m_{j}\delta-s$, then $d_{sr}(D)=d_{sr}(C)$ and $D$ is a QMSRD code.
\end{proposition}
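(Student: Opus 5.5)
The plan is to bound $d_{sr}(D)$ from both sides and show the two bounds coincide.

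For the lower bound, note that $D\subseteq C$ as a linear subspace. Every nonzero codeword of $D$ is therefore a nonzero codeword of $C$, so $wt_{sr}(D)\geq wt_{sr}(C)$. By linearity, $d_{sr}(D)=wt_{sr}(D)$ and $d_{sr}(C)=wt_{sr}(C)$. Hence
$$d_{sr}(D)\geq d_{sr}(C)=\sum_{i=1}^{j-1}n_i+\delta+1.$$
Since $s\leq m_j-1$, we have $\dim(D)\geq \sum_{i=j}^t m_in_i-m_j\delta-m_j+1>0$, because $n_j-\delta\geq 1$. So $D\neq\{\boldsymbol{0}\}$ and the minimum distance is well defined. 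Also $D$ is nontrivial, since it is a proper subspace of the ambient space.

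For the upper bound, apply the Singleton bound (Theorem \ref{Theorem 2.5}) to $D$. Suppose, for contradiction, that $d_{sr}(D)=d'\geq \sum_{i=1}^{j-1}n_i+\delta+2$. Write $d'-1=\sum_{i=1}^{j'-1}n_i+\delta'$ with the unique $j'\in[t]$ and $0\leq\delta'\leq n_{j'}-1$. Then $(j',\delta')$ is lexicographically strictly larger than $(j,\delta)$. The map $(j,\delta)\mapsto \sum_{i=j}^t m_in_i-m_j\delta$ is strictly decreasing, and each unit increase of $d-1$ lowers the bound by some $m_i$ with $i\geq j$. The first step, from $(j,\delta)$ to its successor, lowers it by exactly $m_j$. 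This holds when $\delta<n_j-1$, since the successor is $(j,\delta+1)$. When $\delta=n_j-1$ the successor is $(j+1,0)$, and the bound becomes $\sum_{i=j+1}^t m_in_i=\sum_{i=j}^t m_in_i-m_jn_j$, which is again a decrease of $m_j$. Therefore
$$\dim(D)\leq \sum_{i=j}^t m_in_i-m_j\delta-m_j<\sum_{i=j}^t m_in_i-m_j\delta-s,$$
using $s<m_j$. This contradicts the given dimension of $D$.

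The one point needing care is the edge case where $d'$ exceeds $n$ or lies outside the range covered by Theorem \ref{Theorem 2.5}. This cannot happen, because Theorem \ref{Theorem 2.5} already forces $d_{sr}(D)\leq n$ for a nontrivial code.

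Combining the two bounds gives $d_{sr}(D)=d_{sr}(C)=\sum_{i=1}^{j-1}n_i+\delta+1$. Together with $\dim(D)=\sum_{i=j}^t m_in_i-m_j\delta-s$ and $0<s\leq m_j-1$, this is exactly the definition of a QMSRD code.

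The only nonroutine step is the monotonicity and "decrease by at least $m_j$" argument for the Singleton bound. I would handle it with the two-case check above, and the ordering $m_1\geq\cdots\geq m_t$ is not even needed for the first step.
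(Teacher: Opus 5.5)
Your proof is correct and follows the same route as the paper. The inclusion $D\subseteq C$ gives $d_{sr}(D)\geq d_{sr}(C)$, and the Singleton bound excludes $d_{sr}(D)\geq d_{sr}(C)+1$, since that would force $\dim(D)\leq\sum_{i=j}^{t}m_in_i-m_j(\delta+1)$, contradicting $s<m_j$. You also spell out that the Singleton bound decreases as $d$ grows, including the case $\delta=n_j-1$, which the paper uses without comment. One small correction: $d_{sr}(D)\leq n$ follows directly from $\mathrm{rank}(M_i)\leq n_i$, not from Theorem~\ref{Theorem 2.5}.
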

\begin{proof}
  Since $D$ is a linear subspace of $C$, $d_{sr}(D)\geq d_{sr}(C)$. 
  Suppose that $d_{sr}(D)>d_{sr}(C)$, i.e., $d_{sr}(D)\geq \sum\limits_{i=1}^{j-1}n_{i}+\delta+2$, 
  then by the Singleton bound, $\dim(D)\leq\sum\limits_{i=j}^{t}m_{i}n_{i}-m_{j}(\delta+1)$, which contradicts the fact that $\dim(D)=\sum\limits_{i=j}^{t}m_{i}n_{i}-m_{j}\delta-s$ and $0<s\leq m_{j}-1$. 
  Hence, $d_{sr}(D)=d_{sr}(C)$ and $D$ is a QMSRD code.
\end{proof}
However, not every QMSRD code arises as a linear subspace of an MSRD code. The following is an example.
\begin{example}\label{Example 3.4}
  Let $C\subseteq\mathbb{F}_{2}^{(2,2),(2,2)}$ be a linear sum-rank-metric code generated by the following elements: 
  $$\left(\begin{pmatrix}
            1 & 0 \\
            0 & 1 
          \end{pmatrix},\,\begin{pmatrix}
            1 & 0 \\
            0 & 0 
          \end{pmatrix}\right),\,
          \left(\begin{pmatrix}
            0 & 0 \\
            1 & 0 
          \end{pmatrix},\,\begin{pmatrix}
            0 & 1 \\
            1 & 0 
          \end{pmatrix}\right),\,
          \left(\begin{pmatrix}
            0 & 1 \\
            0 & 0 
          \end{pmatrix},\,\begin{pmatrix}
            1 & 0 \\
            0 & 1 
          \end{pmatrix}\right).$$ 
  $C$ has dimension $\dim(C)=3$ and minimum sum-rank distance $d_{sr}(C)=3$. Therefore, $C$ is a QMSRD code. 
  However, according to \cite[Example VI.9]{BGR}, there exists no MSRD code with minimum sum-rank distance $3$ in $\mathbb{F}_{2}^{(2,2),(2,2)}$. 
  Hence, $C$ cannot be a linear subspace of an MSRD code as Proposition \ref{Proposition 3.3}.
  In addition, we find that $\dim(C^{\bot_{tr}})=5$ and $d_{sr}(C^{\bot_{tr}})=2$, so $C^{\bot_{tr}}$ is a QMSRD code as well. 
\end{example}
Next, we study some properties which are closely related to being QMSRD codes. Firstly, using optimal anticodes in the sum-rank metric, we derive the result below.
\begin{proposition}\label{Proposition 3.5}
  Let $C$ be a linear sum-rank-metric code in $\mathbb{F}_{q}^{(m_{1},n_{1}),\ldots,(m_{t},n_{t})}$. Then the following are equivalent:
   \begin{itemize}
    \item[(1)] $C$ is a QMSRD code.
    \item[(2)] There exist integers $j\in[t]$, $0\leq\delta\leq n_{j}-1$, $0<s\leq m_{j}-1$ such that 
    $\dim(C)=\sum_{i=j}^{t}m_{i}n_{i}-m_{j}\delta-s$ and for any optimal anticode $\mathcal{A}$ with ${\rm maxsrk}(\mathcal{A})\leq \sum\limits_{i=1}^{j-1}n_{i}+\delta$ one has $C\cap\mathcal{A}=\{\boldsymbol{0}\}$.
  \end{itemize}
\end{proposition}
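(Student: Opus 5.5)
Both conditions fix the same dimension $\dim(C)=\sum_{i=j}^{t}m_{i}n_{i}-m_{j}\delta-s$. By the definition of QMSRD codes, (1) then amounts to the single equality $d_{sr}(C)=\sum_{i=1}^{j-1}n_i+\delta+1$. We first note that the Singleton bound (Theorem \ref{Theorem 2.5}) already gives the inequality $\le$ for this dimension. Indeed, if $d_{sr}(C)\geq\sum_{i=1}^{j-1}n_i+\delta+2$, then Theorem \ref{Theorem 2.5} gives $\dim(C)\le\sum_{i=j}^t m_in_i-m_j(\delta+1)$ (with the obvious reading when $\delta+1=n_j$, where the index passes to $j+1$). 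This is strictly less than $\dim(C)$ because $s<m_j$. This is the same computation as in Proposition \ref{Proposition 3.3}. So the whole task is to show
\[
d_{sr}(C)\ge \sum_{i=1}^{j-1}n_i+\delta+1 \iff C\cap\mathcal A=\{\boldsymbol 0\}\ \text{for every optimal anticode } \mathcal A \text{ with } {\rm maxsrk}(\mathcal A)\le \sum_{i=1}^{j-1}n_i+\delta.
\]

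(1)$\Rightarrow$(2): Let $\mathcal A$ be an optimal anticode with ${\rm maxsrk}(\mathcal A)\le \sum_{i=1}^{j-1}n_i+\delta$, and let $M\in C\cap\mathcal A$ be nonzero. Since $\boldsymbol 0\in\mathcal A$ and $\mathcal A$ is linear, $wt_{sr}(M)=d_{sr}(M,\boldsymbol 0)\le{\rm maxsrk}(\mathcal A)<d_{sr}(C)$. Since $C$ is linear, $d_{sr}(C)=wt_{sr}(C)\le wt_{sr}(M)$, a contradiction. Hence $C\cap\mathcal A=\{\boldsymbol 0\}$.

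(2)$\Rightarrow$(1): We argue by contraposition. Suppose $C$ contains a nonzero $M=(M_1,\ldots,M_t)$ with $wt_{sr}(M)=r_1+\cdots+r_t\le\sum_{i=1}^{j-1}n_i+\delta$, where $r_i=\rank(M_i)$. We must build an optimal anticode $\mathcal A=\mathcal A_1\times\cdots\times\mathcal A_t$ containing $M$ whose maxsrk is at most this bound. The natural candidate, by analogy with the rank-metric case, is to take $\mathcal A_i$ to be all $m_i\times n_i$ matrices whose row space lies in the row space $U_i\subseteq\mathbb F_q^{n_i}$ of $M_i$. Here $\dim U_i=r_i$, so $\mathcal A_i$ has dimension $m_ir_i$ and every element has rank at most $r_i$. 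Then $M\in\mathcal A$, ${\rm maxsrk}(\mathcal A)=\sum_i r_i=wt_{sr}(M)$, and $\dim\mathcal A=\sum_i m_ir_i$. This attains the anticode bound $\dim(\mathcal A)\le\max\{\sum_i m_i\rank(N_i)\}$, because each $N\in\mathcal A$ has $\rank(N_i)\le r_i$ and $M$ itself gives equality. Therefore $\mathcal A$ is an optimal anticode meeting $C$ nontrivially, contradicting (2). So $d_{sr}(C)\ge\sum_{i=1}^{j-1}n_i+\delta+1$, and combining this with the Singleton step gives (1).

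The main obstacle is checking that this product $\mathcal A$ really counts as an optimal anticode in the sense of the cited anticode bound and the definition of generalized weights. In particular, that definition requires each factor $\mathcal A_i\subseteq\mathbb F_q^{m_i\times n_i}$ to be an optimal anticode, and the sense of maxsrk must be the one intended. We verify that each $\mathcal A_i$ is optimal in the rank metric: it has dimension $m_ir_i$ and maximum rank $r_i$, and it is the standard row-space anticode, using $m_i\ge n_i$. We also verify that the product is optimal in the sum-rank sense, as above. If the paper's convention measures ${\rm maxsrk}$ via $\sum_i m_i\rank$ when the $m_i$ differ, we would instead compare through the defining inequality directly. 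The bound on maxsrk is in any case the plain sum of ranks, which is what we need.
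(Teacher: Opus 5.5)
Your proposal is correct and follows essentially the same route as the paper. For (1)$\Rightarrow$(2) you compare $wt_{sr}(M)\geq d_{sr}(C)>{\rm maxsrk}(\mathcal{A})$; for (2)$\Rightarrow$(1) you place any nonzero $M\in C$ in a product of optimal rank-metric anticodes $\mathcal{A}_1\times\cdots\times\mathcal{A}_t$ with ${\rm maxsrk}(\mathcal{A})=wt_{sr}(M)$ and use the Singleton bound for the reverse inequality. The paper only asserts that these anticodes exist, whereas you construct them explicitly as the row-space anticodes $\{N : {\rm rowsp}(N)\subseteq{\rm rowsp}(M_i)\}$ and use $m_i\geq n_i$ to check their optimality.
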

\begin{proof}
  If $C$ is QMSRD, then $\dim(C)=\sum_{i=j}^{t}m_{i}n_{i}-m_{j}\delta-s$ and $d_{sr}(C)=\sum\limits_{i=1}^{j-1}n_{i}+\delta+1$, where $j$, $\delta$ and $s$ are integers such that $j\in[t]$, $0\leq\delta\leq n_{j}-1$ and $0<s\leq m_{j}-1$. 
  Let $\mathcal{A}$ be an optimal anticode with ${\rm maxsrk}(\mathcal{A})\leq \sum\limits_{i=1}^{j-1}n_{i}+\delta=d_{sr}(C)-1$. 
  For every $\boldsymbol{0}\neq M\in C$, one has 
  $wt_{sr}(M)\geq d_{sr}(C)>{\rm maxsrk}(\mathcal{A})$, which implies that $M\notin\mathcal{A}$.
  Thus, $C\cap\mathcal{A}=\{\boldsymbol{0}\}$.
  
  Suppose $C$ satisfies $(2)$, by the Singleton bound, $d_{sr}(C)\leq\sum\limits_{i=1}^{j-1}n_{i}+\delta+1$. Let $M=(M_{1},\ldots,M_{t})\in C$. 
  For each $i\in[t]$, there is an optimal rank-metric anticode $\mathcal{A}_{i}\subseteq\mathbb{F}_{q}^{(m_{i},n_{i})}$ with $\dim(\mathcal{A}_{i})=m_{i}{\rm rank}(M_{i})$ which contains $M_{i}$. 
  Therefore, $\mathcal{A}=\mathcal{A}_{1}\times\cdots\times\mathcal{A}_{t}$ is an optimal anticode with ${\rm maxsrk}(\mathcal{A})=wt_{sr}(M)$ which contains $M$.
  As $C\cap\mathcal{A}\neq\{\boldsymbol{0}\}$, due to $(2)$, ${\rm maxsrk}(\mathcal{A})=wt_{sr}(M)\geq \sum\limits_{i=1}^{j-1}n_{i}+\delta+1$.
  Hence, $d_{sr}(C)=\sum\limits_{i=1}^{j-1}n_{i}+\delta+1$ and $C$ is a QMSRD code.
   \end{proof}

In the following, using the isomorphism $Mat_{\mathcal{B}}$ mentioned in Section 2, we derive a proposition which connects the Hamming distance and the sum-rank distance. Let $C\subseteq\mathbb{F}_{q^m}^{n}$ and $A\in\mathbb{F}_{q}^{(n,n)}$, $CA=\{\boldsymbol{c}A\,:\,\boldsymbol{c}\in C\}$. 
If $A$ is invertible, it is obvious that $|C|=|CA|$. Also, we have a known result from \cite{MSK} as below.
\begin{lemma}(\!\cite[Corollary 1.3]{MSK})\label{Lemma 3.6}
  Let $C\subseteq\mathbb{F}_{q^m}^{n}$ and $\mathcal{B}$ be a basis of $\mathbb{F}_{q^m}$ over $\mathbb{F}_{q}$. Then 
  \begin{equation*}
   \begin{aligned}
  d_{sr}(Mat_{\mathcal{B}}(C))= & \min\{d_{H}(CA)\,:\,A=diag(A_{1},\ldots,A_{t})\in\mathbb{F}_{q}^{(n,n)}\\ & \text{and } A_{i}\in\mathbb{F}_{q}^{(n_{i},n_{i})} \text{ is invertible },\,i\in[t]\}.
  \end{aligned}
  \end{equation*}
\end{lemma}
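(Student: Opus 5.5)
The plan is to reduce the statement to a blockwise fact about a single vector, and then pass from vectors to the code by taking minima over pairs of codewords. Note that $C$ need not be linear, so everything is phrased in terms of differences $\boldsymbol{c}-\boldsymbol{c}'$ of distinct codewords rather than nonzero codewords.

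\emph{Step 1 (compatibility of $Mat_{\mathcal{B}}$ with $\mathbb{F}_q$-column operations).} For $\boldsymbol{x}^{(i)}\in\mathbb{F}_{q^m}^{n_i}$ and $A_i\in\mathbb{F}_q^{(n_i,n_i)}$, we have $Mat^{(i)}_{\mathcal{B}}(\boldsymbol{x}^{(i)}A_i)=Mat^{(i)}_{\mathcal{B}}(\boldsymbol{x}^{(i)})A_i$. The reason is that each entry of $\boldsymbol{x}^{(i)}A_i$ is an $\mathbb{F}_q$-linear combination of the entries of $\boldsymbol{x}^{(i)}$, and expansion in the basis $\mathcal{B}$ is $\mathbb{F}_q$-linear. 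Hence, for a block-diagonal $A=diag(A_1,\ldots,A_t)$, we get $Mat_{\mathcal{B}}(\boldsymbol{x}A)=(M^{(1)}A_1,\ldots,M^{(t)}A_t)$. If every $A_i$ is invertible, this preserves each block rank, so $wt_{sr}(Mat_{\mathcal{B}}(\boldsymbol{x}A))=wt_{sr}(Mat_{\mathcal{B}}(\boldsymbol{x}))$.

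\emph{Step 2 (Hamming weight versus rank, blockwise).} An entry of $\boldsymbol{x}^{(i)}$ is zero exactly when the corresponding column of $M^{(i)}$ is zero. Therefore $wt_H(\boldsymbol{x}^{(i)})$ equals the number of nonzero columns of $M^{(i)}$, which is at least $\mathrm{rank}(M^{(i)})$. Combining this with Step 1 gives, for every admissible $A$,
$$wt_H(\boldsymbol{x}A)\ \ge\ \sum_i \mathrm{rank}(M^{(i)}A_i)\ =\ wt_{sr}(Mat_{\mathcal{B}}(\boldsymbol{x})).$$
Conversely, for each $i$ we can choose an invertible $A_i$ that performs column reduction, bringing $M^{(i)}A_i$ to column echelon form. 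Such a matrix has exactly $\mathrm{rank}(M^{(i)})$ nonzero columns. With this choice, equality $wt_H(\boldsymbol{x}A)=wt_{sr}(Mat_{\mathcal{B}}(\boldsymbol{x}))$ holds.

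\emph{Step 3 (passing to codes).} Fix an admissible $A$. For distinct $\boldsymbol{c},\boldsymbol{c}'\in C$ we have $d_H(\boldsymbol{c}A,\boldsymbol{c}'A)=wt_H((\boldsymbol{c}-\boldsymbol{c}')A)$. By Step 2 this is at least $d_{sr}(Mat_{\mathcal{B}}(\boldsymbol{c}),Mat_{\mathcal{B}}(\boldsymbol{c}'))\ge d_{sr}(Mat_{\mathcal{B}}(C))$, using that $Mat_{\mathcal{B}}$ is $\mathbb{F}_q$-linear. Since $A$ is invertible, distinct codewords stay distinct in $CA$, so $d_H(CA)\ge d_{sr}(Mat_{\mathcal{B}}(C))$ for every admissible $A$. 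For the reverse inequality, pick a pair $\boldsymbol{c}\neq\boldsymbol{c}'$ attaining $d_{sr}(Mat_{\mathcal{B}}(C))$. Apply Step 2 to $\boldsymbol{x}=\boldsymbol{c}-\boldsymbol{c}'$ to obtain an $A$ with $wt_H(\boldsymbol{x}A)$ equal to that value; then $d_H(CA)\le d_{sr}(Mat_{\mathcal{B}}(C))$. Hence the minimum over admissible $A$ is attained and equals $d_{sr}(Mat_{\mathcal{B}}(C))$.

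There is no real obstacle here. The only point needing care is Step 1: the $A_i$ must have entries in $\mathbb{F}_q$, not $\mathbb{F}_{q^m}$, so that right multiplication commutes with $Mat_{\mathcal{B}}$. Block-diagonality is needed so that column operations never mix different blocks.
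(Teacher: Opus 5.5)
Your proof is correct and complete. The paper does not prove this lemma; it simply imports it as \cite[Corollary 1.3]{MSK}, so there is no in-paper argument to compare against.

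Your route is the natural one. You first establish the vector-level identity $wt_{sr}(Mat_{\mathcal{B}}(\boldsymbol{x}))=\min_{A} wt_H(\boldsymbol{x}A)$. The inequality uses $Mat_{\mathcal{B}}(\boldsymbol{x}A)=Mat_{\mathcal{B}}(\boldsymbol{x})A$ for $\mathbb{F}_q$-matrices together with ``nonzero columns $\geq$ rank'', and equality is attained by column echelon form. You then pass to codes through differences of distinct codewords. That step also correctly covers non-linear $C$, which is more than the paper needs, since it only applies the lemma to $\mathbb{F}_q$-linear codes in Proposition~\ref{Proposition 3.7}.
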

\begin{proposition}\label{Proposition 3.7}
  Let $C\subseteq\mathbb{F}_{q^m}^{n}$ be an $\mathbb{F}_{q}$-linear code and $\mathcal{B}$ be a basis of $\mathbb{F}_{q^m}$ over $\mathbb{F}_{q}$. Then the following are equivalent:
  \begin{itemize}
    \item[(1)] $Mat_{\mathcal{B}}(C)\subseteq\mathbb{F}_{q}^{(m,n_{1}),\ldots,(m,n_{t})}$ is an MSRD code or a QMSRD code.
    \item[(2)] $q^{m(n-d_{H}(CA))}<|CA|\leq q^{m(n-d_{H}(CA)+1)}$ for all $A=diag(A_{1},\ldots,A_{t})\in\mathbb{F}_{q}^{(n,n)}$ such that $A_{i}\in\mathbb{F}_{q}^{(n_{i},n_{i})}$ is invertible, $i\in[t]$.
  \end{itemize}
\end{proposition}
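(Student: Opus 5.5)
The proof reduces to the equal-row-size case. Here all $m_i=m$, so by the Remark following the definition of QMSRD codes, $Mat_{\mathcal{B}}(C)$ is MSRD or QMSRD if and only if
$$d_{sr}(Mat_{\mathcal{B}}(C))=n-\left\lceil\frac{\dim_{\mathbb{F}_q}(C)}{m}\right\rceil+1.$$
Equivalently, with $k=\dim_{\mathbb{F}_q}(C)=\log_q|C|$, this says
$$m\bigl(n-d_{sr}(Mat_{\mathcal{B}}(C))\bigr)<k\leq m\bigl(n-d_{sr}(Mat_{\mathcal{B}}(C))+1\bigr).$$
I will also use three simple facts for every $A=diag(A_1,\ldots,A_t)$ with each $A_i$ invertible:
\begin{itemize}
\item $|CA|=|C|=q^k$, since $A$ is invertible;
\item $CA$ is $\mathbb{F}_q$-linear;
\item $Mat_{\mathcal{B}}(CA)$ has the same sum-rank distance as $Mat_{\mathcal{B}}(C)$, since right multiplication by an invertible block matrix preserves every block rank.
\end{itemize}
Lemma \ref{Lemma 3.6} then gives $d_H(CA)\geq d_{sr}(Mat_{\mathcal{B}}(C))$ for every admissible $A$, with equality for at least one such $A$.

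\textbf{(1)$\Rightarrow$(2).} Fix an admissible $A$. The upper bound $|CA|\leq q^{m(n-d_H(CA)+1)}$ is the classical Singleton bound for the $\mathbb{F}_q$-linear code $CA\subseteq\mathbb{F}_{q^m}^n$ in the Hamming metric. Alternatively, it follows from Theorem \ref{Theorem 2.5} applied with all $n_i=1$. For the strict lower bound, set $d=d_{sr}(Mat_{\mathcal{B}}(C))$. Assumption (1) gives $q^{m(n-d)}<q^k=|CA|$. Since $d_H(CA)\geq d$, we have $q^{m(n-d_H(CA))}\leq q^{m(n-d)}$, and the lower bound follows.

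\textbf{(2)$\Rightarrow$(1).} Choose $A$ attaining the minimum in Lemma \ref{Lemma 3.6}, so that $d_H(CA)=d$. Substituting $|CA|=q^k$ into (2) gives $m(n-d)<k\leq m(n-d+1)$. This is exactly the characterization of being MSRD or QMSRD stated above.

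The only step needing care is the claim that the sum-rank Singleton-type bound yields the Hamming bound $|CA|\leq q^{m(n-d_H(CA)+1)}$ for merely $\mathbb{F}_q$-linear codes. I would justify it directly by a projection argument: puncturing any $d_H(CA)-1$ coordinates is injective on $CA$, so $|CA|\leq (q^m)^{\,n-d_H(CA)+1}$. This avoids any dependence on $\mathbb{F}_{q^m}$-linearity. In the direction (1)$\Rightarrow$(2) we could invoke this for every $A$, but in fact only the lower bound uses (1); the upper bound holds unconditionally.
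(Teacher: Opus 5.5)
Your proposal is correct and follows essentially the same route as the paper's proof: both directions rest on Lemma \ref{Lemma 3.6} (giving $d_H(CA)\geq d_{sr}(Mat_{\mathcal{B}}(C))$ for every admissible $A$, with equality for some $A$), on $|CA|=|C|$, and on the Hamming-metric Singleton bound. Your puncturing justification of that bound for merely $\mathbb{F}_q$-linear codes, and your remark that the upper bound needs no hypothesis, make explicit what the paper leaves implicit; your third bullet about $Mat_{\mathcal{B}}(CA)$ is true but never used.
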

\begin{proof}
  Suppose that $Mat_{\mathcal{B}}(C)$ is an MSRD code or a QMSRD code, then for all $A=diag(A_{1},\ldots,A_{t})\in\mathbb{F}_{q}^{(n,n)}$ such that $A_{i}\in\mathbb{F}_{q}^{(n_{i},n_{i})}$ is invertible, $i\in[t]$, we have
  $$|CA|=|C|=|Mat_{\mathcal{B}}(C)|>q^{m(n-d_{sr}(Mat_{\mathcal{B}}(C)))}.$$
  Due to Lemma \ref{Lemma 3.6}, 
  \begin{equation*}
   \begin{aligned}
  d_{sr}(Mat_{\mathcal{B}}(C))= & \min\{d_{H}(CA)\,:\,A=diag(A_{1},\ldots,A_{t})\in\mathbb{F}_{q}^{(n,n)}\\  & \text{and }A_{i}\in\mathbb{F}_{q}^{(n_{i},n_{i})} \text{ is invertible },\,i\in[t]\}.
  \end{aligned}
  \end{equation*}
  Hence, $$d_{sr}(Mat_{\mathcal{B}}(C))\leq d_{H}(CA).$$
  Combining the two inequalities above with the Singleton bound in the Hamming metric, $$q^{m(n-d_{H}(CA))}<|CA|\leq q^{m(n-d_{H}(CA)+1)}.$$
  
  Conversely, by Lemma \ref{Lemma 3.6}, there exists a matrix $A=diag(A_{1},\ldots,A_{t})\in\mathbb{F}_{q}^{(n,n)}$ where $A_{i}\in\mathbb{F}_{q}^{(n_{i},n_{i})}$ is invertible, $i\in[t]$ such that $d_{sr}(Mat_{\mathcal{B}}(C))=d_{H}(CA)$.
  Since $q^{m(n-d_{H}(CA))}<|CA|\leq q^{m(n-d_{H}(CA)+1)}$, $$q^{m(n-d_{sr}(Mat_{\mathcal{B}}(C)))}<|CA|=|Mat_{\mathcal{B}}(C)|\leq q^{m(n-d_{sr}(Mat_{\mathcal{B}}(C))+1)},$$
  which implies that $Mat_{\mathcal{B}}(C)$ is QMSRD or MSRD.
\end{proof}

In the case where $m_{1}=\cdots=m_{t}=m$ and $n_{1}=\cdots=n_{t}=n$, we can give an upper bound on the block length $t$ of QMSRD codes with Theorem \ref{Theorem 2.7}.
\begin{theorem}\label{Theorem 3.8}
  Suppose there exists a QMSRD code $C\subseteq\mathbb{F}_{q}^{(m,n),\ldots,(m,n)}$ with block length $t$, minimum sum-rank distance $d_{sr}(C)=d\geq 3$ and dimension $\dim(C)=\alpha m+\beta$, where $\alpha$, $\beta$ are integers such that $\alpha=tn-d$ and $1\leq\beta\leq m-1$. Then 
  \begin{equation*}
   \begin{aligned}
  t & \leq \left\lfloor\frac{(q-1)(q^{3m-\beta}-1)+(q^{n}-q^{n\lfloor(d-3)/n\rfloor+n-d+3})(q^{m}-1)}{(q^{n}-1)(q^{m}-1)}\right\rfloor
  +\left\lfloor\frac{d-3}{n}\right\rfloor\\
  & \leq\left\lfloor\frac{(q-1)(q^{3m-\beta}-1)}{(q^{n}-1)(q^{m}-1)}\right\rfloor+1+\left\lfloor\frac{d-3}{n}\right\rfloor.
  \end{aligned}
  \end{equation*}
  Furthermore, 
  \begin{itemize}
    \item[(1)] If $d-3$ is divisible by $n$, then $t\leq\left\lfloor\frac{(q-1)(q^{3m-\beta}-1)}{(q^{n}-1)(q^{m}-1)}\right\rfloor+\frac{d-3}{n}$.
    \item[(2)] If $d\leq n+2$, then $t\leq \left\lfloor\frac{(q-1)(q^{3m-\beta}-1)+(q^{n}-q^{n-d+3})(q^{m}-1)}{(q^{n}-1)(q^{m}-1)}\right\rfloor\leq 1+\left\lfloor\frac{(q-1)(q^{3m-\beta}-1)}{(q^{n}-1)(q^{m}-1)}\right\rfloor$. If in addition $n=m$, then $t\leq 1+\left\lfloor\frac{(q-1)(q^{2n}+q^{n}+1)}{q^{n}-1}\right\rfloor$.
  \end{itemize}
\end{theorem}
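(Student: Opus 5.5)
The plan is to apply the projective sphere-packing bound, Theorem \ref{Theorem 2.7}, with all blocks equal to $(m,n)$, and solve the resulting inequality for $t$. Write $d-3=ln+\delta$ with $l=\lfloor (d-3)/n\rfloor$ and $0\leq\delta\leq n-1$. Since $d\leq tn$, we have $d-3<tn$, so $l\leq t-1$.

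Theorem \ref{Theorem 2.7} formally requires $l\in[t-1]$. When $l=0$, i.e. $d\leq n+2$, we instead read the statement with $l=0$: the same puncturing/sphere-packing argument applies verbatim, with the first $\delta$ columns of the first block removed. I expect this to be the only delicate point.

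The bound reads $q^{\alpha m+\beta}=|C|\leq q^{m(tn-ln-\delta)}/V_1'$, hence $V_1'\leq q^{m(tn-ln-\delta)-m(tn-d)-\beta}$. Since $d-ln-\delta=3$, the exponent simplifies to $3m-\beta$. Substituting the explicit volume gives
$$1+\frac{(q^{n-\delta}-1)(q^m-1)}{q-1}+(t-l-1)\frac{(q^n-1)(q^m-1)}{q-1}\leq q^{3m-\beta}.$$
Multiplying by $q-1$ and isolating $t$, we get
$$t-l\leq \frac{(q-1)(q^{3m-\beta}-1)+\bigl((q^n-1)-(q^{n-\delta}-1)\bigr)(q^m-1)}{(q^n-1)(q^m-1)}.$$
Because $n-\delta=n-(d-3)+ln=nl+n-d+3$, this numerator is exactly the one in the statement. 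Since $t-l$ is an integer, we may take the floor, which yields the first inequality.

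For the second inequality, we note that $q^n-q^{n-\delta}\leq q^n-1$. The extra term therefore contributes at most $1$ to the quotient, and $\lfloor x+y\rfloor\leq\lfloor x\rfloor+1$ when $0\leq y\leq 1$.

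The special cases then follow directly. For (1), $d-3$ divisible by $n$ means $\delta=0$, so the extra term vanishes and $l=(d-3)/n$. For (2), $d\leq n+2$ gives $l=0$ and $\delta=d-3$, so $q^{n-\delta}=q^{n-d+3}$, and the second bound follows as before. If moreover $n=m$, we use $\beta\geq1$, so $q^{3n-\beta}-1\leq q^{3n}-1=(q^n-1)(q^{2n}+q^n+1)$. Cancelling one factor $q^n-1$ then gives $t\leq 1+\lfloor (q-1)(q^{2n}+q^n+1)/(q^n-1)\rfloor$.
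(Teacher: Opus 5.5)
Your proposal is correct and follows the paper's proof essentially step for step: you write $d-3=ln+\delta$, apply the projective sphere-packing bound (Theorem \ref{Theorem 2.7}) with exponent $m(tn-d+3)$, solve for $t-l$ and take floors, and read off the special cases from $\delta=0$ and $l=0$. You also make two things explicit that the paper leaves implicit or garbles: the case $l=0$ (i.e.\ $d\leq n+2$) needs Theorem \ref{Theorem 2.7} with $l=0$ allowed, which the paper uses without comment, and your identity $n-\delta=nl+n-d+3$ is the correct version of the expression for $\delta$ in the paper's proof.
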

\begin{proof}
  Let $d-3=ln+\delta$ with $0\leq \delta<n$. Thus, $l=\left\lfloor\frac{d-3}{n}\right\rfloor$. Then by Theorem \ref{Theorem 2.7}, 
  $$q^{m(tn-d)+\beta}\left[1+\frac{q^{m}-1}{q-1}\left((t-l-1)(q^{n}-1)+q^{n-\delta}-1\right)\right]\leq q^{m(tn-d+3)},$$
  which in turn is equivalent to 
  $$t\leq\frac{(q^{3m-\beta}-1)(q-1)+(q^{n}-q^{n-\delta})(q^{m}-1)}{(q^{m}-1)(q^{n}-1)}+l.$$
  Since $\delta=d-3-\lfloor(d-3)/n\rfloor$ and $n-\delta\geq 1$,
  \begin{equation*}
   \begin{aligned}
  t & \leq \left\lfloor\frac{(q-1)(q^{3m-\beta}-1)+(q^{n}-q^{n\lfloor(d-3)/n\rfloor+n-d+3})(q^{m}-1)}{(q^{n}-1)(q^{m}-1)}\right\rfloor
  +\left\lfloor\frac{d-3}{n}\right\rfloor\\
  & \leq\left\lfloor\frac{(q-1)(q^{3m-\beta}-1)}{(q^{n}-1)(q^{m}-1)}\right\rfloor+1+\left\lfloor\frac{d-3}{n}\right\rfloor.
  \end{aligned}
  \end{equation*}
  \begin{itemize}
    \item[(1)] If $d-3$ is divisible by $n$, then $\delta=0$ and $\left\lfloor\frac{d-3}{n}\right\rfloor=\frac{d-3}{n}$. 
    Hence, $t\leq\left\lfloor\frac{(q-1)(q^{3m-\beta}-1)}{(q^{n}-1)(q^{m}-1)}\right\rfloor+\frac{d-3}{n}$.
    \item[(2)] If $d\leq n+2$, then $\left\lfloor\frac{d-3}{n}\right\rfloor=0$. 
    Therefore, $t\leq \left\lfloor\frac{(q-1)(q^{3m-\beta}-1)+(q^{n}-q^{n-d+3})(q^{m}-1)}{(q^{n}-1)(q^{m}-1)}\right\rfloor$\\$\leq 1+\left\lfloor\frac{(q-1)(q^{3m-\beta}-1)}{(q^{n}-1)(q^{m}-1)}\right\rfloor$. 
    If we also let $n=m$, then $t\leq 1+\left\lfloor\frac{(q-1)(q^{3n-\beta}-1)}{(q^{n}-1)^2}\right\rfloor
    \leq 1+\left\lfloor\frac{(q-1)(q^{3n}-1)}{(q^{n}-1)^2}\right\rfloor=1+\left\lfloor\frac{(q-1)(q^{2n}+q^{n}+1)}{q^{n}-1}\right\rfloor$.
  \end{itemize}
\end{proof}
\begin{remark}
  {\rm Note that this theorem can be regarded as a generalization of \cite[Theorem 7]{MR}, which provided an upper bound of the folded length of a linear code $C$ in the folded Hamming distance. Let $n=1$, we can derive \cite[Theorem 7]{MR} from this theorem.}
\end{remark}

\begin{example}
In \cite[Theorem 11]{LCJXZ}, the authors constructed a linear sum-rank-metric code $C\subseteq\mathbb{F}_{3}^{(2,2),\ldots,(2,2)}$ with block length $3^2$, minimum sum-rank distance $d_{sr}(C)=4$ and dimension $\dim(C)=2(2\cdot 3^2 -4)=2(2\cdot 3^2 -d_{sr}(C))=28$. However, there exists no QMSRD code $C\subseteq\mathbb{F}_{q}^{(2,2),\ldots,(2,2)}$ with block length $3^2$, minimum sum-rank distance $d_{sr}(C)=4$ and dimension $\dim(C)=2(2\cdot 3^2 -d_{sr}(C)+1)-1=29$. Suppose that there exists a linear sum-rank-metric code $C\subseteq\mathbb{F}_{3}^{(2,2),\ldots,(2,2)}$ with block length $t$, minimum sum-rank distance $d_{sr}(C)=4$ and dimension $\dim(C)=2(2t-d_{sr}(C)+1)-1$.  By Theorem \ref{Theorem 3.8}, $t\leq 1+\left\lfloor\frac{(3-1)(3^{3\cdot 2-1}-1)}{(3^{2}-1)^2}\right\rfloor=8$.  
\end{example}

Similar to Hamming-metric codes, we consider puncturing sum-rank-metric codes. By removing some columns of a known QMSRD codes, we obtain some new QMSRD codes.
\begin{theorem}\label{Theorem 3.11}
  Let $C\subseteq\mathbb{F}_{q}^{(m_{1},n_{1}),\ldots,(m_{t},n_{t})}$ be a QMSRD code. Its minimum sum-rank distance and dimension are given as follows: 
  $$d_{sr}(C)=\sum\limits_{i=1}^{j-1} n_{i}+\delta+1\geq 2, \quad \dim(C)=\sum\limits_{i=j}^{t} m_{i}n_{i}-\delta m_{j}-s,$$
  where $j\in[t]$, $0\leq \delta\leq n_{j}-1$ and $0< s\leq m_{j}-1$ are integers. If $\delta\neq 0$, choose $r\in[j]$ and if $\delta=0$, choose $r\in[j-1]$. 
  Set $$\tilde{n_{i}}=\begin{cases}
          n_{i}, & \mbox{if } i\neq r \\
          n_{r}-1, & \mbox{if } i=r.
        \end{cases}$$
Then we have a QMSRD code $C^{\prime}\subseteq\mathbb{F}_{q}^{(m_{1},\tilde{n_{1}}),\ldots,(m_{t},\tilde{n_{t}})}$, which has minimum sum-rank distance and dimension
$$d_{sr}(C^{\prime})=d_{sr}(C)-1, \quad \dim(C^{\prime})=\dim(C).$$
\end{theorem}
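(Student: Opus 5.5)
We construct $C^{\prime}$ by puncturing: delete the last column of the $r$-th block of every codeword. Let $\pi:\mathbb{F}_{q}^{(m_{1},n_{1}),\ldots,(m_{t},n_{t})}\to\mathbb{F}_{q}^{(m_{1},\tilde{n_{1}}),\ldots,(m_{t},\tilde{n_{t}})}$ remove that column, and set $C^{\prime}=\pi(C)$. The map $\pi$ is $\mathbb{F}_{q}$-linear, so $C^{\prime}$ is a linear sum-rank-metric code. The plan has three steps: show $\pi|_{C}$ is injective, which gives the dimension; show the distance drops by at most one; and use the Singleton bound (Theorem \ref{Theorem 2.5}) to pin the distance exactly. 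Deleting one column lowers the rank of a matrix by at most one and leaves every other block unchanged. Hence $wt_{sr}(\pi(M))\geq wt_{sr}(M)-1$ for every $M$.

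\textbf{Injectivity and lower bound.} If $0\neq M\in C$, then $wt_{sr}(\pi(M))\geq d_{sr}(C)-1\geq 1$, using the hypothesis $d_{sr}(C)\geq 2$. So $\ker\pi\cap C=\{\boldsymbol{0}\}$, which gives $\dim(C^{\prime})=\dim(C)$ and $d_{sr}(C^{\prime})\geq d_{sr}(C)-1$.

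\textbf{Rewriting the parameters.} We express both quantities in the new block sizes $\tilde{n_{i}}$ and split by $r$.
\begin{itemize}
\item[(a)] Case $r<j$. Then $\sum_{i=1}^{j-1}\tilde{n_{i}}=\sum_{i=1}^{j-1}n_{i}-1$ and $\tilde{n_{i}}=n_{i}$ for $i\geq j$. Thus
$$d_{sr}(C)-1=\sum_{i=1}^{j-1}\tilde{n_{i}}+\delta+1,\qquad \dim(C^{\prime})=\sum_{i=j}^{t}m_{i}\tilde{n_{i}}-\delta m_{j}-s,$$
with the same $j,\delta,s$ and $0\leq\delta\leq\tilde{n_{j}}-1$. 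This case is always available, and it is the only option when $\delta=0$, since then $r\in[j-1]$.
\item[(b)] Case $r=j$, which requires $\delta\geq 1$. Put $\tilde\delta=\delta-1$. Then $0\leq\tilde\delta\leq n_{j}-2=\tilde{n_{j}}-1$, and
$$d_{sr}(C)-1=\sum_{i=1}^{j-1}\tilde{n_{i}}+\tilde\delta+1,\qquad \sum_{i=j}^{t}m_{i}\tilde{n_{i}}-\tilde\delta m_{j}-s=\sum_{i=j}^{t}m_{i}n_{i}-m_{j}-(\delta-1)m_{j}-s=\dim(C).$$
\end{itemize}
The restriction $\delta\neq0$ in case (b) is needed precisely so that $\tilde\delta\geq 0$ is a legitimate index.

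\textbf{Upper bound and conclusion.} In either case, suppose $d_{sr}(C^{\prime})\geq d_{sr}(C)$. The target distance is $d_{sr}(C)-1$, written as $\sum_{i=1}^{j-1}\tilde{n_{i}}+\delta^{\ast}+1$ with $\delta^{\ast}=\delta$ in case (a) and $\delta^{\ast}=\tilde\delta$ in case (b). Applying Theorem \ref{Theorem 2.5} in the punctured space with distance one larger than this target gives
$$\dim(C^{\prime})\leq\sum_{i=j}^{t}m_{i}\tilde{n_{i}}-(\delta^{\ast}+1)m_{j}.$$
If $\delta^{\ast}+1=\tilde{n_{j}}$, the index passes to block $j+1$, and the bound is the same quantity rewritten. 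This contradicts $\dim(C^{\prime})=\sum_{i=j}^{t}m_{i}\tilde{n_{i}}-\delta^{\ast}m_{j}-s$, because $0<s<m_{j}$. Hence $d_{sr}(C^{\prime})=d_{sr}(C)-1$, and the displayed parameters show that $C^{\prime}$ is QMSRD.

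The main obstacle is this last step: the index bookkeeping in Theorem \ref{Theorem 2.5} when $\delta^{\ast}+1$ equals $\tilde{n_{j}}$, or when $\tilde{n_{j}}$ becomes $0$. I would handle it by observing, exactly as in Proposition \ref{Proposition 3.3}, that the Singleton quantity $\sum_{i=j}^{t}m_{i}\tilde{n_{i}}-m_{j}x$ remains valid at the boundary $x=\tilde{n_{j}}$, where it equals $\sum_{i=j+1}^{t}m_{i}\tilde{n_{i}}$.
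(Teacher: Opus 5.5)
Your proposal is correct and follows the paper's proof. It uses the same map $\pi$ deleting the last column of block $r$, the same injectivity argument from $d_{sr}(C)\geq 2$, and the same rewriting of the parameters in the cases $r<j$ and $r=j$. The paper merely asserts $d_{sr}(C^{\prime})=d_{sr}(C)-1$. Your lower bound (deleting a column lowers a rank by at most one) and your Singleton-bound contradiction, modeled on Proposition \ref{Proposition 3.3}, supply that missing justification. Your worry that $\tilde{n_{j}}$ might become $0$ is vacuous: $\tilde{n_{j}}=n_{j}$ in case (a), and $\delta\geq 1$ forces $n_{j}\geq 2$ in case (b).
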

\begin{proof}
  Set the map 
  $$\pi\,:\, C\rightarrow\mathbb{F}_{q}^{(m_{1},\tilde{n_{1}}),\ldots,(m_{t},\tilde{n_{t}})},\quad (M_{1},\ldots,M_{t})\mapsto(M_{1},\ldots,\widetilde{M_{r}},\ldots,M_{t}),$$
  where $\widetilde{M_{r}}\in\mathbb{F}_{q}^{(m_{r},\tilde{n_{r}})}=\mathbb{F}_{q}^{(m_{r},n_{r}-1)}$ is obtained by deleing the last column of $M_{r}$. 
  Since $d_{sr}(C)\geq 2$, the map is injective. Let $C^{\prime}=\pi(C)$. If $r\in[j-1]$, 
  \begin{equation*}
    \begin{aligned}
    \dim(C^{\prime}) & =\dim(C)=\sum\limits_{i=j}^{t}m_{i}n_{i}-\delta m_{j}-s=\sum\limits_{i=j}^{t}m_{i}\tilde{n_{i}}-\delta m_{j}-s,\\
    d_{sr}(C^{\prime}) & =d_{sr}(C)-1=\sum\limits_{i=1}^{j-1} n_{i}+\delta=\sum\limits_{i=1}^{j-1} \tilde{n_{i}}+\delta+1.
    \end{aligned}
  \end{equation*}
Thus, $C^{\prime}$ is a QMSRD code. If $r=j$, then $\delta>0$. At this case, 
\begin{equation*}
    \begin{aligned}
    \dim(C^{\prime}) & =\dim(C)=\sum\limits_{i=j}^{t}m_{i}n_{i}-\delta m_{j}-s=\sum\limits_{i=j}^{t}m_{i}\tilde{n_{i}}-(\delta-1)m_{j}-s,\\
    d_{sr}(C^{\prime}) & =d_{sr}(C)-1=\sum\limits_{i=1}^{j-1} n_{i}+\delta=\sum\limits_{i=1}^{j-1} \tilde{n_{i}}+(\delta-1)+1.
    \end{aligned}
  \end{equation*}
Thus, $C^{\prime}$ is a QMSRD code as well.
\end{proof}

\section{Dually Quasi-MSRD Codes}
In this section, we consider the dual code of a QMSRD code. It is known that when $m_{1}=\cdots=m_{t}$, $C\subseteq\mathbb{F}_{q}^{(m_{1},n_{1}),\ldots,(m_{t},n_{t})}$ is an MSRD code if and only if its dual code $C^{\bot}$ is an MSRD code, 
but for arbitrary $m_{i}$s, the result is false in general, see \cite{BGR}. 
Unfortunately, even when $m_{1}=\cdots=m_{t}$, the dual code of a QMSRD code $C\subseteq\mathbb{F}_{q}^{(m_{1},n_{1}),\ldots,(m_{t},n_{t})}$ may be not a QMSRD code. The following is an example.
\begin{example}
  Let $m_{t}>1$ and $C\subseteq\mathbb{F}_{q}^{(m_{1},n_{1}),\ldots,(m_{t},n_{t})}$ be a linear sum-rank-metric code with dimension $1\leq\dim(C)<m_{t}$ and minimum sum-rank distance $d<n=n_{1}+\cdots+n_{t}$. Since $\dim(C)=m_{t}n_{t}-m_{t}(n_{t}-1)-s$ and $d_{sr}(C)<\sum\limits_{i=1}^{t-1}n_{i}+(n_{t}-1)+1=n$, $1\leq s\leq m_{t}-1$, $C$ is not a QMSRD code. 
  Since $1\leq\dim(C)<m_{t}\leq m_{1}$, $\dim(C^{\bot})=\sum\limits_{i=1}^{t}m_{i}n_{i}-s^{\prime}$, where $1\leq s^{\prime}\leq m_{1}-1$. 
   By the Singleton bound, $d_{sr}(C^{\bot})=1$. Hence, $C^{\bot}$ is a QMSRD code. 
\end{example}
Based on the fact above, we give the definition below.
\begin{definition}
  {\rm Let $C\subseteq\mathbb{F}_{q}^{(m_{1},n_{1}),\ldots,(m_{t},n_{t})}$ be a linear sum-rank-metric code. If both $C$ and $C^{\bot}$ are QMSRD, $C$ is a \emph{dually quasi-MSRD (dually QMSRD)} code.}
\end{definition}
In Example \ref{Example 3.4}, we find a dually QMSRD code. 

\subsection{\textbf{Characterization of Dually QMSRD Codes}}
In this subsection, we try to characterize dually QMSRD codes. Firstly, we recall two results about MSRD codes from \cite{CGLGMS}.
\begin{lemma}\cite[Proposition VII.11]{CGLGMS}\label{Lemma 4.3}
  Let $C\subseteq\mathbb{F}_{q}^{(m_{1},n_{1}),\ldots,(m_{t},n_{t})}$ be a non-trivial linear sum-rank-metric code and $n=n_{1}+\cdots+n_{t}$. 
  Then $d_{sr}(C)+d_{sr}(C^{\bot})=n+2$ if and only if both $C$ and $C^{\bot}$ are MSRD.
\end{lemma}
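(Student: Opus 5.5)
We have to prove Lemma \ref{Lemma 4.3}: for a non-trivial linear code $C$, $d_{sr}(C)+d_{sr}(C^{\bot})=n+2$ if and only if both $C$ and $C^{\bot}$ are MSRD. The plan is to reduce everything to a single sharp inequality, $d_{sr}(C)+d_{sr}(C^{\bot})\le n+2$, with equality forcing both Singleton bounds to be tight. The natural tool is the duality of the Singleton-type bounds: Theorem \ref{Theorem 2.5} bounds the dimension of $C$ from its distance, while Proposition \ref{Proposition 2.6} bounds the distance of $C^{\bot}$ from its dimension $\sum_i m_in_i-\dim(C)$. Fitting the two together gives the inequality.

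\textbf{Step 1: the inequality.} Write $d=d_{sr}(C)$ and $d-1=\sum_{i=1}^{j-1}n_i+\delta$ with $0\le\delta\le n_j-1$. Theorem \ref{Theorem 2.5} gives
$$\dim(C^{\bot})\ge \sum_{i=1}^{j-1}m_in_i+\delta m_j.$$
I split on the form of $\dim(C^{\bot})$ in Proposition \ref{Proposition 2.6}, whose bound is monotone non-increasing in the dimension. At the threshold value $\sum_{i<j}m_in_i+\delta m_j$ (with $s=0$) it gives $d_{sr}(C^{\bot})\le\sum_{i\ge j}n_i-\delta+1$. Hence
$$d+d_{sr}(C^{\bot})\le\sum_{i<j}n_i+\delta+1+\sum_{i\ge j}n_i-\delta+1=n+2.$$
Any strictly larger dimension of $C^{\bot}$ either has $s>0$ or moves $\delta$ up by one, and in both cases the bound on $d_{sr}(C^{\bot})$ drops by at least one. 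The main bookkeeping obstacle is checking this monotonicity across block boundaries, where $\delta=n_j-1$ rolls over to the next block; I would verify it directly from the formula in Proposition \ref{Proposition 2.6}.

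\textbf{Step 2: equality forces MSRD.} If $d+d_{sr}(C^{\bot})=n+2$, Step 1 shows $\dim(C^{\bot})$ must equal exactly the threshold $\sum_{i<j}m_in_i+\delta m_j$. Equivalently, $\dim(C)=\sum_{i\ge j}m_in_i-m_j\delta$, so $C$ is MSRD. For $C^{\bot}$, I run the same argument with the roles of $C$ and $C^{\bot}$ swapped, using $(C^{\bot})^{\bot}=C$. This is legitimate because $d$ and $d_{sr}(C^{\bot})$ enter symmetrically, so equality also forces $C^{\bot}$ to attain Theorem \ref{Theorem 2.5}.

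\textbf{Step 3: the converse.} Suppose both $C$ and $C^{\bot}$ are MSRD. From $C$ being MSRD we know $\dim(C^{\bot})=\sum_{i<j}m_in_i+\delta m_j$. Writing $d^{\bot}-1$ in block form, $C^{\bot}$ being MSRD pins $\dim(C^{\bot})$ as a function of $d^{\bot}$. Matching these two expressions should give $d^{\bot}=\sum_{i\ge j}n_i-\delta+1$, and hence the sum $n+2$.

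This matching is the step I expect to be genuinely delicate when the $m_i$ differ. The dimension-to-distance correspondence of the Singleton bound is reindexed from the opposite end ($m_1\ge\dots\ge m_t$), so one must check that the ordering makes the two decompositions coincide. I would first try to prove that the distance-to-dimension map of Theorem \ref{Theorem 2.5} is injective. If that route fails, the fallback is the same lattice comparison: use Proposition \ref{Proposition 2.6} for $C^{\bot}$ to get $d^{\bot}\le n+2-d$, and combine it with the MSRD dimension of $C^{\bot}$ to exclude strict inequality.
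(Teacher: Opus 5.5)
Your Steps 1 and 2 are correct. They establish $d_{sr}(C)+d_{sr}(C^{\bot})\le n+2$ and the ``only if'' direction. The block-boundary check is routine: the bound of Proposition~\ref{Proposition 2.6} equals $n+1-k$, where $k$ is the least number of columns, taken from the largest $m_i$ downward, whose $m$-values sum to at least $\dim(C^{\bot})$. That quantity is clearly non-increasing in the dimension.

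The gap is Step 3, and it cannot be closed, because the ``if'' direction is false when the $m_i$ differ. List the values $m_i$, each with multiplicity $n_i$, in non-increasing order.
\begin{itemize}
\item If $C$ is MSRD, then $\dim(C^{\bot})=\sum_{i<j}m_in_i+\delta m_j$ is the sum of the $d-1$ \emph{largest} entries.
\item If $C^{\bot}$ is MSRD, then $\dim(C^{\bot})$ is the sum of the $n-d_{sr}(C^{\bot})+1$ \emph{smallest} entries.
\end{itemize}
The distance-to-dimension map of Theorem~\ref{Theorem 2.5} is indeed injective, since it strictly decreases in $d$. But injectivity only pins down $d_{sr}(C^{\bot})$. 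It gives $d_{sr}(C^{\bot})=n+2-d$ only if the $d-1$ largest and the $d-1$ smallest entries have equal sums. For $1\le d-1\le n-1$, that forces $m_1=\cdots=m_t$. Your fallback via Proposition~\ref{Proposition 2.6} only yields $d_{sr}(C^{\bot})\le n+2-d$ and cannot exclude strict inequality.

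Here is a counterexample in $\mathbb{F}_{q}^{(4,1),(2,2)}$, for any $q$. Let $C=\{(\mathrm{vec}(X),X)\,:\,X\in\mathbb{F}_{q}^{(2,2)}\}$, where $\mathrm{vec}(X)\in\mathbb{F}_{q}^{(4,1)}$ lists the entries of $X$.
\begin{itemize}
\item Every nonzero codeword has sum-rank weight $1+\mathrm{rank}(X)\ge 2$, so $d_{sr}(C)=2$.
\item $\dim(C)=4=m_2n_2$, which is exactly the bound of Theorem~\ref{Theorem 2.5} for $d=2$ (with $j=2$, $\delta=0$). So $C$ is MSRD.
\item Since $\langle(\mathrm{vec}(X),X),(b,B)\rangle_{tr}=\mathrm{vec}(X)^{\top}(b+\mathrm{vec}(B))$, we get $C^{\bot}=\{(-\mathrm{vec}(B),B)\}$. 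This is MSRD with $d_{sr}(C^{\bot})=2$ for the same reason.
\end{itemize}
Yet $d_{sr}(C)+d_{sr}(C^{\bot})=4=n+1\neq n+2$.

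So the statement as written holds only in the ``only if'' direction, which your Steps 1--2 prove. The ``if'' direction needs the extra hypothesis $m_1=\cdots=m_t=m$. Under it the proof is one line:
$m(n-d_{sr}(C)+1)+m(n-d_{sr}(C^{\bot})+1)=\dim(C)+\dim(C^{\bot})=mn$.
Combined with Lemma~\ref{Lemma 4.4}, the correct equivalence is: $d_{sr}(C)+d_{sr}(C^{\bot})=n+2$ if and only if $m_1=\cdots=m_t$ and both $C$ and $C^{\bot}$ are MSRD.

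The paper gives no proof of this lemma, only the citation. It uses the ``if'' direction only in Proposition~\ref{Proposition 4.5}(1), where equal $m_i$ are assumed, so nothing downstream breaks.
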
 
\begin{lemma}\cite[Proposition VII.13]{CGLGMS}\label{Lemma 4.4}
If there exists a non-trivial linear sum-rank-metric code $C\subseteq\mathbb{F}_{q}^{(m_{1},n_{1}),\ldots,(m_{t},n_{t})}$ such that $n=n_{1}+\cdots+n_{t}$ and $d_{sr}(C)+d_{sr}(C^{\bot})=n+2$, then $m_{1}=\cdots=m_{t}$.
\end{lemma}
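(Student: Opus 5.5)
The approach is to reduce everything to a column-counting form of the Singleton bound (Theorem \ref{Theorem 2.5}) and then compare the two MSRD equalities for $C$ and $C^{\bot}$ given by Lemma \ref{Lemma 4.3}.

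First I will rewrite the Singleton bound column by column. List the $n$ columns of the ambient space in block order. Column $c$ lying in block $i$ gets the weight $\mu_c=m_i$. Because $m_1\geq\cdots\geq m_t$, the sequence $\mu_1\geq\mu_2\geq\cdots\geq\mu_n$ is nonincreasing. Write $d-1=\sum_{i<j}n_i+\delta$. The bound $\dim(C)\leq\sum_{i\geq j}m_in_i-m_j\delta$ then says exactly that $\dim(C)$ is at most the sum of the $n-d+1$ \emph{smallest} weights, namely $\mu_d+\cdots+\mu_n$. Equality holds precisely when $C$ is MSRD. I will also record that $\dim(C)+\dim(C^{\bot})=N:=\sum_i m_in_i=\sum_{c}\mu_c$.

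Now let $C$ be non-trivial with $d:=d_{sr}(C)$ and $d^{\prime}:=d_{sr}(C^{\bot})=n+2-d$. By Lemma \ref{Lemma 4.3}, both $C$ and $C^{\bot}$ are MSRD. Since $C^{\bot}$ is non-trivial, $d^{\prime}\leq n$, so $k:=d-1\geq 1$; symmetrically $d\leq n$. The MSRD equality for $C$ gives $\dim(C)=\mu_{k+1}+\cdots+\mu_n$. Hence
$$\dim(C^{\bot})=N-\dim(C)=\mu_1+\cdots+\mu_k,$$
the sum of the $k$ \emph{largest} weights. On the other hand, $n-d^{\prime}+1=d-1=k$, so the MSRD equality for $C^{\bot}$ gives $\dim(C^{\bot})=\mu_{n-k+1}+\cdots+\mu_n$, the sum of the $k$ \emph{smallest} weights.

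It remains to conclude from $\mu_1+\cdots+\mu_k=\mu_{n-k+1}+\cdots+\mu_n$, with $1\leq k\leq n$ and $\mu$ nonincreasing. The average of the top $k$ entries is at least the overall average, which in turn is at least the average of the bottom $k$ entries. Equality therefore forces the top-$k$ average to equal the overall average, which forces $\mu_1=\mu_n$. So all weights coincide and $m_1=\cdots=m_t$.

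The only step needing care is the first one: checking that the block formulation $\sum_{i\geq j}m_in_i-m_j\delta$ really equals $\mu_{d}+\cdots+\mu_n$, including the boundary cases $\delta=0$ and $j=t$. I expect this bookkeeping to be the main (mild) obstacle; the remaining steps are direct.
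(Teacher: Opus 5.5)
Your proof is correct. The paper gives no argument of its own for this lemma; it is quoted from \cite{CGLGMS} without proof, so there is no in-paper route to compare against. Your argument is a complete, self-contained justification. It has two steps: first, rewriting Theorem~\ref{Theorem 2.5} as $\dim(C)\le\mu_d+\cdots+\mu_n$; second, comparing the $k$ largest and the $k$ smallest column weights. The bookkeeping you flag is routine. The first $d-1$ columns are blocks $1,\dots,j-1$ together with $\delta$ columns of block $j$, so the remaining columns carry total weight $m_j(n_j-\delta)+\sum_{i>j}m_in_i$. This covers the cases $\delta=0$ and $j=t$ as well.

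Two small remarks. First, the final step genuinely needs $k\le n-1$, because for $k=n$ the identity $\mu_1+\cdots+\mu_k=\mu_{n-k+1}+\cdots+\mu_n$ is vacuous. You do have this from $d\le n$, so the range in your last paragraph should read $1\le k\le n-1$. With $0<k<n$, equality of the top-$k$ average with the overall average forces the average of the bottom $n-k$ weights to equal it too. The chain (top-$k$ average) $\ge\mu_k\ge\mu_{k+1}\ge$ (bottom average) then makes all weights equal.

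Second, Lemma~\ref{Lemma 4.3} is not needed. Add the two Singleton inequalities $\dim(C)\le\mu_{k+1}+\cdots+\mu_n$ and $\dim(C^{\bot})\le\mu_{n-k+1}+\cdots+\mu_n$, and use $\dim(C)+\dim(C^{\bot})=\sum_c\mu_c$. This gives $\mu_1+\cdots+\mu_k\le\mu_{n-k+1}+\cdots+\mu_n$ directly, and monotonicity gives the reverse inequality. Hence both bounds are tight and all weights coincide, using Theorem~\ref{Theorem 2.5} alone. This route is also safer. You only use the forward half of Lemma~\ref{Lemma 4.3}, and its converse as quoted fails when the $m_i$ differ. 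For example, the code $\{((a,b)^{\top},a,b)\,:\,a,b\in\mathbb{F}_2\}\subseteq\mathbb{F}_2^{(2,1),(1,1),(1,1)}$ is self-dual and MSRD with $d_{sr}(C)=d_{sr}(C^{\bot})=2$. Yet $d_{sr}(C)+d_{sr}(C^{\bot})=4\neq n+2=5$.
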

Combining these results with Theorem \ref{Theorem 2.5} and Proposition \ref{Proposition 2.6}, we get the proposition below.
\begin{proposition}\label{Proposition 4.5}
  Let $C\subseteq\mathbb{F}_{q}^{(m_{1},n_{1}),\ldots,(m_{t},n_{t})}$ be a non-trivial linear sum-rank-metric code of dimension $\dim(C)=\sum\limits_{i=j}^{t}m_{i}n_{i}-\delta m_{j}-s$, where $j\in[t]$, $0\leq\delta\leq n_{j}-1$ and $0\leq s\leq m_{j}-1$. Let $n=n_{1}+\cdots+n_{t}$. Then the following holds:
  \begin{itemize}
    \item[(1)] If $m_{1}=\cdots=m_{t}$ and $s=0$, then $d_{sr}(C)+d_{sr}(C^{\bot})\leq n$ or $d_{sr}(C)+d_{sr}(C^{\bot})=n+2$.
    \item[(2)] If $m_{i}\neq m_{j}$ for some $i$, $j$ with $i\neq j$ or $s\neq 0$, then $d_{sr}(C)+d_{sr}(C^{\bot})\leq n+1$.
  \end{itemize} 
\end{proposition}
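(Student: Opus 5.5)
The plan is to bound $d_{sr}(C)$ with Theorem \ref{Theorem 2.5} and $d_{sr}(C^{\bot})$ with Proposition \ref{Proposition 2.6}, add the two bounds, and then use Lemmas \ref{Lemma 4.3} and \ref{Lemma 4.4} to rule out the single value $n+1$ in case (1), and both values $n+1, n+2$ where needed in case (2).

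\textbf{Step 1 (bound on $d_{sr}(C^{\bot})$).} Write $\dim(C^{\bot})=\sum_{i=1}^{t}m_in_i-\dim(C)=\sum_{i=1}^{j-1}m_in_i+\delta m_j+s$. This is exactly the form required in Proposition \ref{Proposition 2.6} with parameters $(j,\delta,s)$, since $0\le\delta\le n_j-1$ and $0\le s\le m_j-1$. Non-triviality of $C$ gives non-triviality of $C^{\bot}$. Hence
$$d_{sr}(C^{\bot})\le \sum_{i=j}^{t}n_i-\delta+[s=0].$$

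\textbf{Step 2 (bound on $d_{sr}(C)$).} Set $d=d_{sr}(C)$ and write $d-1=\sum_{i=1}^{j'-1}n_i+\delta'$. Theorem \ref{Theorem 2.5} gives
$$\dim(C)\le \sum_{i=j'}^{t}m_in_i-m_{j'}\delta'.$$
The right-hand side is strictly decreasing in $d$, and the value for $(j,\delta)$ is $\sum_{i=j}^t m_in_i-m_j\delta\ge\dim(C)$. Moreover, the value at the next distance $d_0+1$, where $d_0=\sum_{i=1}^{j-1}n_i+\delta+1$, equals $\sum_{i=j}^tm_in_i-m_j(\delta+1)$ (or the analogous expression when we pass to block $j+1$). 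This is strictly less than $\dim(C)$, because $s\le m_j-1$. Therefore $d\le d_0$. I expect this monotonicity bookkeeping, including the boundary case $\delta=n_j-1$, to be the only fiddly point, and it mirrors the argument of Proposition \ref{Proposition 3.3}.

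\textbf{Step 3 (combine).} Adding the two bounds gives
$$d_{sr}(C)+d_{sr}(C^{\bot})\le n+1+[s=0].$$
For (1), with $s=0$ the sum is at most $n+2$. If the sum equals $n+2$ we are done. If it equals $n+1$, I need a contradiction. In this case one of the two bounds is an equality and the other is off by exactly one. Since $s=0$ and all $m_i$ are equal to some $m$, $\dim(C)=m(n-d_0+1)$. So either $C$ is MSRD with $d_{sr}(C)=d_0$ and $d_{sr}(C^{\bot})=n-d_0+1$, or $d_{sr}(C)=d_0-1$ and $C^{\bot}$ is MSRD with $d_{sr}(C^{\bot})=n-d_0+2$. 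In either case one of $C, C^{\bot}$ is MSRD. By the known duality for equal $m_i$ recalled at the start of Section 4, the dual of an MSRD code is MSRD, so both codes are MSRD and Singleton is attained by both. That forces the sum to be $n+2$, a contradiction. Hence the sum is $\le n$ or $=n+2$.

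For (2), if $s\ne0$ the sum is already at most $n+1$. If instead some $m_i\ne m_j$ with $s=0$, the sum is at most $n+2$. A sum of $n+2$ is impossible by Lemma \ref{Lemma 4.4}, so the sum is at most $n+1$.
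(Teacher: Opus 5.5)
Your proposal is correct and follows the paper's proof. It uses the same ingredients: the bound on $d_{sr}(C)$ from Theorem~\ref{Theorem 2.5}, the bound on $d_{sr}(C^{\bot})$ from Proposition~\ref{Proposition 2.6}, MSRD duality for equal $m_i$ in part (1), and Lemma~\ref{Lemma 4.4} for unequal $m_i$ in part (2). The differences are cosmetic: you spell out the monotonicity argument behind the first bound, which the paper takes for granted, and in part (1) you rule out the value $n+1$ by showing it would force $C$ or $C^{\bot}$ to be MSRD, whereas the paper splits on whether $C$ is MSRD and, in the non-MSRD case, uses divisibility by $m$ to get $\dim(C)\le m(n-d_{sr}(C))$.
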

\begin{proof}
  \begin{itemize}
    \item[(1)] By Theorem \ref{Theorem 2.5}, we have $$d_{sr}(C)\leq\sum\limits_{i=1}^{j-1}n_{i}+\delta+1.\qquad (\ast)$$
    And according to Proposition \ref{Proposition 2.6}, as $\dim(C^{\bot})=\sum\limits_{i=1}^{j-1}m_{i}n_{i}+\delta m_{j}+s$, 
    $$d_{sr}(C^{\bot})\leq\sum\limits_{i=j}^{t}n_{i}-\delta+\begin{cases}
                                                              1, & \mbox{if } s=0, \\
                                                              0, & \mbox{else}.
                                                            \end{cases}\qquad (\star)$$
    Hence, $d_{sr}(C)+d_{sr}(C^{\bot})\leq n+2$.
    Since $m_{1}=\cdots=m_{t}$, $C$ is an MSRD code if and only if $C^{\bot}$ is an MSRD code.
    At that case, by Lemma \ref{Lemma 4.3}, $d_{sr}(C)+d_{sr}(C^{\bot})=n+2$. 
    Otherwise, both $C$ and $C^{\bot}$ are not MSRD codes. Since $s=0$ and $m_{1}=\cdots=m_{t}$, $\dim(C)\leq m(n-d_{sr}(C))$ and $\dim(C^{\bot})\leq m(n-d_{sr}(C^{\bot}))$, which implies that $d_{sr}(C)+d_{sr}(C^{\bot})\leq n$.
    \item[(2)] If $m_{i}\neq m_{j}$ for some $i$, $j$ with $i\neq j$, by Lemma \ref{Lemma 4.4}, it is impossible that $d_{sr}(C)+d_{sr}(C^{\bot})=n+2$. Hence, we have $d_{sr}(C)+d_{sr}(C^{\bot})\leq n+1$. If $s\neq 0$, then by inequalities $(\ast)$ and $(\star)$, $d_{sr}(C)+d_{sr}(C^{\bot})\leq n+1$. 
  \end{itemize}
\end{proof}
Furthermore, we have the following proposition, which plays a vital role later.
\begin{proposition}\label{Proposition 4.6}
  Let $C\subseteq\mathbb{F}_{q}^{(m,n_{1}),\ldots,(m,n_{t})}$ be a linear sum-rank-metric code. Then $C$ is a dually QMSRD code if and only if $d_{sr}(C)+d_{sr}(C^{\bot})=n+1$.
\end{proposition}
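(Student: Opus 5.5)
In the equal-$m$ case the Singleton bound reads $\dim(C)\le m(n-d_{sr}(C)+1)$. By the Remark after the QMSRD definition, $C$ is QMSRD exactly when $m\nmid\dim(C)$ and $d_{sr}(C)=n-\lceil \dim(C)/m\rceil+1$. So I would write $\dim(C)=mk-s$ with $0\le s\le m-1$, where $k=\lceil \dim(C)/m\rceil$. Then $\dim(C^{\bot})=mn-\dim(C)=m(n-k)+s$, since the trace dual has complementary $\mathbb{F}_q$-dimension in the ambient space of dimension $mn$. I will tacitly exclude the trivial codes, where the minimum distance is not meaningful.

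\emph{Forward direction.} Assume $C$ is dually QMSRD. Then $m\nmid\dim(C)$, so $1\le s\le m-1$ and $d_{sr}(C)=n-k+1$. For the dual, $\dim(C^{\bot})=m(n-k)+s$ is also not divisible by $m$, and $\lceil\dim(C^{\bot})/m\rceil=n-k+1$. Since $C^{\bot}$ is QMSRD, $d_{sr}(C^{\bot})=n-(n-k+1)+1=k$. Adding the two gives $d_{sr}(C)+d_{sr}(C^{\bot})=n+1$.

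\emph{Converse.} Assume $d_{sr}(C)+d_{sr}(C^{\bot})=n+1$.
\begin{itemize}
\item First I show $s\neq 0$. If $s=0$, Proposition \ref{Proposition 4.5}(1), which applies because all $m_i$ are equal, gives that the sum is either $\le n$ or $=n+2$. Both contradict the value $n+1$, so $s\ne0$ and $m\nmid \dim(C)$.
\item The Singleton bound then gives $d_{sr}(C)\le n-k+1$.
\item For the dual, $\dim(C^{\bot})=m(n-k)+s$ with $s\ge 1$, so $\lceil \dim(C^{\bot})/m\rceil=n-k+1$. The Singleton bound gives $d_{sr}(C^{\bot})\le n-(n-k+1)+1=k$. (Equivalently, Proposition \ref{Proposition 2.6} with $s\ne0$ gives the same bound.)
\item Adding, $d_{sr}(C)+d_{sr}(C^{\bot})\le n+1$. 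Since the sum equals $n+1$, both bounds must be equalities: $d_{sr}(C)=n-k+1$ and $d_{sr}(C^{\bot})=k$.
\item By the Remark, both $C$ and $C^{\bot}$ are QMSRD, since neither dimension is divisible by $m$.
\end{itemize}

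\emph{Main obstacle.} The only delicate point is excluding $s=0$ in the converse, because an MSRD pair could a priori interfere. Proposition \ref{Proposition 4.5}(1), which relies on Lemma \ref{Lemma 4.3} and on the fact that MSRD is preserved under duality when all $m_i$ are equal, handles this cleanly. The remaining work is the ceiling bookkeeping for $\dim(C^{\bot})$.
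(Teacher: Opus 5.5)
Your proposal is correct and follows the paper's proof essentially step for step. The forward direction reads off $d_{sr}(C)=n-\lceil\dim(C)/m\rceil+1$ and $d_{sr}(C^{\bot})=\lfloor\dim(C)/m\rfloor+1$ from the QMSRD conditions, and the converse uses Proposition \ref{Proposition 4.5}(1) to rule out $m\mid\dim(C)$, after which the two Singleton bounds summing to $n+1$ must both be tight.
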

\begin{proof}
  Let $C$ be a dually QMSRD code. Then $$d_{sr}(C)=n-\left\lceil\frac{\dim(C)}{m}\right\rceil+1$$ and $$d_{sr}(C^{\bot})=n-\left\lceil\frac{mn-\dim(C)}{m}\right\rceil+1=\left\lfloor\frac{\dim(C)}{m}\right\rfloor+1,$$
  where $n=n_{1}+\cdots+n_{t}$. Since $\dim(C)$ is not divisible by $m$, $d_{sr}(C)+d_{sr}(C^{\bot})=n+1$.
  
  Suppose $d_{sr}(C)+d_{sr}(C^{\bot})=n+1$, then by Proposition \ref{Proposition 4.5}, $m$ is not divisible by $\dim(C)$. In addition, by Theorem \ref{Theorem 2.5}, $$d_{sr}(C)\leq n-\left\lceil\frac{\dim(C)}{m}\right\rceil+1  \text{ and } d_{sr}(C^{\bot})\leq \left\lfloor\frac{\dim(C)}{m}\right\rfloor+1.$$
  Hence, both of the inequalities must be equalities and $C$ is a dually QMSRD code.
\end{proof}
\begin{remark}
  {\rm For arbitrary $m_{i}$s, Proposition \ref{Proposition 4.6} is generally incorrect. Precisely, for $C\subseteq\mathbb{F}_{q}^{(m_{1},n_{1}),\ldots,(m_{t},n_{t})}$, if $d_{sr}(C)+d_{sr}(C^{\bot})=n+1$, $C$ may be not a dually QMSRD code. 
  In \cite[Example VI.2]{BGR}, the authors provided a counter example. 
  We also find that $d_{sr}(C)+d_{sr}(C^{\bot})=n+1$ may be false even if $C$ is a dually QMSRD code. Below we present an example.}
\end{remark}
\begin{example}
  Let $C\subseteq\mathbb{F}_{2}^{(3,2),(1,1)}$ be a linear sum-rank-metric code generated by the following two elements: 
  $$\left(\begin{pmatrix}
            1 & 0 \\
            0 & 1 \\
            0 & 0 
          \end{pmatrix},\,(0)\right) \text{ and }
  \left(\begin{pmatrix}
            0 & 0 \\
            1 & 0 \\
            0 & 1 
          \end{pmatrix},\,(0)\right).$$
Then $\dim(C)=2$ and $d_{sr}(C)=2$, which shows that $C$ is a QMSRD code.
What is more, $\dim(C^{\bot})=5$ and $d_{sr}(C^{\bot})=1$. Hence, $C^{\bot}$ is a QMSRD code as well.
In a word, $C$ is a dually QMSRD code, but $d_{sr}(C)+d_{sr}(C^{\bot})=3\neq 4=3+1$.
\end{example}
To investigate dually QMSRD codes more deeply, it is necessary to revisit some other knowledge in \cite{BGR}. 
\begin{definition}
  {\rm For $i\in[t]$, let $\mathcal{L}_{i}$ be the lattice of subspaces of $\mathbb{F}_{q}^{n_{i}}$ partially ordered by inclusion. Let $\mathcal{L}=\mathcal{L}_{1}\times\cdots\times\mathcal{L}_{t}$ be the product lattice endowed with the product order, which we denote by $\subseteq$.
  For any $U=(U_{1},\ldots,U_{t})\in\mathcal{L}$, the rank function and dimension of $U$ is defined as ${\rm rk}_{\mathcal{L}}(U)=\sum\limits_{i=1}^{t}\dim(U_{i})$ and $\dim(U)=(\dim(U_{1}),\ldots,\dim(U_{t}))$ respectively. 
  Finally, for $U\subseteq V$, the M\"{o}bius function of $\mathcal{L}$ is given by 
  $$\mu_{\mathcal{L}}(U,V)=\prod\limits_{i=1}^{t}(-1)^{\dim(V_{i})-\dim(U_{i})}q^{\binom{\dim(V_{i})-\dim(U_{i})}{2}}.$$}
\end{definition}
\begin{definition}
{\rm For a matrix $M\in\mathbb{F}_{q}^{(m,n)}$, we let ${\rm rowsp}(M)\subseteq\mathbb{F}_{q}^{n}$ denote its row space. We define the \emph{support} as: 
$$\sigma\,:\, \mathbb{F}_{q}^{(m_{1},n_{1}),\ldots,(m_{t},n_{t})}\rightarrow\mathcal{L},\,M=(M_{1},\ldots,M_{t})\mapsto({\rm rowsp}(M_{1}),\ldots,{\rm rowsp}(M_{t})).$$}
\end{definition}
Note that $wt_{sr}(M)={\rm rk}_{\mathcal{L}}(\sigma(M))$ for all $M\in\mathbb{F}_{q}^{(m_{1},n_{1}),\ldots,(m_{t},n_{t})}$. 
Furthermore, for a linear sum-rank code $C\subseteq\mathbb{F}_{q}^{(m_{1},n_{1}),\ldots,(m_{t},n_{t})}$ and $U\in\mathcal{L}$, we define 
$$C(U)=\{M\in C\,:\,\sigma(M)\subseteq U\}\subseteq\mathbb{F}_{q}^{(m_{1},n_{1}),\ldots,(m_{t},n_{t})}.$$
Based on this definition, the authors in \cite{BGR} gave the following results. 
\begin{lemma}(\!\cite[Proposition II.7]{BGR})\label{Lemma 4.11}
Let $\{\boldsymbol{0}\}\neq C\in\mathbb{F}_{q}^{(m_{1},n_{1}),\ldots,(m_{t},n_{t})}$ be a linear sum-rank-metric code and $1\leq d\leq n=n_{1}+\cdots+n_{t}$ be an integer. Then 
$d_{sr}(C)\geq d$ if and only if $|C(U)|=1$ for all $U\in\mathcal{L}$ with $rk_{\mathcal{L}}(U)=d-1$, which is equivalent to that $|C(U)|=1$ for all $U\in\mathcal{L}$ with $rk_{\mathcal{L}}(U)\leq d-1$.
\end{lemma}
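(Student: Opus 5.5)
We need to prove Lemma 4.11: for a nonzero linear code $C$, $d_{sr}(C)\ge d$ if and only if $|C(U)|=1$ for all $U\in\mathcal{L}$ with ${\rm rk}_{\mathcal{L}}(U)=d-1$, and this is also equivalent to the same condition for all $U$ with ${\rm rk}_{\mathcal{L}}(U)\le d-1$. The plan is to prove a cycle of implications, using the identity $wt_{sr}(M)={\rm rk}_{\mathcal{L}}(\sigma(M))$ together with the monotonicity of $U\mapsto C(U)$ under the product order.

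First, assume $d_{sr}(C)\ge d$ and let $U\in\mathcal{L}$ with ${\rm rk}_{\mathcal{L}}(U)\le d-1$. Take $M\in C(U)$, so $\sigma(M)\subseteq U$ componentwise. Then $\dim({\rm rowsp}(M_i))\le\dim(U_i)$ for every $i\in[t]$, and summing over $i$ gives
$$wt_{sr}(M)={\rm rk}_{\mathcal{L}}(\sigma(M))\le{\rm rk}_{\mathcal{L}}(U)\le d-1.$$
Since $C$ is linear, every nonzero codeword has sum-rank weight at least $d_{sr}(C)\ge d$, so $M=\boldsymbol{0}$. Hence $|C(U)|=1$. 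This establishes the condition for all $U$ with ${\rm rk}_{\mathcal{L}}(U)\le d-1$, and in particular for those with ${\rm rk}_{\mathcal{L}}(U)=d-1$.

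Next, assume $|C(U)|=1$ for all $U$ with ${\rm rk}_{\mathcal{L}}(U)=d-1$, and suppose for contradiction that some $\boldsymbol{0}\neq M\in C$ has $wt_{sr}(M)\le d-1$. Write $\sigma(M)=(V_1,\ldots,V_t)$, which has rank $w=wt_{sr}(M)\le d-1\le n-1$. We enlarge $\sigma(M)$ to some $U\supseteq\sigma(M)$ with ${\rm rk}_{\mathcal{L}}(U)=d-1$ exactly. Because $d-1\le n-1<n=\sum_i n_i$, we can add $d-1-w$ extra dimensions by extending the subspaces $V_i$ inside $\mathbb{F}_q^{n_i}$ one at a time. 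Each $V_i$ can grow up to dimension $n_i$, so the total room available is $n-w\ge d-1-w$. Then $M\in C(U)$ with $M\neq\boldsymbol{0}$, contradicting $|C(U)|=1$. Hence every nonzero codeword has weight at least $d$, and by linearity $d_{sr}(C)\ge d$.

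Finally, the rank-$\le d-1$ condition trivially implies the rank-$=d-1$ condition, which closes the cycle. The only mildly delicate step is the extension argument, where the hypothesis $d\le n$ guarantees that a lattice element of rank exactly $d-1$ containing $\sigma(M)$ exists. The hypothesis $C\neq\{\boldsymbol{0}\}$ is needed only so that $d_{sr}(C)$ is defined.
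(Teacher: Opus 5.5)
Your argument is correct and complete. The forward direction uses $\sigma(M)\subseteq U\Rightarrow wt_{sr}(M)={\rm rk}_{\mathcal{L}}(\sigma(M))\le{\rm rk}_{\mathcal{L}}(U)$, and the converse enlarges $\sigma(M)$ to an element of $\mathcal{L}$ of rank exactly $d-1$ (possible because $d-1<n$); the paper gives no proof of its own and only cites \cite[Proposition II.7]{BGR}, so there is no in-paper argument to compare against, but these are the standard steps for this statement.
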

\begin{lemma}(\!\cite[Proposition V.3]{BGR})\label{Lemma 4.12}
Let $C\subseteq\mathbb{F}_{q}^{(m_{1},n_{1}),\ldots,(m_{t},n_{t})}$ be a linear sum-rank-metric code and $U=(U_{1},\ldots,U_{t})\in\mathcal{L}$ with $\dim(U_{i})=u_{i}$ for $i\in[t]$. Then
$$|C(U)|=\frac{|C|}{q^{\sum\limits_{i=1}^{t}m_{i}(n_{i}-u_{i})}}|C^{\bot}(U^{\bot})|,$$
where $U^{\bot}=(U_{1}^{\bot},\ldots,U_{t}^{\bot})$.
\end{lemma}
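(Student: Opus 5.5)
The plan is a purely linear-algebraic argument: realize $C(U)$ as the intersection of $C$ with a fixed subspace, identify the dual of that subspace, and then count dimensions. For $U=(U_{1},\ldots,U_{t})\in\mathcal{L}$, set
$$V(U)=\{M=(M_{1},\ldots,M_{t})\in\mathbb{F}_{q}^{(m_{1},n_{1}),\ldots,(m_{t},n_{t})}\,:\,{\rm rowsp}(M_{i})\subseteq U_{i}\ \text{for all } i\in[t]\}.$$
Then $C(U)=C\cap V(U)$ by definition of the support $\sigma$. Moreover, $V(U)$ is an $\mathbb{F}_{q}$-subspace, because the condition ${\rm rowsp}(M_{i})\subseteq U_{i}$ is preserved under sums and scalar multiples. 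Each row of $M_{i}$ ranges freely over $U_{i}$, so $\dim V(U)=\sum_{i=1}^{t}m_{i}u_{i}$. Write $N=\sum_{i=1}^{t}m_{i}n_{i}$ for the dimension of the ambient space.

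The key step, and the only point needing care, is to show $V(U)^{\bot_{tr}}=V(U^{\bot})$. Writing $M_{i}$ and $N_{i}$ by rows, $Tr(M_{i}N_{i}^{\top})$ is the sum over $k\in[m_{i}]$ of the standard inner products of the $k$-th rows of $M_{i}$ and $N_{i}$. Hence if every row of $N_{i}$ lies in $U_{i}^{\bot}$ for each $i$, then $\langle M,N\rangle_{tr}=0$ for all $M\in V(U)$. This gives $V(U^{\bot})\subseteq V(U)^{\bot_{tr}}$. The two sides have the same dimension: $\dim V(U^{\bot})=\sum_{i}m_{i}(n_{i}-u_{i})=N-\dim V(U)$, and the trace form is nondegenerate. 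So equality holds.

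With this in hand, I would write
$$C^{\bot}(U^{\bot})=C^{\bot}\cap V(U^{\bot})=C^{\bot}\cap V(U)^{\bot}=(C+V(U))^{\bot}.$$
Therefore $\dim C^{\bot}(U^{\bot})=N-\dim(C+V(U))$. By the dimension formula for sums of subspaces, $\dim(C+V(U))=\dim C+\sum_{i}m_{i}u_{i}-\dim C(U)$. Combining these,
$$\dim C^{\bot}(U^{\bot})=\sum_{i=1}^{t}m_{i}(n_{i}-u_{i})-\dim C+\dim C(U).$$
Exponentiating base $q$ and rearranging gives $|C(U)|=|C|\,q^{-\sum_{i}m_{i}(n_{i}-u_{i})}\,|C^{\bot}(U^{\bot})|$, which is the stated identity.
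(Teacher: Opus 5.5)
Your proof is correct: writing $C(U)=C\cap V(U)$, checking $V(U)^{\bot_{tr}}=V(U^{\bot})$ row by row together with a dimension count, and then applying $\dim(C+V(U))=\dim C+\dim V(U)-\dim C(U)$ to $C^{\bot}(U^{\bot})=(C+V(U))^{\bot_{tr}}$ yields exactly the stated identity. The paper gives no proof of its own and simply imports the statement from \cite[Proposition V.3]{BGR}, which to my knowledge rests on this same standard duality-and-dimension-count argument (the sum-rank analogue of the rank-metric identity), so your write-up is a self-contained version of the cited result.
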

What is more, the authors in \cite{BGR} gave the following definition.
\begin{definition}\label{Definition 4.13}
  {\rm Let $C\subseteq\mathbb{F}_{q}^{(m_{1},n_{1}),\ldots,(m_{t},n_{t})}$ be a linear sum-rank-metric code. For $r\in\mathbb{N}_{0}$, $\boldsymbol{u}=(u_{1},\ldots,u_{t})\in\mathbb{N}_{0}^{t}$ and $U\in\mathcal{L}$, we let
   \begin{equation*}
   \begin{aligned}
   W_{r}(C) &=|\{M\in C\,:\,wt_{sr}(C)=r\}|,\\
   W_{\boldsymbol{u}}(C) &=|\{M\in C\,:\,{\rm rank}(M_{i})=u_{i} \text{ for all } i\in[t]\}|,\\
   W_{U}(C) &=|\{M\in C\,:\,\sigma(M)=U\}|.
   \end{aligned}
   \end{equation*}
   Then the  lists 
   $$(W_{r}(C))_{r\in\mathbb{N}_{0}},\quad (W_{\boldsymbol{u}}(C))_{\boldsymbol{u}\in\mathbb{N}_{0}^{t}},\quad (W_{U}(C))_{U\in\mathcal{L}}$$
   are called the \emph{sum-rank, rank-list and support distributions} of $C$ respectively.}
\end{definition}
In the following, we characterize dually QMSRD codes in terms of their number of minimum sum-rank weight codewords with Proposition \ref{Proposition 4.6} and Lemmas \ref{Lemma 4.11}, \ref{Lemma 4.12}.
\begin{theorem}\label{Theorem 4.14}
  Let $C\subseteq\mathbb{F}_{q}^{(m,n_{1}),\ldots,(m,n_{t})}$ be a QMSRD code of minimum sum-rank distance $d_{sr}(C)=d$ and dimension $\dim(C)=\alpha m+\beta$, where $n=n_{1}+\cdots+n_{t}$, $1\leq \beta\leq m-1$. $C$ is dually QMSRD if and only if 
  $$W_{d}(C)=(q^{\beta}-1)\sum\limits_{\substack{(u_{1},\ldots,u_{t})\in\mathbb{N}_{0}^{t}\\u_{1}+\cdots+u_{t}=d}}\prod\limits_{i=1}^{t}\binom{n_{i}}{u_{i}}_{q}.$$
\end{theorem}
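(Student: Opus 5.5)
The plan is to translate "$C^{\bot}$ is QMSRD" into a statement about the sizes $|C(U)|$ for $U\in\mathcal{L}$ with ${\rm rk}_{\mathcal{L}}(U)=d$, and then to read off $W_{d}(C)$ by counting codewords support by support.

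\textbf{Step 1: fix the parameters.} Since $C$ is QMSRD with $m_{1}=\cdots=m_{t}=m$ and $\dim(C)=\alpha m+\beta$ with $1\leq\beta\leq m-1$, the remark after the definition of QMSRD codes gives
$$d=n-\left\lceil\frac{\dim(C)}{m}\right\rceil+1=n-\alpha.$$
The dual has dimension $\dim(C^{\bot})=mn-\alpha m-\beta=m(n-\alpha-1)+(m-\beta)$, and $m-\beta\in[1,m-1]$. So $C^{\bot}$ is QMSRD exactly when $d_{sr}(C^{\bot})=\alpha+1$. The Singleton bound (Theorem \ref{Theorem 2.5}) already gives $d_{sr}(C^{\bot})\leq\alpha+1$. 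Hence $C$ is dually QMSRD if and only if $d_{sr}(C^{\bot})\geq\alpha+1$; equivalently, this is the condition $d_{sr}(C)+d_{sr}(C^{\bot})=n+1$ of Proposition \ref{Proposition 4.6}.

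\textbf{Step 2: reformulate via Lemmas \ref{Lemma 4.11} and \ref{Lemma 4.12}.} By Lemma \ref{Lemma 4.11}, $d_{sr}(C^{\bot})\geq\alpha+1$ holds if and only if $|C^{\bot}(V)|=1$ for every $V\in\mathcal{L}$ with ${\rm rk}_{\mathcal{L}}(V)=\alpha$. Since $V\mapsto V^{\bot}$ is a bijection between elements of rank $\alpha$ and elements of rank $n-\alpha=d$, we may write $V=U^{\bot}$ with ${\rm rk}_{\mathcal{L}}(U)=d$. Applying Lemma \ref{Lemma 4.12} to such a $U$, where $\sum_{i}m(n_{i}-u_{i})=m(n-d)=m\alpha$, gives
$$|C(U)|=q^{\alpha m+\beta-m\alpha}\,|C^{\bot}(U^{\bot})|=q^{\beta}\,|C^{\bot}(U^{\bot})|.$$
Consequently $|C(U)|\geq q^{\beta}$ for every $U$ of rank $d$. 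Moreover, $C$ is dually QMSRD if and only if $|C(U)|=q^{\beta}$ for all such $U$.

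\textbf{Step 3: count minimum-weight codewords.} Each codeword of weight $d$ has support of rank exactly $d$. Conversely, let ${\rm rk}_{\mathcal{L}}(U)=d$ and take a nonzero $M\in C(U)$. Then $\sigma(M)\subseteq U$, so $wt_{sr}(M)\leq d$, while $wt_{sr}(M)\geq d$ because $d_{sr}(C)=d$. Hence $wt_{sr}(M)=d$, and since the rank of $\sigma(M)$ equals that of $U$ with $\sigma(M)\subseteq U$, we get $\sigma(M)=U$. Therefore
$$W_{d}(C)=\sum_{{\rm rk}_{\mathcal{L}}(U)=d}\left(|C(U)|-1\right)\geq(q^{\beta}-1)\cdot\#\{U\in\mathcal{L}:{\rm rk}_{\mathcal{L}}(U)=d\}.$$
Equality holds if and only if $|C(U)|=q^{\beta}$ for all $U$ of rank $d$, which by Step 2 is exactly the dually QMSRD condition. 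Finally, the number of $U\in\mathcal{L}$ of rank $d$ is $\sum_{u_{1}+\cdots+u_{t}=d}\prod_{i=1}^{t}\binom{n_{i}}{u_{i}}_{q}$, because we choose a $u_{i}$-dimensional subspace of $\mathbb{F}_{q}^{n_{i}}$ in each block. This gives the stated formula.

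\textbf{Main obstacle.} The only delicate point is the bookkeeping in Step 1: confirming that $d=n-\alpha$, and that the dual's QMSRD condition is exactly $d_{sr}(C^{\bot})\geq\alpha+1$. This needs $m-\beta\neq0$, which holds because $1\leq\beta\leq m-1$, together with the nontriviality of $C$ and $C^{\bot}$, which follows from $1\leq\beta\leq m-1$ as well. Everything after that is a direct counting argument.
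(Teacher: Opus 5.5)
Your proof is correct and follows essentially the same route as the paper. Lemma \ref{Lemma 4.12} gives $|C(U)|=q^{\beta}|C^{\bot}(U^{\bot})|$ for every $U$ with ${\rm rk}_{\mathcal{L}}(U)=d$; Lemma \ref{Lemma 4.11} applied to $C^{\bot}$ turns the dual's QMSRD condition into $|C^{\bot}(U^{\bot})|=1$; and summing over supports of rank $d$ gives the count. Two things the paper leaves implicit, you state explicitly. First, $|C(U)|=1+W_{U}(C)$ for such $U$. Second, $|C(U)|\geq q^{\beta}$ always holds, and this is what makes the converse direction rigorous, from the value of $W_{d}(C)$ back to $W_{U}(C)=q^{\beta}-1$ for every rank-$d$ $U$.
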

\begin{proof}
 Since $C$ is a QMSRD code, $d=n-\left\lceil\frac{\dim(C)}{m}\right\rceil+1=n-\left\lceil\frac{\alpha m+\beta}{m}\right\rceil+1=n-\alpha$, which implies that $\alpha =n-d$. 
  Let $U=(U_{1},\ldots,U_{t})\in\mathcal{L}$ and ${\rm rk}_{\mathcal{L}}(U)=d$.
  By Lemma \ref{Lemma 4.12}, 
  \begin{equation*}
   \begin{aligned}
  |C(U)|& =1+W_{U}(C)=\frac{|C|}{q^{m(n-d)}}|C^{\bot}(U^{\bot})|\\ &=\frac{q^{m(n-d)+\beta}}{q^{m(n-d)}}|C^{\bot}(U^{\bot})|=q^{\beta}|C^{\bot}(U^{\bot})|.
  \quad(\dagger)
  \end{aligned}
  \end{equation*}
  Since $C$ is a QMSRD code, $C$ is a dually QMSRD code if and only if $C^{\bot}$ is a QMSRD code, which by Proposition \ref{Proposition 4.6}, implies that $d_{sr}(C^{\bot})=n-d+1$. \\
  Furthermore, by Lemma \ref{Lemma 4.11}, $d_{sr}(C^{\bot})=n-d+1$ if and only if $|C^{\bot}(U^{\bot})|=1$ for all $U\in\mathcal{L}$ with ${\rm rk}_{\mathcal{L}}(U)=d$, which by the equality $(\dagger)$, is equivalent to the fact that $W_{U}(C)=q^{\beta}-1$ for all $U\in\mathcal{L}$ with ${\rm rk}_{\mathcal{L}}(U)=d$. 
  $W_{U}(C)=q^{\beta}-1$ for all $U\in\mathcal{L}$ with ${\rm rk}_{\mathcal{L}}(U)=d$ if and only if $$W_{d}(C)=\sum\limits_{U\in\mathcal{L},\,{\rm rk}_{\mathcal{L}}(U)=d}(q^{\beta}-1)=(q^{\beta}-1)\sum\limits_{\substack{(u_{1},\ldots,u_{t})\in\mathbb{N}_{0}^{t}\\u_{1}+\cdots+u_{t}=d}}\prod\limits_{i=1}^{t}\binom{n_{i}}{u_{i}}_{q}.$$
  Hence, we get the conclusion that $C$ is a dually QMSRD code if and only if 
  $$W_{d}(C)=(q^{\beta}-1)\sum\limits_{\substack{(u_{1},\ldots,u_{t})\in\mathbb{N}_{0}^{t}\\u_{1}+\cdots+u_{t}=d}}\prod\limits_{i=1}^{t}\binom{n_{i}}{u_{i}}_{q}.$$
 \end{proof}
 \begin{remark}
  {\rm In \cite[Proposition 16]{CGLR}, the authors provided a characterization of dually QMRD codes in terms of the number of codewords with minimum weight. 
  Theorem \ref{Theorem 4.14} can be regarded as a generalization of the proposition.}
 \end{remark}
 \begin{example}
   For the dually QMSRD code in Example \ref{Example 3.4}, $W_{3}(C)=6=(2-1)\cdot 2\binom{2}{1}_{2}\binom{2}{2}_{2}$, which agrees with the theorem above.
 \end{example}
 
 \subsection{Weight Distribution of Dually QMSRD Codes}
 In this subsection, we study the weight distribution of dually QMSRD codes including support, rank-list and sum-rank distributions in Definition \ref{Definition 4.13}. We begin with a lemma.
 \begin{lemma}\label{Lemma 4.16}
   Let $C\subseteq\mathbb{F}_{q}^{(m,n_{1}),\ldots,(m,n_{t})}$ be a non-zero dually QMSRD code with minimum sum-rank distance $d_{sr}(C)=d$ and dimension $\dim(C)=\alpha m+\beta$, where $n=n_{1}+\cdots+n_{t}$, $\alpha=n-d$ and $1\leq \beta\leq m-1$. For all $U\in\mathcal{L}$ with ${\rm rk}_{\mathcal{L}}(U)=u$, we have
   $$|C(U)|=\begin{cases}
              1, & \mbox{if } u<d, \\
              q^{m(u-d)+\beta}, & \mbox{else}.
            \end{cases}$$
 \end{lemma}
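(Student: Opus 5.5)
The proof splits into the two cases $u<d$ and $u\geq d$, and each case reduces to Lemma \ref{Lemma 4.11} combined with the duality formula of Lemma \ref{Lemma 4.12}.

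\emph{Case $u<d$.} Since $d_{sr}(C)=d$ and $C\neq\{\boldsymbol{0}\}$, Lemma \ref{Lemma 4.11} gives $|C(U)|=1$ for every $U\in\mathcal{L}$ with ${\rm rk}_{\mathcal{L}}(U)\leq d-1$.

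\emph{Case $u\geq d$.} Apply Lemma \ref{Lemma 4.12} with all $m_i=m$. Since $\sum_i m(n_i-u_i)=m(n-u)$ and $|C|=q^{\alpha m+\beta}=q^{m(n-d)+\beta}$, we get
$$|C(U)|=\frac{q^{m(n-d)+\beta}}{q^{m(n-u)}}\,|C^{\bot}(U^{\bot})|=q^{m(u-d)+\beta}\,|C^{\bot}(U^{\bot})|.$$
It therefore suffices to show $|C^{\bot}(U^{\bot})|=1$. Here ${\rm rk}_{\mathcal{L}}(U^{\bot})=\sum_i(n_i-u_i)=n-u\leq n-d$. Because $C$ is dually QMSRD with equal $m_i$, Proposition \ref{Proposition 4.6} gives $d_{sr}(C^{\bot})=n+1-d$, so every $V\in\mathcal{L}$ with ${\rm rk}_{\mathcal{L}}(V)\leq n-d=d_{sr}(C^{\bot})-1$ satisfies $|C^{\bot}(V)|=1$. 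In particular $|C^{\bot}(U^{\bot})|=1$.

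The one point that needs care is that Lemma \ref{Lemma 4.11} requires $C^{\bot}\neq\{\boldsymbol{0}\}$. This holds because $C^{\bot}$ is QMSRD by assumption, so its dimension is of the form $mn-m\delta-s$ with $s\geq 1$ and $\delta\leq n_j-1$, hence positive. (Alternatively, $\beta\leq m-1$ gives $\dim C<mn$.) In the boundary case $u=n$ we have $U^{\bot}=0$, so $|C^{\bot}(U^{\bot})|=1$ trivially and the formula gives $q^{m(n-d)+\beta}=|C|$, as expected. There is no real obstacle beyond this bookkeeping.
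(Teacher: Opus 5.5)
Your proof is correct and follows the paper's argument exactly: Lemma \ref{Lemma 4.11} handles $u<d$. For $u\geq d$, Proposition \ref{Proposition 4.6} gives $d_{sr}(C^{\bot})=n+1-d$, Lemma \ref{Lemma 4.11} then gives $|C^{\bot}(U^{\bot})|=1$, and Lemma \ref{Lemma 4.12} finishes the computation.

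Your check that $C^{\bot}\neq\{\boldsymbol{0}\}$ is a detail the paper leaves implicit. The cleaner of your two justifications is $\dim(C)=m(n-d)+\beta<mn$. A QMSRD dimension actually has the form $m(\sum_{i\geq j}n_i-\delta)-s$ rather than $mn-m\delta-s$, though it is positive either way.
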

 \begin{proof}
   For $u<d$, we can get the conclusion directly from Lemma \ref{Lemma 4.11}. Thus, suppose that $u\geq d$. Then ${\rm rk}_{\mathcal{L}}(U^{\bot})=n-u\leq n-d$. By Proposition \ref{Proposition 4.6}, $d_{sr}(C^{\bot})=n+1-d$, so by Lemma \ref{Lemma 4.11}, $|C^{\bot}(U^{\bot})|=1$. Finally, since $\alpha=n-d$, according to Lemma \ref{Lemma 4.12}, we have
   $$|C(U)|=\frac{|C|}{q^{m(n-u)}}|C^{\bot}(U^{\bot})|=\frac{q^{m(n-d)+\beta}}{q^{m(n-u)}}=q^{m(u-d)+\beta}.$$ 
 \end{proof}
 \vspace{-2mm}
 Then we can derive the support distribution of dually QMSRD codes. Let $\boldsymbol{u}=(u_{1},\ldots,u_{t})$, $\boldsymbol{v}=(v_{1},\ldots,v_{t})\in\mathbb{N}_{0}^{t}$. If $v_{i}\leq u_{i}$ for all $i\in[t]$, we say $\boldsymbol{v}\leq\boldsymbol{u}$.
 \begin{theorem}\label{Theorem 4.18}
   Let $C\subseteq\mathbb{F}_{q}^{(m,n_{1}),\ldots,(m,n_{t})}$ be a non-zero dually QMSRD code with minimum sum-rank distance $d_{sr}(C)=d$ and dimension $\dim(C)=\alpha m+\beta$, where $n=n_{1}+\cdots+n_{t}$, $\alpha=n-d$ and $1\leq \beta\leq m-1$. Then for any $U\in\mathcal{L}\setminus 0$ with $\dim(U)=\boldsymbol{u}=(u_{1},\ldots,u_{t})$ and ${\rm rk}_{\mathcal{L}}(U)=u$, 
   $$W_{U}(C)=\sum\limits_{l=d}^{u}(q^{m(l-d)+\beta}-1)f_{l}(\boldsymbol{u}),$$\\
   where $f_{l}(\boldsymbol{u})=\sum\limits_{\substack{\boldsymbol{v}\leq\boldsymbol{u}\\v_{1}+\cdots+v_{t}=l}}\prod\limits_{i=1}^{t}(-1)^{u_{i}-v_{i}}q^{\binom{u_{i}-v_{i}}{2}}\binom{u_{i}}{v_{i}}_{q}$. 
   In particular, $W_{U}(C)$ only depends on $\boldsymbol{u}=\dim(U)$, but not on the subspace $U$ itself.
 \end{theorem}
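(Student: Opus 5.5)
The plan is Möbius inversion on the product lattice $\mathcal{L}$, fed by Lemma \ref{Lemma 4.16}. For every $U\in\mathcal{L}$ we have
$$|C(U)|=\sum_{V\subseteq U}W_{V}(C),$$
because every codeword $M$ with $\sigma(M)\subseteq U$ is counted exactly once, namely at $V=\sigma(M)$. Möbius inversion on the interval $[0,U]$, using the explicit Möbius function of $\mathcal{L}$ recalled before Definition \ref{Definition 4.13}, then gives
$$W_{U}(C)=\sum_{V\subseteq U}\mu_{\mathcal{L}}(V,U)\,|C(V)|.$$

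The first step is to simplify $|C(V)|$. Since $\sum_{V\subseteq U}\mu_{\mathcal{L}}(V,U)=0$ for $U\neq 0$, we may replace $|C(V)|$ by $|C(V)|-1$. By Lemma \ref{Lemma 4.16}, $|C(V)|-1$ is $0$ when ${\rm rk}_{\mathcal{L}}(V)<d$ and equals $q^{m(l-d)+\beta}-1$ when ${\rm rk}_{\mathcal{L}}(V)=l\geq d$. It is exactly the hypothesis $U\neq 0$ that allows this subtraction, which is why the statement excludes $U=0$. We obtain
$$W_{U}(C)=\sum_{l=d}^{u}\bigl(q^{m(l-d)+\beta}-1\bigr)\sum_{\substack{V\subseteq U\\ {\rm rk}_{\mathcal{L}}(V)=l}}\mu_{\mathcal{L}}(V,U).$$

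The second step is to evaluate the inner sum. Group the $V\subseteq U$ by their dimension vector $\boldsymbol{v}=\dim(V)\leq\boldsymbol{u}$. The value $\mu_{\mathcal{L}}(V,U)=\prod_{i}(-1)^{u_i-v_i}q^{\binom{u_i-v_i}{2}}$ depends only on $\boldsymbol{v}$. The number of $V=(V_1,\ldots,V_t)$ with $V_i\subseteq U_i$ and $\dim V_i=v_i$ is $\prod_i\binom{u_i}{v_i}_q$. So the inner sum equals $f_{l}(\boldsymbol{u})$, which gives the claimed formula. The final assertion is then immediate, since the right-hand side involves only $\boldsymbol{u}$ and $u=\sum u_i$.

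The only point needing care is justifying $\sum_{V\subseteq U}\mu_{\mathcal{L}}(V,U)=0$ for $U\neq0$. I would get it from the defining property of the Möbius function on a lattice, or equivalently by factoring the sum over the $t$ coordinates. In each coordinate with $u_i>0$, the $q$-binomial theorem gives $\sum_{k}(-1)^{k}q^{\binom{k}{2}}\binom{u_i}{k}_q=0$. At least one such coordinate exists because $U\neq0$. Apart from this, the argument is straightforward bookkeeping; note also that the $l$-sum is empty when $u<d$, giving $W_U(C)=0$, which is consistent with $d_{sr}(C)=d$.
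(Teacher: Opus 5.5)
Your proposal is correct and matches the paper's proof step for step. Both invert $|C(U)|=\sum_{V\subseteq U}W_V(C)$ by Möbius inversion, subtract $1$ using $\sum_{V\subseteq U}\mu_{\mathcal{L}}(V,U)=0$ for $U\neq 0$, evaluate $|C(V)|$ with Lemma \ref{Lemma 4.16}, and group the $V$ by $\dim(V)$ with Gaussian binomial counts. The only addition is your $q$-binomial justification of the vanishing Möbius sum, which the paper merely asserts.
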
 
 \begin{proof}
   Set $\mathcal{V}_{l}(U)=\{V\in\mathcal{L}\,:\,V\subseteq U,\, {\rm rk}_{\mathcal{L}}(V)=l\}$. 
   Since $C(U)=\sum\limits_{V\subseteq U}W_{V}(C)$ and $\sum\limits_{V\subseteq U}\mu_{\mathcal{L}}(V,U)=0$, for any $U\in\mathcal{L}\setminus 0$, applying the M\"{o}bius inversion along with Lemma \ref{Lemma 4.16}, 
   \begin{equation*}
   \begin{aligned}
   W_{U}(C) & =\sum\limits_{\substack{V\in\mathcal{L}\\V\subseteq U}}|C(V)|\mu_{\mathcal{L}}(V,U)\\
   & = \sum\limits_{l=0}^{d-1}\sum\limits_{V\in\mathcal{V}_{l}(U)}\mu_{\mathcal{L}}(V,U)
   +\sum\limits_{l=d}^{u}\sum\limits_{V\in\mathcal{V}_{l}(U)}q^{m(l-d)+\beta}\mu_{\mathcal{L}}(V,U)\\
   & = \sum\limits_{l=d}^{u}(q^{m(l-d)+\beta}-1)\sum\limits_{V\in\mathcal{V}_{l}(U)}\mu_{\mathcal{L}}(V,U)\\
   & = \sum\limits_{l=d}^{u}(q^{m(l-d)+\beta}-1)\sum\limits_{\substack{\boldsymbol{v}\leq\boldsymbol{u}\\v_{1}+\cdots+v_{t}=l}}\prod\limits_{i=1}^{t}(-1)^{u_{i}-v_{i}}q^{\binom{u_{i}-v_{i}}{2}}\binom{u_{i}}{v_{i}}_{q}\\
   & = \sum\limits_{l=d}^{u}(q^{m(l-d)+\beta}-1)f_{l}(\boldsymbol{u}).
   \end{aligned}
   \end{equation*} 
 \end{proof}
 Next, we can obtain the rank-list distribution and sum-rank distribution of dually QMSRD codes immediately. Consider the space $\mathbb{F}_{q}^{(m,n_{1}),\ldots,(m,n_{t})}$, we define the index set
 $$\mathcal{I}_{(n_{1},\ldots,n_{t})}=\{(u_{1},\ldots,u_{t})\in\mathbb{N}_{0}^{t}\,:\,u_{i}\leq n_{i} \text{ for all } i\in[t]\}.$$
 Summing all $W_{U}(C)$s with $\dim(U)=\boldsymbol{u}\in\mathcal{I}_{(n_{1},\ldots,n_{t})}\setminus\{\boldsymbol{0}\}$, we can compute the value of $W_{\boldsymbol{u}}(C)$.
 \begin{corollary}\label{Corollary 4.19}
    Let $C\subseteq\mathbb{F}_{q}^{(m,n_{1}),\ldots,(m,n_{t})}$ be a non-zero dually QMSRD code with minimum sum-rank distance $d_{sr}(C)=d$ and dimension $\dim(C)=\alpha m+\beta$, where $n=n_{1}+\cdots+n_{t}$, $\alpha=n-d$ and $1\leq \beta\leq m-1$. Then for any $\boldsymbol{u}=(u_{1},\ldots,u_{t})\in\mathcal{I}_{(n_{1},\ldots,n_{t})}\setminus\{\boldsymbol{0}\}$ with $\sum\limits_{i=1}^{t}u_{i}=u$, we have
    $$W_{\boldsymbol{u}}(C)=\sum\limits_{l=d}^{u}(q^{m(l-d)+\beta}-1)g_{l}(\boldsymbol{u}),$$
    where $g_{l}(\boldsymbol{u})=\sum\limits_{\substack{\boldsymbol{v}\leq\boldsymbol{u}\\v_{1}+\cdots+v_{t}=l}}\prod\limits_{i=1}^{t}(-1)^{u_{i}-v_{i}}q^{\binom{u_{i}-v_{i}}{2}}\binom{u_{i}}{v_{i}}_{q}\binom{n_{i}}{u_{i}}_{q}$.
 \end{corollary}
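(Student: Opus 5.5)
The plan is to obtain $W_{\boldsymbol{u}}(C)$ by summing the support distribution of Theorem \ref{Theorem 4.18} over all $U\in\mathcal{L}$ with $\dim(U)=\boldsymbol{u}$. By definition,
$$W_{\boldsymbol{u}}(C)=\sum\limits_{\substack{U\in\mathcal{L}\\ \dim(U)=\boldsymbol{u}}}W_{U}(C),$$
because each codeword $M$ with ${\rm rank}(M_{i})=u_{i}$ for all $i$ has exactly one support $\sigma(M)=U$, and $\dim(\sigma(M)_{i})={\rm rank}(M_{i})=u_{i}$. Since $\boldsymbol{u}\neq\boldsymbol{0}$, every such $U$ lies in $\mathcal{L}\setminus 0$, so Theorem \ref{Theorem 4.18} applies to each term.

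The key input is the last assertion of Theorem \ref{Theorem 4.18}: $W_{U}(C)$ depends only on $\boldsymbol{u}$. The summand is therefore constant, equal to $\sum_{l=d}^{u}(q^{m(l-d)+\beta}-1)f_{l}(\boldsymbol{u})$. It then remains to count the $U=(U_{1},\ldots,U_{t})\in\mathcal{L}$ with $\dim(U_{i})=u_{i}$. The factors $U_{i}\subseteq\mathbb{F}_{q}^{n_{i}}$ are chosen independently, so the count is $\prod_{i=1}^{t}\binom{n_{i}}{u_{i}}_{q}$. This is well defined and positive because $\boldsymbol{u}\in\mathcal{I}_{(n_{1},\ldots,n_{t})}$ guarantees $u_{i}\leq n_{i}$.

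Multiplying gives
$$W_{\boldsymbol{u}}(C)=\prod_{i=1}^{t}\binom{n_{i}}{u_{i}}_{q}\sum\limits_{l=d}^{u}(q^{m(l-d)+\beta}-1)f_{l}(\boldsymbol{u}).$$
The factor $\prod_{i}\binom{n_{i}}{u_{i}}_{q}$ does not depend on $l$ or on $\boldsymbol{v}$, so it can be moved inside the sum over $l$ and then into the product over $i$ in the definition of $f_{l}(\boldsymbol{u})$. This turns $f_{l}(\boldsymbol{u})$ into exactly $g_{l}(\boldsymbol{u})$ as defined in the statement, which proves the corollary.

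There is no real obstacle here. The only points to check are that $\boldsymbol{u}\neq\boldsymbol{0}$ is needed to apply Theorem \ref{Theorem 4.18}, and that the count of subspaces uses $u_{i}\leq n_{i}$. When $u<d$ the sum over $l$ is empty, so the formula correctly gives $W_{\boldsymbol{u}}(C)=0$, consistent with $d_{sr}(C)=d$.
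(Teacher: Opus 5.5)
Your proposal is correct and takes the same route as the paper. The paper likewise obtains $W_{\boldsymbol{u}}(C)$ by summing the support distribution of Theorem \ref{Theorem 4.18} over the $\prod_{i=1}^{t}\binom{n_{i}}{u_{i}}_{q}$ elements $U\in\mathcal{L}$ with $\dim(U)=\boldsymbol{u}$, using that $W_{U}(C)$ depends only on $\boldsymbol{u}$. Moving that constant factor into $f_{l}(\boldsymbol{u})$ then gives $g_{l}(\boldsymbol{u})$.
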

 We can get the value of $W_{r}(C)$ by summing all $W_{U}(C)$s with ${\rm rk}_{\mathcal{L}}(U)=r$.
 \begin{corollary}\label{Corollary 4.20}
   Let $C\subseteq\mathbb{F}_{q}^{(m,n_{1}),\ldots,(m,n_{t})}$ be a non-zero dually QMSRD code with minimum sum-rank distance $d_{sr}(C)=d$ and dimension $\dim(C)=\alpha m+\beta$, where $n=n_{1}+\cdots+n_{t}$, $\alpha=n-d$ and $1\leq \beta\leq m-1$. Then 
   $$W_{r}(C)=\sum\limits_{l=d}^{r}(q^{m(l-d)+\beta}-1)\sum\limits_{\substack{\boldsymbol{u}\in\mathcal{I}_{(n_{1},\ldots,n_{t})}\\u_{1}+\cdots+u_{t}=r}}g_{l}(\boldsymbol{u}),$$
   where $g_{l}(\boldsymbol{u})=\sum\limits_{\substack{\boldsymbol{v}\leq\boldsymbol{u}\\v_{1}+\cdots+v_{t}=l}}\prod\limits_{i=1}^{t}(-1)^{u_{i}-v_{i}}q^{\binom{u_{i}-v_{i}}{2}}\binom{u_{i}}{v_{i}}_{q}\binom{n_{i}}{u_{i}}_{q}$.
 \end{corollary}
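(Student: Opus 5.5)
The plan is to sum the support distribution of Theorem \ref{Theorem 4.18} over all supports of a fixed sum-rank. I partition the nonzero codewords of sum-rank weight $r$ according to their support $U=\sigma(M)$. Since $wt_{sr}(M)={\rm rk}_{\mathcal{L}}(\sigma(M))$, this gives
$$W_{r}(C)=\sum_{\substack{U\in\mathcal{L}\\ {\rm rk}_{\mathcal{L}}(U)=r}}W_{U}(C)=\sum_{\substack{\boldsymbol{u}\in\mathcal{I}_{(n_{1},\ldots,n_{t})}\\u_{1}+\cdots+u_{t}=r}}\;\sum_{\substack{U\in\mathcal{L}\\ \dim(U)=\boldsymbol{u}}}W_{U}(C).$$
The inner sum is exactly $W_{\boldsymbol{u}}(C)$, because ${\rm rank}(M_{i})=\dim({\rm rowsp}(M_{i}))$.

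Next I evaluate $W_{\boldsymbol{u}}(C)$, which is Corollary \ref{Corollary 4.19}. By Theorem \ref{Theorem 4.18}, $W_{U}(C)$ depends only on $\boldsymbol{u}=\dim(U)$. The number of $U\in\mathcal{L}$ with $\dim(U)=\boldsymbol{u}$ is $\prod_{i=1}^{t}\binom{n_{i}}{u_{i}}_{q}$, since each $U_{i}$ is a $u_{i}$-dimensional subspace of $\mathbb{F}_{q}^{n_{i}}$. Multiplying $f_{l}(\boldsymbol{u})$ by this count means attaching the factor $\binom{n_{i}}{u_{i}}_{q}$ inside the product over $i$, which turns $f_{l}$ into $g_{l}$. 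Hence
$$W_{\boldsymbol{u}}(C)=\sum_{l=d}^{r}(q^{m(l-d)+\beta}-1)g_{l}(\boldsymbol{u}).$$

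Substituting this into the outer sum over $\boldsymbol{u}$ and exchanging the two finite sums yields the claimed formula. The upper limit $l\le r$ is the same for every $\boldsymbol{u}$ in the outer sum, because $u=r$ throughout, so the exchange is legitimate.

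Two edge cases need a check. For $r=0$ the formula is an empty sum giving $0$, whereas $W_{0}(C)=1$ because of the zero codeword; so I would state the result for $r\ge 1$, or treat $r=0$ separately. For $1\le r<d$ the sum is again empty, which correctly gives $W_{r}(C)=0$ by Lemma \ref{Lemma 4.11}. No step is a real obstacle: the only care needed is the bookkeeping of the exchange of summations and the exclusion of $\boldsymbol{u}=\boldsymbol{0}$.
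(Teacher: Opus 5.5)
Your proposal is correct and is essentially the paper's argument. The paper also obtains $W_{r}(C)$ by summing $W_{U}(C)$ from Theorem~\ref{Theorem 4.18} over all $U\in\mathcal{L}$ with ${\rm rk}_{\mathcal{L}}(U)=r$, grouping by $\dim(U)=\boldsymbol{u}$ with the count $\prod_{i=1}^{t}\binom{n_{i}}{u_{i}}_{q}$ via Corollary~\ref{Corollary 4.19}, and your caveat is right that the formula needs $r\geq 1$ (Theorem~\ref{Theorem 4.18} requires $U\neq 0$) while $W_{0}(C)=1$ is handled separately, as in the paper's example.
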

 \begin{example}
   Using the corollary above, we can compute the sum-rank distribution of the dually QMSRD code in Example \ref{Example 3.4} as follows.
   \begin{equation*}
   \begin{aligned}
   & W_{0}(C)=1,\, W_{3}(C)=(2-1)\sum\limits_{\substack{\boldsymbol{u}\in\mathcal{I}_{(2,2)}\\u_{1}+u_{2}=3}}\binom{2}{u_{1}}_{2}\binom{2}{u_{2}}_{2}=6,\\ 
   & W_{4}(C)=(2-1)\left[\sum\limits_{\substack{\boldsymbol{v}\leq(2,2)\\v_{1}+v_{2}=3}}(-1)\binom{2}{v_{1}}_{2}\binom{2}{v_{2}}_{2}\right]+(2^3 -1)=1.
   \end{aligned}
   \end{equation*}
   After verification, the results agree with the fact.
 \end{example}
 Now, we arrive at a non-existence criterion for dually QMSRD codes, which can be seen as an application of the weight distribution of dually QMSRD codes. 
 \begin{proposition}\label{Proposition 20}
   Suppose there exists a dually QMSRD code $C\subseteq\mathbb{F}_{q}^{(m,n_{1}),\ldots,(m,n_{t})}$ with minimum sum-rank distance $d_{sr}(C)=d$ and dimension $\dim(C)=\alpha m+\beta$, where $n=n_{1}+\cdots+n_{t}$, $\alpha=n-d$ and $1\leq \beta\leq m-1$. Then for all $\boldsymbol{u}=(u_{1},\ldots,u_{t})\in\mathcal{I}_{(n_{1},\ldots,n_{t})}\setminus\{\boldsymbol{0}\}$ with $\sum\limits_{i=1}^{t}u_{i}=u$,
   $$\omega(\boldsymbol{u})=\sum\limits_{l=d}^{u}(q^{m(l-d)+\beta}-1)f_{l}(\boldsymbol{u})\geq 0,$$ 
   where $f_{l}(\boldsymbol{u})=\sum\limits_{\substack{\boldsymbol{v}\leq\boldsymbol{u}\\v_{1}+\cdots+v_{t}=l}}
   \prod\limits_{i=1}^{t}(-1)^{u_{i}-v_{i}}q^{\binom{u_{i}-v_{i}}{2}}\binom{u_{i}}{v_{i}}_{q}$.
 \end{proposition}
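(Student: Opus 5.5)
The plan is to read the statement off Theorem \ref{Theorem 4.18}: the quantity $\omega(\boldsymbol{u})$ is exactly the number of codewords of $C$ with a prescribed support, so it must be a nonnegative integer.

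Concretely, fix $\boldsymbol{u}=(u_{1},\ldots,u_{t})\in\mathcal{I}_{(n_{1},\ldots,n_{t})}\setminus\{\boldsymbol{0}\}$. Since $0\leq u_{i}\leq n_{i}$ for every $i\in[t]$, we can pick a subspace $U_{i}\subseteq\mathbb{F}_{q}^{n_{i}}$ with $\dim(U_{i})=u_{i}$. Put $U=(U_{1},\ldots,U_{t})\in\mathcal{L}$. Then $\dim(U)=\boldsymbol{u}$, ${\rm rk}_{\mathcal{L}}(U)=u$, and $U\neq 0$ because $\boldsymbol{u}\neq\boldsymbol{0}$.

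Because $C$ is a dually QMSRD code with the stated parameters, Theorem \ref{Theorem 4.18} applies to this $U$ and gives $W_{U}(C)=\omega(\boldsymbol{u})$, where the functions $f_{l}(\boldsymbol{u})$ are the same as in the present statement. On the other hand, $W_{U}(C)=|\{M\in C:\sigma(M)=U\}|$ is the cardinality of a set, so $W_{U}(C)\geq 0$. Hence $\omega(\boldsymbol{u})\geq 0$. If $u<d$, the sum over $l$ is empty and $\omega(\boldsymbol{u})=0$, which agrees with Lemma \ref{Lemma 4.16}.

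No step here is a real obstacle. The only points to check are that the nonemptiness condition $U\neq 0$ in Theorem \ref{Theorem 4.18} is met, which holds since $\boldsymbol{u}\neq\boldsymbol{0}$, and that a subspace of every prescribed dimension $u_{i}\leq n_{i}$ exists, which is immediate. Read contrapositively, the statement is a non-existence criterion: if $\omega(\boldsymbol{u})<0$ for some admissible $\boldsymbol{u}$, then no dually QMSRD code with these parameters exists.
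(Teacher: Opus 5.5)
Your proposal is correct and takes the same approach as the paper. The paper also notes that $W_{U}(C)\geq 0$ for every $U\in\mathcal{L}\setminus 0$ and reads off $\omega(\boldsymbol{u})=W_{U}(C)$ from Theorem \ref{Theorem 4.18}; your explicit choice of a $U$ with $\dim(U)=\boldsymbol{u}$ spells out the paper's remark that ``it is easy to get the result''.
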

Due to the existence of $C$, $W_{U}(C)\geq 0$ for any $U\in\mathcal{L}\setminus 0$. Then by Theorem \ref{Theorem 4.18}, it is easy to get the result. 
Using this proposition, we can prove the non-existence of dually QMSRD codes for some parameters. The following is an example.
\begin{example}
  We find that there exists no dually QMSRD code with minimum sum-rank distance $7$ and dimension $5$ in $\mathbb{F}_{2}^{(3,3),(3,3),(3,2)}$. 
  Suppose that such code exists, we let $\boldsymbol{u}=(3,3,2)$. However, $\omega(\boldsymbol{u})=(2^{3+2}-1)+(2^{2}-1)[(-1)\binom{3}{2}_{2}\cdot 2+(-1)\binom{2}{1}_{2}]=-20$, violating the Proposition \ref{Proposition 20}.
\end{example}

\subsection{Generalized Weights of Dually QMSRD Codes}
In this subsection, we consider the generalized weights of dually QMSRD codes, mentioned in Section 2. Firstly, we review some basic properties of generalized sum-rank weights.
\begin{lemma}(\!\cite[Proposition VI.6]{CGLGMS})\label{Lemma 4.21}
Let $C\subseteq \mathbb{F}_{q}^{(m_{1},n_{1}),\ldots,(m_{t},n_{t})}$ be a non-zero linear sum-rank-metric code, then
\begin{itemize}
  \item[(1)] $d_{1}(C)=d_{sr}(C)$.
  \item[(2)] $d_{r}(C)\leq d_{s}(C)$ for $1\leq r\leq s\leq\dim(C)$.
  \item[(3)] $d_{\dim(C)}(C)\leq n$, $n=n_{1}+\cdots+n_{t}$.
  \item[(4)] $d_{r+m_{1}n_{1}+\cdots+m_{j-1}n_{j-1}+\delta m_{j}}(C)\geq d_{r}(C)+n_{1}+\cdots+n_{j-1}+\delta$ for $j\in[t]$, $r\in[\dim(C)-(m_{1}n_{1}+\cdots+m_{j-1}n_{j-1}+\delta m_{j})]$ and $0\leq\delta\leq n_{j}-1$.
\end{itemize}
\end{lemma}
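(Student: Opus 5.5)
The plan is to work directly from the definition of $d_r(C)$. I will use the structure of optimal anticodes in the sum-rank metric, as in the proof of Proposition \ref{Proposition 3.5}. A product $\mathcal{A}=\mathcal{A}_1\times\cdots\times\mathcal{A}_t$ of optimal rank-metric anticodes has ${\rm maxsrk}(\mathcal{A})=\sum_i a_i$ and $\dim(\mathcal{A})=\sum_i m_ia_i$, where $a_i={\rm maxrk}(\mathcal{A}_i)$. One valid choice for each factor is the space of all $m_i\times n_i$ matrices whose row space lies in a fixed $a_i$-dimensional subspace $U_i\subseteq\mathbb{F}_q^{n_i}$. I take this description as known from \cite{CGLGMS}. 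I also assume that every optimal rank anticode is of this row-space type, or of the analogous column-space type when $m_i=n_i$; that characterization is imported from the literature rather than proved here.

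Items (1)--(3) are short. For (1), take $M\in C$ of weight $d_{sr}(C)$. As in Proposition \ref{Proposition 3.5}, choose for each $i$ an optimal anticode $\mathcal{A}_i\ni M_i$ with ${\rm maxrk}(\mathcal{A}_i)={\rm rank}(M_i)$. This gives $d_1(C)\le d_{sr}(C)$. Conversely, if $\dim(C\cap\mathcal{A})\ge1$, any nonzero $M\in C\cap\mathcal{A}$ satisfies $d_{sr}(C)\le wt_{sr}(M)\le{\rm maxsrk}(\mathcal{A})$. For (2), note that any $\mathcal{A}$ admissible for $d_s$ satisfies $\dim(C\cap\mathcal{A})\ge s\ge r$, so it is also admissible for $d_r$. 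For (3), take $\mathcal{A}$ to be the whole ambient space, which is a product of optimal anticodes with maxsrk $n$ and contains $C$.

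The main step is (4). Set $L=n_1+\cdots+n_{j-1}+\delta$ and $k=m_1n_1+\cdots+m_{j-1}n_{j-1}+\delta m_j$. Let $\mathcal{A}$ attain $d_{r+k}(C)$, with factor ranks $a_i\le n_i$. I want a sub-anticode $\mathcal{A}'\subseteq\mathcal{A}$ of the same product type with ${\rm maxsrk}(\mathcal{A}')={\rm maxsrk}(\mathcal{A})-L$ and $\dim(\mathcal{A})-\dim(\mathcal{A}')\le k$. I will get it by lowering the ranks $a_i$ to $a_i'\le a_i$ with $\sum_i(a_i-a_i')=L$. In the row-space model this means passing to a subspace $U_i'\subseteq U_i$; in the column-space model it means shrinking the column space. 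Either way the factors stay optimal anticodes.

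This is feasible because of a rearrangement observation. Since $m_1\ge\cdots\ge m_t$, the maximum of $\sum_i m_ib_i$ over $0\le b_i\le n_i$ with $\sum_i b_i=L$ is exactly $k$, achieved by filling the first blocks greedily. Two consequences follow:
\begin{itemize}
\item Any removal pattern $b_i=a_i-a_i'$ costs $\sum_i m_ib_i\le k$ in dimension.
\item If $\sum_i a_i\le L$, then $\dim(\mathcal{A})\le k<r+k\le\dim(C\cap\mathcal{A})$, a contradiction. Hence $\sum_i a_i>L$, so $L$ units can actually be removed.
\end{itemize}
Then $\dim(C\cap\mathcal{A}')\ge\dim(C\cap\mathcal{A})-(\dim\mathcal{A}-\dim\mathcal{A}')\ge r+k-k=r$. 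This yields $d_r(C)\le{\rm maxsrk}(\mathcal{A}')=d_{r+k}(C)-L$, which is the claim.

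I expect the main obstacle to be justifying that the shrunken factors are still optimal anticodes with exactly the predicted dimension. This depends on the imported characterization of optimal anticodes, not on anything proved in this paper. I would state it explicitly as the cited input.
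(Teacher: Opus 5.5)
Your proof is correct. The paper does not prove this lemma; it only quotes it from \cite[Proposition VI.6]{CGLGMS}, so there is no in-paper argument to compare with, and your proposal supplies a derivation the paper omits.

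Your bookkeeping of the external input is right, and it helps to say exactly which part needs what.
\begin{itemize}
\item Items (1)--(3) need only the elementary fact that $\{M\in\mathbb{F}_{q}^{(m_i,n_i)} : {\rm rowsp}(M)\subseteq U\}$ is an optimal anticode of dimension $m_i\dim U$ and maximum rank $\dim U$. For (3) you also use that the whole ambient space is a product of such anticodes.
\item Item (4) genuinely needs the classification of optimal rank-metric anticodes: row-space type, plus the transposed type when $m_i=n_i$. This is due to de Seguins Pazzis and is used in this form by Ravagnani and in \cite{CGLGMS}. It is needed because the minimizer for $d_{r+k}(C)$ is an arbitrary product of optimal anticodes, and you must shrink each factor inside itself.
\end{itemize}

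The remaining steps of (4) check out:
\begin{itemize}
\item The rearrangement bound $\max\{\sum_i m_ib_i : 0\le b_i\le n_i,\ \sum_i b_i=L\}=k$ correctly uses $m_1\ge\cdots\ge m_t$ and $L\le n_1+\cdots+n_j-1$.
\item The contradiction $\dim(\mathcal{A})\le k<r+k\le\dim(C\cap\mathcal{A})$ forces $\sum_i a_i>L$, so a removal pattern with $b_i\le a_i$ exists.
\item Factors with $a_i'=0$ become $\{\boldsymbol{0}\}$, which is still an optimal anticode.
\item The inequality $\dim(C\cap\mathcal{A}')\ge\dim(C\cap\mathcal{A})+\dim(\mathcal{A}')-\dim(\mathcal{A})$ holds because both subspaces lie in $\mathcal{A}$.
\end{itemize}
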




Then, we can connect the generalized weights of a linear sum-rank-metric code with the minimum sum-rank distance of its dual code.
\begin{lemma}\label{Lemma 4.23}
  Let $C\subseteq\mathbb{F}_{q}^{(m,n_{1}),\ldots,(m,n_{t})}$ be a linear sum-rank-metric code with minimum sum-rank distance $d_{sr}(C)=d$ and dimension $\dim(C)=\alpha m+\beta$, where $n=n_{1}+\cdots+n_{t}$, $0\leq\alpha\leq n-d$ and $0\leq \beta\leq m-1$. If $\dim(C)<m$, then $d_{sr}(C^{\bot})=1$. If $\dim(C)\geq m$, then 
  {\scriptsize $$d_{sr}(C^{\bot})=\begin{cases}
                       \alpha+1, & \mbox{if } n+1-d_{\dim(C)+1-\alpha m}(C)=\alpha,\\
                       \min\{1\leq r\leq\alpha\,:\,n+1-d_{\dim(C)+1-rm}(C)>r\}, & \mbox{otherwise}.
                     \end{cases}$$}
\end{lemma}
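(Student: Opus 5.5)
The plan is to translate the condition ``$d_{sr}(C^{\bot})\geq r+1$'' into a statement about the intersections of $C$ with the optimal anticodes of the form $C(U)$, and then read off the generalized weights of $C$. The case $\dim(C)<m$ is immediate: then $\dim(C^{\bot})=mn-\dim(C)>m(n-1)$, so the Singleton bound (Theorem \ref{Theorem 2.5} with all $m_i=m$) forces $d_{sr}(C^{\bot})=1$. From now on assume $\dim(C)\geq m$, so $\alpha\geq 1$.

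\textbf{Step 1 (a priori bound).} For every $1\leq r\leq \alpha$ I will show $n+1-d_{\dim(C)+1-rm}(C)\geq r$. The index $k=\dim(C)+1-rm$ satisfies $k\geq\beta+1\geq 1$. Pick any $U\in\mathcal{L}$ with ${\rm rk}_{\mathcal{L}}(U)=n-r+1$ and let $\mathcal{A}_U$ be the set of tuples whose $i$-th block has row space in $U_i$. This is a product of optimal rank-metric anticodes, of dimension $m(n-r+1)$, and $C\cap\mathcal{A}_U=C(U)$. By dimension counting,
\[
\dim C(U)\geq \dim(C)+m(n-r+1)-mn=\dim(C)-rm+m\geq k.
\]
Hence $d_k(C)\leq n-r+1$, which is the claimed bound.

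\textbf{Step 2 (key equivalence).} For $1\leq r\leq\alpha$ I claim
\[
d_{sr}(C^{\bot})\geq r+1 \iff d_{\dim(C)+1-rm}(C)\geq n+1-r,
\]
that is, iff $n+1-d_{\dim(C)+1-rm}(C)=r$ once Step 1 is used.

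By Lemma \ref{Lemma 4.11} applied to $C^{\bot}$, the condition $d_{sr}(C^{\bot})\geq r+1$ holds iff $|C^{\bot}(V)|=1$ for all $V$ with ${\rm rk}_{\mathcal{L}}(V)=r$. Writing $V=U^{\bot}$ with ${\rm rk}_{\mathcal{L}}(U)=n-r$ and applying Lemma \ref{Lemma 4.12}, this is equivalent to $\dim C(U)=\dim(C)-rm$ for all such $U$. Since Lemma \ref{Lemma 4.12} always gives $\dim C(U)\geq\dim(C)-rm$, the condition says that no anticode $\mathcal{A}_U$ of rank $n-r$ meets $C$ in dimension $\geq\dim(C)-rm+1$.

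It remains to identify this with $d_{\dim(C)+1-rm}(C)>n-r$. This means checking that every optimal anticode $\mathcal{A}=\mathcal{A}_1\times\cdots\times\mathcal{A}_t$ relevant to the definition of $d_k$ can be taken of the form $\mathcal{A}_U$.
\begin{itemize}
\item When $m>n_i$, optimal anticodes in $\mathbb{F}_q^{(m,n_i)}$ are exactly the matrices with row space in a fixed subspace.
\item When $m=n_i$, they may also be of ``column-space'' type. I will handle this by transposing that block. Lemmas \ref{Lemma 4.11} and \ref{Lemma 4.12} are symmetric under transposition, and so are the trace product and the sum-rank weight, so the same counting applies with column spaces.
\end{itemize}
I expect this identification of optimal anticodes, especially the square-block case, to be the main technical obstacle. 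If needed I will instead argue directly with $\dim(C\cap\mathcal{A})$ and the anticode duality from \cite{CGLGMS}.

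\textbf{Step 3 (conclusion).} By monotonicity of generalized weights (Lemma \ref{Lemma 4.21}(2)), the condition in Step 2 is monotone in $r$: if it holds for $r$, it holds for all smaller $r$. Consequently, if it fails for some $r\leq\alpha$, then $d_{sr}(C^{\bot})$ equals the least $r$ for which $n+1-d_{\dim(C)+1-rm}(C)>r$, which is the second case of the statement.

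If instead it holds up to $r=\alpha$, and by monotonicity this is exactly the condition $n+1-d_{\dim(C)+1-\alpha m}(C)=\alpha$, then $d_{sr}(C^{\bot})\geq\alpha+1$. On the other side, $\dim(C^{\bot})=m(n-\alpha)-\beta>m(n-\alpha-1)$, so the Singleton bound gives $d_{sr}(C^{\bot})\leq\alpha+1$. Therefore $d_{sr}(C^{\bot})=\alpha+1$, which is the first case of the statement.
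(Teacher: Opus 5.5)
Your proof is correct, but it takes a different route from the paper.

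\textbf{The paper's route.} For $\dim(C)\geq m$ the paper quotes the Wei-type duality \cite[Theorem VI.9]{CGLGMS}. Set $a_r=n+1-d_{\dim(C)+1-rm}(C)$. Lemma \ref{Lemma 4.21}(3),(4) give strictly increasing chains $d_1(C^{\bot})<d_{1+m}(C^{\bot})<\cdots\leq n$ and $1\leq a_1<\cdots<a_\alpha$. The cited theorem says these two sets partition $[n]$, so $d_{sr}(C^{\bot})$ is the least element of $[n]\setminus\{a_1,\ldots,a_\alpha\}$.

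\textbf{Your route.} For each $1\leq r\leq\alpha$ you prove the single equivalence $d_{sr}(C^{\bot})\geq r+1\iff d_{\dim(C)+1-rm}(C)\geq n+1-r$ directly from Lemmas \ref{Lemma 4.11} and \ref{Lemma 4.12}. This is exactly the fragment of Wei duality the lemma needs, and your Step 1 plays the role of $a_r\geq r$.
\begin{itemize}
\item[$\Leftarrow$:] This direction needs only that each $\mathcal{A}_U$ is an optimal anticode.
\item[$\Rightarrow$:] This direction needs the classification of optimal rank-metric anticodes: row-space type if $m>n_i$, and row- or column-space type if $m=n_i$. That result is not in the paper, so cite it explicitly (Ravagnani; de Seguins Pazzis). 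Also note that any $\mathcal{A}_U$ with ${\rm rk}_{\mathcal{L}}(U)<n-r$ sits inside some $\mathcal{A}_{U'}$ with ${\rm rk}_{\mathcal{L}}(U')=n-r$.
\end{itemize}
Your transposition device for square blocks is sound. Transposing an $m\times m$ block preserves rank and the trace product, so it commutes with taking duals.

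\textbf{Comparison.} Your argument is more self-contained and shows why each case of the formula occurs. The paper's is shorter but rests entirely on a deeper external theorem.

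\textbf{A small slip in Step 3.} Lemma \ref{Lemma 4.21}(2) only gives $d_{\dim(C)+1-r'm}(C)\geq n+1-r$ for $r'<r$, not $\geq n+1-r'$. The monotonicity you need follows instead from the left-hand side $d_{sr}(C^{\bot})\geq r+1$ of your equivalence, or from Lemma \ref{Lemma 4.21}(4).
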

\begin{proof}
  If $\dim(C)<m$, then by the Singleton bound, $d_{sr}(C^{\bot})\leq n-\left\lceil\frac{mn-\dim(C)}{m}\right\rceil+1=1$, giving $d_{sr}(C^{\bot})=1$.
  Now we assume that $\dim(C)\geq m$. According to Lemma \ref{Lemma 4.21} (3), (4), 
  {\small \begin{equation*}
   \begin{aligned}
   & d_{sr}(C^{\bot})=d_{1}(C^{\bot})<d_{1+m}(C^{\bot})<\cdots<d_{1+(n-\alpha-1)m}(C^{\bot})\leq n,\\
   & n+1-d_{\dim(C)+1-m}(C)<n+1-d_{\dim(C)+1-2m}(C)<\cdots<n+1-d_{\dim(C)+1-\alpha m}(C).
   \end{aligned}
   \end{equation*}}\\
   Let {\scriptsize \begin{equation*}
   \begin{aligned}
  & \mathcal{W}_{1}(C^{\bot})  =\{d_{1}(C^{\bot}),\,d_{1+m}(C^{\bot}),\ldots,d_{1+(n-\alpha-1)m}(C^{\bot})\}, \\
  & \overline{\mathcal{W}}_{1+\dim(C)}(C)  =\{n+1-d_{\dim(C)+1-m}(C),\,n+1-d_{\dim(C)+1-2m}(C),\ldots,n+1-d_{\dim(C)+1-\alpha m}(C)\}. 
   \end{aligned}
   \end{equation*} }\\
   By \cite[Theorem VI.9]{CGLGMS}, $\mathcal{W}_{1}(C^{\bot})=[n]\setminus \overline{\mathcal{W}}_{1+\dim(C)}(C)$. 
   Hence, we can get the conclusion that 
   {\scriptsize $$d_{sr}(C^{\bot})=\begin{cases}
                       \alpha+1, & \mbox{if } n+1-d_{\dim(C)+1-\alpha m}(C)=\alpha,\\
                       \min\{1\leq r\leq\alpha\,:\,n+1-d_{\dim(C)+1-rm}(C)>r\}, & \mbox{otherwise}.
                     \end{cases}$$}
\end{proof}
Now, we are ready to compute the generalized weights of dually QMSRD codes.
\begin{theorem}\label{Theorem 4.27}
  Let $C\subseteq\mathbb{F}_{q}^{(m,n_{1}),\ldots,(m,n_{t})}$ be a linear sum-rank-metric code with minimum sum-rank distance $d_{sr}(C)=d$ and dimension $\dim(C)=\alpha m+\beta$, where $n=n_{1}+\cdots+n_{t}$, $0\leq\alpha\leq n-d$ and $0\leq \beta\leq m-1$.
  \begin{itemize}
    \item[(1)] If $\dim(C)<m$, then $C$ is a dually QMSRD code if and only if $C$ is a QMSRD code which is equivalent to that $d_{1}(C)=n$.
    \item[(2)] If $\dim(C)\geq m$. Then $C$ is a dually QMSRD code if and only if $d_{1}(C)=n-\alpha$ and $d_{\beta+1}(C)=n+1-\alpha$.
    In addition, if $C$ is a dually QMSRD code, the generalized sum-rank weights of $C$ is given as below:
  \begin{equation*}
   \begin{aligned}
   & d_{1}(C)=\cdots=d_{\beta}(C)=n-\alpha,\\
   & d_{\beta+1+sm}(C)=\cdots=d_{\beta+(s+1)m}(C)=n+1+s-\alpha, \text{ for } 0\leq s\leq\alpha-2,\\
   & d_{\beta+1+(\alpha-1)m}(C)=\cdots=d_{\dim(C)}=n.  
   \end{aligned}
   \end{equation*}
  \end{itemize}
\end{theorem}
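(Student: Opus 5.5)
Part (1) is a short reduction. If $\dim(C)=\beta<m$ then $\alpha=0$. By Lemma \ref{Lemma 4.23}, $d_{sr}(C^{\bot})=1$. Moreover $\dim(C^{\bot})=mn-\beta$ is not divisible by $m$, so $C^{\bot}$ always attains $d_{sr}=n-\lceil (mn-\beta)/m\rceil+1=1$ and is QMSRD. Hence $C$ is dually QMSRD if and only if $C$ is QMSRD. With $\dim(C)=\beta$, QMSRD means $d_{sr}(C)=n-\lceil\beta/m\rceil+1=n$, and $d_{1}(C)=d_{sr}(C)$ by Lemma \ref{Lemma 4.21}(1).

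For part (2) I will work with the two chains from the proof of Lemma \ref{Lemma 4.23}, in which the duality \cite[Theorem VI.9]{CGLGMS} gives a partition of $[n]$. The index sets are chosen differently according to the residue class. Fix $i\in\{1,\ldots,m\}$ and consider the chain $d_{i}(C)<d_{i+m}(C)<\cdots$, which is strictly increasing by Lemma \ref{Lemma 4.21}(4). Its length is $\alpha+1$ if $i\le\beta$ and $\alpha$ if $i>\beta$. Its complement in $[n]$ is the set $\{n+1-d_{j}(C^{\bot})\}$ over the indices $j\equiv \dim(C)+1-i \pmod m$.

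The backbone of the argument is Lemma \ref{Lemma 4.21}(3),(4). The top entry of each chain is at most $n$, and consecutive entries increase by at least $1$. Combining these with $d_{1}(C)\ge d_{sr}(C)$ gives a sandwich for the case $i\le\beta$: a chain of $\alpha+1$ strictly increasing integers in $[n-\alpha,n]$ must be exactly $n-\alpha,\ldots,n$. Similarly, for $i>\beta$, a chain of $\alpha$ integers starting at least $n+1-\alpha$ is forced to be $n+1-\alpha,\ldots,n$.

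For the forward direction of (2), assume $C$ is dually QMSRD. QMSRD gives $d=n-\alpha$, so $d_{1}(C)=n-\alpha$. By monotonicity (Lemma \ref{Lemma 4.21}(2)), $d_{r}(C)\ge n-\alpha$ for all $r$. The lower bounds at the start of each chain then follow from the chain inequalities together with $d_{r}\ge d_{1}$, so the sandwich applies once I show $d_{\beta+1}(C)\ge n+1-\alpha$.

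To get this, I use that $C^{\bot}$ is QMSRD with $d_{sr}(C^{\bot})=\alpha+1$, which holds by Proposition \ref{Proposition 4.6}. Lemma \ref{Lemma 4.23} then forces $n+1-d_{r m+\beta+1-m}(C)\le r$ for all $r\le\alpha$, together with the boundary condition. Taking $r=1$ is wrong in general, so I will instead use the chain starting at index $\beta+1$. Its entries are $d_{\beta+1+sm}(C)$ for $s=0,\ldots,\alpha-1$, and it is complementary to the dual chain starting at $d_{1}(C^{\bot})=\alpha+1$. The dual chain consists of $n-\alpha$ increasing values all at least $\alpha+1$, so it equals $\{\alpha+1,\ldots,n\}$. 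Applying the map $x\mapsto n+1-x$ to this complement gives $\{1,\ldots,n-\alpha\}$. Therefore the $C$-chain is $\{n+1-\alpha,\ldots,n\}$, and in particular $d_{\beta+1}(C)=n+1-\alpha$.

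The other residue classes $i\le\beta$ are handled by the sandwich once $d_{i}(C)=n-\alpha$. This follows from $n-\alpha=d_{1}\le d_{i}\le d_{\beta}<d_{\beta+1}=n+1-\alpha$. Here I use Lemma \ref{Lemma 4.21}(2) and the fact that $d_{\beta}<d_{\beta+1}$ holds because $d_{\beta}\le d_{\beta+m}-1$ and the chain values are known. For $i>\beta+1$ the values are pinned between $d_{\beta+1}$ and $d_{\beta+m}$, which completes the displayed list of weights.

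For the converse of (2), assume $d_{1}(C)=n-\alpha$ and $d_{\beta+1}(C)=n+1-\alpha$. Then $C$ is QMSRD with $\beta\ge1$; I must also rule out $\beta=0$, which is impossible since then $d_{1}=d_{\beta+1}$. The chain starting at $\beta+1$ has $\alpha$ entries beginning at $n+1-\alpha$, so by the sandwich it is $\{n+1-\alpha,\ldots,n\}$. Its complement $\{1,\ldots,n-\alpha\}$ is the dual chain through $d_{1}(C^{\bot})$ transformed by $x\mapsto n+1-x$. Hence $d_{1}(C^{\bot})=\alpha+1$, so $d_{sr}(C)+d_{sr}(C^{\bot})=n+1$, and Proposition \ref{Proposition 4.6} shows $C$ is dually QMSRD.

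The main obstacle is the index bookkeeping in the duality theorem \cite[Theorem VI.9]{CGLGMS}: I need to identify exactly which residue class of $C$ pairs with $d_{1}(C^{\bot})$, and I expect it is the class $\dim(C)+1-m\cdot(\cdot)\equiv\beta+1$. I will check this against Lemma \ref{Lemma 4.23} and Example \ref{Example 3.4}.
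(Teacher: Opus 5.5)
Your proposal is correct and follows essentially the paper's route. Both arguments rest on the duality $\mathcal{W}_{1}(C^{\bot})=[n]\setminus\overline{\mathcal{W}}_{1+\dim(C)}(C)$ from \cite[Theorem VI.9]{CGLGMS}, which the paper packages as Lemma~\ref{Lemma 4.23} to get $d_{sr}(C^{\bot})=\alpha+1\iff d_{\beta+1}(C)=n+1-\alpha$ in one step, whereas you unpack it separately in each direction. Both then use Lemma~\ref{Lemma 4.21}(2)--(4) to fill in the remaining weights, and your residue-class ``sandwich'' is a slightly tidier version of the paper's block-by-block bounds.

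One correction to your index bookkeeping: class $j$ of $C^{\bot}$ is paired with class $j+\dim(C)$ of $C$, so the class complementary to class $i$ of $C$ is $j\equiv i-\dim(C)\pmod m$, not $\dim(C)+1-i$. For $i=\beta+1$ this gives $j\equiv 1$, which is exactly the pairing you actually use.
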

\begin{proof}
  \begin{itemize}
    \item[(1)] If $\dim(C)<m$, $$d_{sr}(C^{\bot})\leq n-\left\lceil\frac{mn-\dim(C)}{m}\right\rceil+1=1,$$ 
    
    giving that $C^{\bot}$ is a QMSRD code. Thus, $C$ is a dually QMSRD code if and only if $C$ is a QMSRD code, which implies that $$d_{1}(C)=d=n-\left\lceil\frac{\dim(C)}{m}\right\rceil+1=n.$$
    \item[(2)] If $\dim(C)\geq m$, $C$ is a dually QMSRD code if and only if 
    {\small $$d_{1}(C)=d=n-\left\lceil\frac{\dim(C)}{m}\right\rceil+1=n-\alpha \text{ and } d_{sr}(C^{\bot})=n-\left\lceil\frac{mn-\dim(C)}{m}\right\rceil+1=\alpha+1.$$\\}
    By Lemma \ref{Lemma 4.23}, $d_{sr}(C^{\bot})=\alpha+1$ is equivalent to $n+1-d_{\dim(C)+1-\alpha m}(C)=n+1-d_{\beta+1}(C)=\alpha$, giving $d_{\beta+1}(C)=n+1-\alpha$. Since 
    \begin{equation*}
   {\scriptsize\begin{aligned}
    & n+1-d_{\dim(C)+1-\alpha m}(C)=\alpha \text{ and }\\ & n+1-d_{\dim(C)+1-m}(C)<n+1-d_{\dim(C)+1-2m}(C)< \cdots <n+1-d_{\dim(C)+1-\alpha m}(C),
    \end{aligned}}
   \end{equation*}
   we have $n+1-d_{\dim(C)+1-rm}(C)=r$. Hence, for $0\leq s\leq\alpha-1$,
   $$d_{\beta+1+sm}(C)=d_{\dim(C)+1-(\alpha-s)m}(C)=n+1-(\alpha-s)=n+1+s-\alpha.$$
   In addition, by Lemma \ref{Lemma 4.21} (2) and (3), $n=d_{\beta+1+(\alpha-1)m}(C)\leq d_{\dim(C)}(C)\leq n$, so  
   {\scriptsize$$d_{\beta+1+(\alpha-1)m}(C)=d_{\beta+2+(\alpha-1)m}(C)=\cdots=d_{\beta+(m-1)+(\alpha-1)m}(C)=d_{\dim(C)}(C)=n.$$}\\
   Also, by Lemma \ref{Lemma 4.21} (2) and (4), $$n+1+s-\alpha=d_{\beta+1+sm}(C)\leq d_{\beta+(s+1)m}(C)\leq d_{\dim(C)}-(\alpha-s-1)=n-\alpha+s+1.$$
   Then $$d_{\beta+1+sm}(C)=\cdots=d_{\beta+(s+1)m}(C)=n-\alpha+s+1.$$
   At last, by Lemma \ref{Lemma 4.21} (2) and (4), $$n-\alpha=d_{1}(C)\leq d_{\beta}(C)\leq d_{\beta+m}(C)-1=n-\alpha.$$
   Hence, $$d_{1}(C)=\cdots=d_{\beta}(C)=n-\alpha.$$
  \end{itemize}
\end{proof}
\begin{example}
  Using the theorem above, we compute the generalized sum-rank weights of the dual code of $C$ in Example \ref{Example 3.4}:
  $$d_{1}(C)=2,\quad d_{2}(C)=d_{3}(C)=3,\quad d_{4}(C)=d_{5}(C)=4.$$
\end{example}

\section{Conclusion}
Sum-rank-metric codes are an important kind of codes. Besides many practical applications, the sum-rank metric is also a natural generalization of the Hamming metric and rank metric, thus providing a common theoretical framework for these two well-studied metrics. 
In this paper, we concentrate on a class of good sum-rank-metric codes, QMSRD codes, which have the largest minimum sum-rank distance by the Singleton bound, but are not MSRD codes.
We first study the properties of QMSRD codes. Then we consider their dual codes and based on the fact that not every QMSRD code has a QMSRD dual code, we define dually QMSRD codes. Next, we give the support, rank-list as well as sum-rank distributions of dually QMSRD codes and end up this paper with the generalized sum-rank weights of dually QMSRD codes.

Apart from these contributions, here are several challenging open problems worth considering: 
\begin{itemize}
  \item Construct more QMSRD codes especially dually QMSRD codes.
  \item Compute the density of dually QMSRD codes among QMSRD codes for the fixed field, matrix sizes, block length and dimension.
  \item Derive more properties of QMSRD codes and dually QMSRD codes in $\mathbb{F}_{q}^{(m_{1},n_{1}),\ldots,(m_{t},n_{t})}$ for arbitrary $m_{i}$s.
\end{itemize}

\newpage
\section*{Acknowledgments}
This research is supported by the National Key Research and Development Program of China (Grant No. 2022YFA1005000), the National Natural Science Foundation of China (Grant No. 62371259), the Fundamental Research Funds for the Central Universities of China (Nankai University), and the Nankai Zhide Foundation.

\end{document}